\documentclass[twocolumn,aps,pra,notitlepage,superscriptaddress,longbibliography]{revtex4-2}
\renewcommand{\selectlanguage}[1]{}
\usepackage[breaklinks=true]{hyperref}
\hypersetup{
    colorlinks,
    linkcolor={red!50!black},
    citecolor={blue!50!black},
    urlcolor={blue!80!black}
}
\usepackage{amsmath,amssymb,amsthm,mathtools,mathrsfs}
\usepackage{graphicx}
\usepackage[table,dvipsnames]{xcolor}
\usepackage{tikz}
\usetikzlibrary{math,calc}
\usetikzlibrary{arrows.meta}
\usetikzlibrary{decorations.pathmorphing,shapes}
\usepackage{quantikz}
\usepackage[all]{xy} 
\usepackage{enumerate}
\usepackage[normalem]{ulem}
\usepackage{graphicx}
\usepackage{epsfig}
\usepackage{pgfplots}
\pgfplotsset{compat=1.14}
\usepackage{float}
\usepackage{adjustbox}
\usepackage{calligra}
\usepackage{dsfont}

\newcounter{sarrow}

\theoremstyle{plain}
\newtheorem{theorem}[equation]{Theorem}
\newtheorem{lemma}[equation]{Lemma}
\newtheorem{proposition}[equation]{Proposition}
\newtheorem{corollary}[equation]{Corollary}
\theoremstyle{definition}
\newtheorem{definition}[equation]{Definition}

\newtheorem{example}[equation]{Example} 



\newcommand{\id}{\mathrm{id}}
\newcommand{\matr}{\mathbb{M}}
\newcommand{\Tr}{\operatorname{Tr}}

\DeclareMathOperator{\SWAP}{\tt SWAP} 
\DeclareMathOperator{\Ad}{\mathrm{Ad}} 
\DeclareFontFamily{OT1}{pzc}{}
\DeclareFontShape{OT1}{pzc}{m}{it}{ <-> s*[1.2] pzcmi7t }{}
\DeclareMathAlphabet{\mathpzc}{OT1}{pzc}{m}{it}


\DeclareMathAlphabet{\mathcalligra}{T1}{calligra}{m}{n}
\DeclareFontShape{T1}{calligra}{m}{n}{<->s*[2.2]callig15}{}
\newcommand{\scripty}[1]{\ensuremath{\mathcalligra{#1}}}

\newcommand{\states}{{\!\scriptscriptstyle\scripty{S}}\;\,}
\newcommand{\unitary}{{\!\scriptscriptstyle\scripty{U}}\;\,}

\def\R{{{\mathbb R}}}
\def\C{{{\mathbb C}}}

\def\CP{{{\mathbb C}{\mathbb P}}}

\def\N{{{\mathbb N}}}
\def\Z{{{\mathbb Z}}}

\newcommand\define[1]{\emph{\textbf{#1}}}

\pgfdeclareradialshading[tikz@ball]{ball}{\pgfqpoint{0bp}{0bp}}{%
 color(0bp)=(tikz@ball!0!white);
 color(7bp)=(tikz@ball!0!white);
 color(15bp)=(tikz@ball!70!black);
 color(20bp)=(black!70);
 color(30bp)=(black!70)}
\makeatother

\tikzset{viewport/.style 2 args={
    x={({cos(-#1)*1cm},{sin(-#1)*sin(#2)*1cm})},
    y={({-sin(-#1)*1cm},{cos(-#1)*sin(#2)*1cm})},
    z={(0,{cos(#2)*1cm})}
}}

\pgfplotsset{only foreground/.style={
    restrict expr to domain={rawx*\CameraX + rawy*\CameraY + rawz*\CameraZ}{-0.05:100},
}}
\pgfplotsset{only background/.style={
    restrict expr to domain={rawx*\CameraX + rawy*\CameraY + rawz*\CameraZ}{-100:0.05}
}}

\def\addFGBGplot[#1]#2;{
    \addplot3[#1,only background, opacity=0.30] #2;
    \addplot3[#1,only foreground] #2;
}


\raggedbottom
\begin{document}

\title{Towards structure-preserving quantum encodings}

\author{Arthur J.\ Parzygnat}
\email{arthurjp@mit.edu}
\affiliation{Experimental Study Group, Massachusetts Institute of Technology, Cambridge, Massachusetts 02139, USA}
\affiliation{
Deloitte Consulting LLP
}

\author{Tai-Danae Bradley}
\email{tai.danae@math3ma.com}
\affiliation{
SandboxAQ, Palo Alto, California 94301, USA
}
\affiliation{Department of Mathematics, The Master's University, Santa Clarita, California 91321, USA}

\author{Andrew Vlasic}
\email{avlasic@deloitte.com}
\affiliation{
Deloitte Consulting LLP
}

\author{Anh Pham}
\email{anhdpham@deloitte.com}
\affiliation{
Deloitte Consulting LLP
}

\date{\today}

\begin{abstract}
Harnessing the potential computational advantage of quantum computers for machine learning tasks relies on the uploading of classical data onto quantum computers through what are commonly referred to as quantum encodings. The choice of such encodings may vary substantially from one task to another, and there exist only a few cases where structure has provided insight into their design and implementation, such as symmetry in geometric quantum learning. Here, we propose the perspective that category theory offers a natural mathematical framework for analyzing encodings that respect structure inherent in datasets and learning tasks. We illustrate this with pedagogical examples, which include geometric quantum machine learning, quantum metric learning, topological data analysis, and more. Moreover, our perspective provides a language in which to ask meaningful and mathematically precise questions for the design of quantum encodings and circuits for quantum machine learning tasks. 
\end{abstract}


\maketitle

\section{Introduction}

A critical step in quantum machine learning (QML) for classical data is deciding how to encode the data in a Hilbert space through a quantum circuit~\cite{Aaronson15,sim2019expressibility,huang2021power,thanasilp2024exponential,bowles2024,KhAmSi24}. 
Finding the best encoding for a specific task within a large space of options is often a time- and resource-intensive task, and rarely are there any general-purpose guiding principles to choose encodings.  
Here, we utilize category theory to organize and isolate structure in the dataset and learning task~\cite{mac2013categories}. This restricts the initial unstructured space of all quantum encodings to a structured subspace that is usually significantly smaller, thus enabling the reduction of resources needed to find suitable encodings.
By isolating the structure that quantum encodings preserve for a given learning task, such as a symmetry, a metric, or a topology, one identifies the encodings as the space of structure-preserving morphisms in the appropriate category, thereby providing a mathematical model for what constitutes structure preservation.

It is only after encoding the classical data onto a quantum computer that one can try to take advantage of potential speedups due to quantum information processing~\cite{NiCh11,BCRS19}. The specifics of what happens after encoding depends on the model~\cite{Jerbi2023,schuld21}. For example, in variational QML models, embedded data is processed through a parameterized quantum circuit with tunable parameters before obtaining an output through measurement. The measurements can then be used to provide a feedback loop to update the parameters defining the parameterized quantum circuit in order to improve the algorithm for unseen data~\cite{Benedetti2019}. Finding optimal parameters is often plagued by barren plateaus, and much recent work is spent on understanding and mitigating the effects of such barren plateaus in variational algorithms~\cite{mcclean2018barren,larocca2024reviewbarrenplateausvariational,Ragone2024LieAlgebraBarrenPlateaus,Fontana_2024_barrenplateaus,cerezo2024doesprovableabsencebarren}. 

But it is the first step of this procedure, the embedding strategy --- sometimes called \emph{state preparation} \cite{araujo2021divide,schuld2018circuitcentric,Tang2021} or \emph{data encoding} \cite{araujo2021divide, schuld2021effect, schuld21, larose2020robust} --- which may arguably play the most significant role. More concretely, it involves the choice of a function, often referred to as a \emph{(quantum) feature map} \cite{lloyd20, schuld2019quantum, havlivcek2019supervised, huang2021power, goto2020universal, schuld2018circuitcentric,KiBe22}, that assigns to each data point a quantum state, known as a \emph{feature vector} \cite{schuld2019quantum}, \emph{data encoding} \cite{schuld21, huang2021power, larose2020robust}, or \emph{(quantum) embedding} \cite{lloyd20, huang2021power, thanasilp2023subtleties, KiBe22}, to name a few. In this article, we use \emph{quantum encoding} to refer to the function, but the terminology varies as do the encoding choices, each of which introduces different inductive biases \cite{cerezo2022challenges}. Notably, there is no general framework for deciding what constitutes a good quantum encoding given a particular dataset or learning task. The notion of ``good'' is indeed multifaceted and may encompass several (sometimes competing) aspects required for the success of a QML model. For example, previous works have focused on quantum encodings' expressivity \cite{schuld2021effect,goto2020universal}, expressibility~\cite{sim2019expressibility}, robustness to noise \cite{larose2020robust}, computational cost \cite{araujo2021divide}, and effect on model trainability \cite{thanasilp2023subtleties}. 

Another facet, one that we focus on in this paper, is the interplay between quantum encodings and structure in data. One archetypal example is the field of geometric quantum machine learning, where group theoretic tools aid in constructing quantum encodings that respect a given symmetry~\cite{meyer2023exploiting,ragone2023representation,Nguyen_equivariantQNN2024}. This field stems from its successful predecessor of (classical) geometric learning~\cite{fukushima1980neocognitron,bronstein2021GDL}, whose goal is to characterize and analyze the structure of data from the perspective of symmetry~\cite{BBLSV17}. Generalizing beyond symmetry, how should a quantum encoding be constructed so that a particular structure is preserved? And what is an appropriate definition of ``structure'' to use in this context? 

Similar questions have been asked by Bowles, Ahmed, and Schuld, who recently called for a more scientifically rigorous approach to benchmarking in QML \cite{bowles2024}:
\begin{quote}
More studies that focus on questions of structure in data are crucial for the design of meaningful benchmarks: What mathematical properties do real-world applications of relevance have? ... How can we connect them to the mathematical properties of quantum models?
\end{quote}

In a similar vein, Larocca et al.\ stated ~\cite{larocca2022group}: 

\begin{quote}
...models with little to no inductive biases (i.e., with no assumptions about the problem embedded in the model) are likely to have trainability and generalization issues especially for large problem sizes. As such, it is fundamental to develop schemes that encode as much information as available about the problem at hand.
\end{quote}

As another example, Thanasilp, Wang, Cerezo, and Holmes pointed out that many of the difficulties of extracting information from data transferred to quantum states are due to the exponential concentration of quantum kernels. This is caused by various factors, one of which includes the expressivity of quantum encodings~\cite{thanasilp2024exponential}:
\begin{quote}
Our work on embedding-induced concentration suggests that problem-inspired embeddings should be used over problem-agnostic embeddings (which are typically highly expressive and entangling). 
\end{quote}
\begin{quote}
...Unstructured data embeddings should generally be avoided and the data structure should be taken into account when designing a data-embedding....
\end{quote}

The \emph{problem-inspired embeddings} that are mentioned here are precisely those types of encodings that preserve a certain structure. Such encodings form the focus of this paper. 

Towards addressing questions of structure in data, a quick perusal through the literature shows that a number of mathematical structures have relevancy depending on the dataset and task at hand. Geometric quantum machine learning, as mentioned above, exploits symmetry in data, whereas quantum metric learning prioritizes similarity and distance structure~\cite{lloyd20}. On the other hand, quantum topological data analysis~\cite{LloydGarneroneZanardi16}, which has been studied in the context of quantum encodings, concerns topological structure in data~\cite{vlasic2023qtda}. Each involves the analysis of a different mathematical structure and the extent to which it is or is not preserved under a passage to a Hilbert space under a nonlinear mapping. A methodical study of quantum encodings can therefore benefit from a principled framework in which to think about structured mathematical objects and structure-preserving mappings between them. 

Category theory, a relatively modern branch of mathematics, is the study of precisely this~\cite{riehl,mac2013categories,Pe19,Leinster2014basic} (see also Refs.~\cite{HeVi19,CoKi17} for quantum-focused introductions). In addition to providing a language to analyze mathematical objects and relationships between them, category theory also formalizes the notion of ``structure'' itself~\cite{Corry2004}, thus lending clarity and rigor to situations that may otherwise not be well-understood. Its purview has also extended well beyond mathematics to data analysis where, for instance, it forms the backbone of topological data analysis~\cite{CarlssonTDA} and the dimensionality reduction technique of Uniform Manifold Approximation and Projection (UMAP)~\cite{McInnesUMAP}, and more recently it has been used to provide guiding principles in the context of (classical) machine learning~\cite{gavranovic2024categorical,PeCrKn24}. 

In this article, we provide an overview of various quantum encodings and offer a guided perspective to designing such encodings in a way that respects the structure within a given dataset and learning task. We achieve this through the usage of category theory, which allows one to analyze a variety of mathematical structures simultaneously without the intricate details that are problem-specific. Importantly, the reader is not assumed to have prior familiarity with category theory. Instead, this article serves an invitation to those with a background in QML to adopt a categorical perspective when designing encoding schemes. And although we focus primarily on quantum encodings towards the advancement of QML techniques, the ideas presented here are equally applicable to standard (classical) machine learning. 

The article is outlined as follows. Section~\ref{sec:SPQE} introduces quantum encodings and walks through several examples while highlighting some of the mathematical structures that are preserved. Section~\ref{sec:CatsQE} introduces some basic definitions in category theory and recasts the examples of the previous section in a more formal, categorical context. This will reveal the fact that each example is a special case of a more general concept known as a 
\emph{forgetful functor}. These observations are finally summarized in Section \ref{sec:lifting} in which we reformulate the design setup and design goal of quantum encodings in the language of category theory---the core perspective of the article. 
Section~\ref{sec:OQMF} provides several open questions and avenues for future directions.

\section{Structure-preserving quantum encodings}
\label{sec:SPQE}

Quantum circuits for quantum machine learning tasks can often be decomposed into three essential components: the encoding block, the variational block, and the measurement block, as displayed in Figure \ref{fig:circuit-block}~\cite{schuld2019quantum}. Each component has a significant influence on the output of a quantum algorithm, though the first step plays an particularly important role. Indeed, just as the quality of data influences the performance of machine learning algorithms, it is desirable to preserve  structure inherent in raw data when encoding it onto a quantum computer. And yet, due to the wide variety of quantum algorithms, there are no immediately obvious general guidelines for preserving the latent information within the data in this encoding layer. For this reason, we focus our attention on the encoding step, with special attention given to preserving underlying structure within the data. 

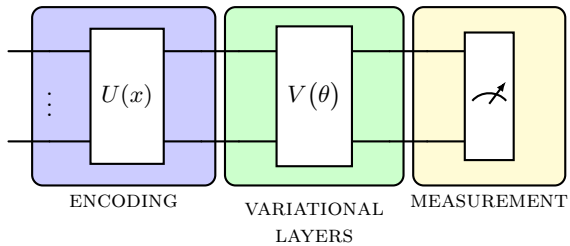
\begin{figure}
\begin{quantikz}[row sep={0.6cm,between origins}]
& \gategroup[3,steps=3,style={solid,rounded
corners,fill=blue!20, inner
xsep=2pt},background,label style={label
position=below,anchor=north,yshift=-0.2cm}]{{\sc
encoding}} & \gate[3]{U (x)} & &  \gategroup[3,steps=3,style={solid,rounded
corners,fill=green!20, inner
xsep=2pt},background,label style={label
position=below,anchor=north,yshift=-0.2cm}]{{\begin{tabular}{c}\sc variational \\ \sc layers\end{tabular}}} & \gate[3]{V \big(\theta \big)} & & 
\gategroup[3,steps=3,style={solid,rounded
corners,fill=yellow!20, inner
xsep=2pt},background,label style={label
position=below,anchor=north,yshift=-0.2cm}]{{\sc measurement}}
& \meter[3]{} & \setwiretype{n}
 \\
\setwiretype{n} & \vdots  &&&& &&& & \setwiretype{n}\\
& &&&& &&& & \setwiretype{n}
\end{quantikz}    
\caption{\label{fig:circuit-block} The three essential components of a (variational) quantum machine-learning task or algorithm: the encoding block that transfers classical data $x$ onto the quantum computer, the variational block whose parameters $\theta$ can be modified so as to optimize some outcome, and the measurement. Our focus here is on the encoding step.}
\end{figure}

As we will see below, the meaning of ``structure'' will depend on the data, the machine learning task, and/or the problem in general. Intuitively, for example, the data could admit geometric or topological structure (such as with data points sampled from a distribution localized near a manifold), algebraic structure (such as symmetries), and/or metric structure (such as when measuring distance based on similarities and dissimilarities between data points). Moreover, many of these structures are not mutually exclusive, which further complicates selecting appropriate quantum encodings. 

Several specialized encoding schemes for transferring classical data onto a quantum computer have been constructed for specific cases. For example, a common choice in the context of original quantum algorithms, such as the Deutsch--Jozsa algorithm~\cite{DeutschJozsa92}, is bit encoding, where a binary string such as $1011$ is assigned to the 4-qubit state $\ket{1011}$. Meanwhile, encoding vectors $v=(v_1,\dots,v_d)$ in $\R^{d}$ into quantum states can be done in many ways. For example solving systems of linear equations~\cite{HHL09}, 
supervised and unsupervised clustering~\cite{LMR13}, 
and computing persistent homology in topological data analysis~\cite{LloydGarneroneZanardi16} 
utilize amplitude encoding when working directly with the raw data~\cite{GroverRudolph02,KayeMosca01}. Other popular choices of encoding schemes in the context of quantum machine learning include angle/rotation encoding, time-evolution encoding, IQP encoding, and more~\cite{SchuldPetruccione21}. Are these encodings the most ideal ones used for their purposes? What guiding principles should be followed so that one can make better-informed decisions about what types of encodings to use? Can structure that is inherent in the problem be utilized in order to guide those choices? This paper aims to answer that this can indeed be done using appropriate category-theoretic tools~\cite{riehl,mac2013categories}. Before getting there, however, we first review some examples of encoding techniques and the types of mathematical structures they preserve, saving the categorical explanations for Section~\ref{sec:catsbg}. 

\subsection{Generic encodings}
  
En route to surveying examples of quantum encodings, we begin by providing the mathematical definition of quantum encoding used throughout this work. The definition is meant to capture the idea that classical real-world data can be mapped into a quantum system either as a data-dependent quantum state or a data-dependent quantum circuit. To make this mathematically precise, we first establish some notation and terminology for operators and states in Hilbert space~\cite{vN18,NiCh11,HallQTM13}.
Given a Hilbert space $\mathcal{H}$, let $\states(\mathcal{H})$ denote the set of states, i.e., trace-class positive operators with trace $1$, and let $\unitary(\mathcal{H})$ denote the set of unitary operators on $\mathcal{H}$. 
Although both $\states(\mathcal{H})$ and $\unitary(\mathcal{H})$ have more structure than that of mere sets (for instance, \ $\states(\mathcal{H})$ can be viewed as a convex space, and $\unitary(\mathcal{H})$ can be equipped with the structure of a Lie group), our aim is to first focus on the structure present in the data and how that structure can be preserved on $\states(\mathcal{H})$ and $\unitary(\mathcal{H})$ under quantum encodings. So, we consider them as mere sets for now. 

We will often let $\mathcal{X}$ denote a set, called a \define{data domain} or \define{feature space}, containing the data of interest, with the notation $X$ typically used for the data set itself (e.g.\ a sample), which is usually taken to be finite. The following definition of a quantum encoding is a minimalistic definition, in that it does not refer to any specific task (e.g.\ classification or regression) or model (e.g.\ variational quantum circuit, implicit, explicit, data re-uploading, etc.) but is general enough to be applicable to most settings. 

\begin{definition}\label{def:encoding}
A \define{quantum state encoding} from a data domain $\mathcal{X}$ into a Hilbert space $\mathcal{H}$ is a function $\rho:\mathcal{X}\to\states(\mathcal{H})$. A \define{quantum unitary encoding} is a function $U:\mathcal{X}\to\unitary(\mathcal{H})$. A \define{quantum encoding} refers to either of these. 
\end{definition}

The two definitions are related in that every quantum unitary encoding $U:\mathcal{X}\to\unitary(\mathcal{H})$, together with a quantum state $\ket{\psi_0}\in\mathcal{H}$, gives rise to a quantum state encoding $\rho:\mathcal{X}\to\states(\mathcal{H})$ via 
\begin{equation}
\rho(x)=U(x)\ket{\psi_0} \bra{\psi_0}U(x)^{\dag}.
\end{equation}

The important point to notice in this general definition is that a quantum encoding is simply a function, which does not necessarily preserve any additional structure. In particular, it is not assumed to be continuous, smooth, distance-preserving, or symmetry-preserving. An important step in designing quantum encodings, therefore, is to \textit{first} identify appropriate structures on the data domain $\mathcal{X}$ and set of states $\states(\mathcal{H})$ that are relevant to the problem at hand. One \textit{then} seeks to construct a quantum encoding that preserves that structure. Demanding that a certain mathematical structure is preserved places a restriction on the set of all functions from the data domain $\mathcal{X}$ to the set of states $\states(\mathcal{H})$, thus potentially making it easier to search for quantum encodings (cf.\ Figure~\ref{fig:reducingspaceofencodings}). 

In the next few sections, we will highlight several familiar illustrations of this, where the relevant structures are symmetry, topologies and smooth structures, distances in the context of topological data analysis, and distances in the context of quantum metric learning. 

\begin{figure}
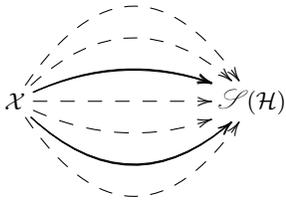

\[
\xy 0;/r.25pc/:
(0,0)*+{\mathcal{X}}="X";
(30,0)*+{\states(\mathcal{H})}="SH";
{\ar@{-->}"X";"SH"};
{\ar@/^1.0pc/@{->}"X";"SH"};
{\ar@/^0.98pc/@{->}"X";"SH"};
{\ar@/^1.02pc/@{->}"X";"SH"};
{\ar@/^2.0pc/@{-->}"X";"SH"};
{\ar@/^3.0pc/@{-->}"X";"SH"};
{\ar@/_1.0pc/@{-->}"X";"SH"};
{\ar@/_2.0pc/@{->}"X";"SH"};
{\ar@/_1.98pc/@{->}"X";"SH"};
{\ar@/_2.02pc/@{->}"X";"SH"};
{\ar@/_3.0pc/@{-->}"X";"SH"};
\endxy
\]
\caption{Among all possible set-theoretic functions describing quantum state encodings (dashed arrows) from a data domain $\mathcal{X}$ to a Hilbert space $\mathcal{H}$, demanding that a structure is preserved isolates a subset of quantum encodings (solid arrows), thus potentially simplifying the search for quantum encodings compatible with a given structure.}
\label{fig:reducingspaceofencodings}
\end{figure}

\subsection{Symmetry}
\label{sec:symmetry}

Geometric quantum machine learning provides one example of structure-preserving quantum encodings that aids in the design of quantum circuits~\cite{larocca2022group,ragone2023representation,meyer2023exploiting,Nguyen_equivariantQNN2024,Glick_covariant_2024}. 
In this paradigm, symmetries are described in terms of group actions on both the data domain and the set of quantum states. 
Quantum encodings that preserves this structure of symmetry are called \emph{equivariant} quantum encodings, and such encodings will be the focus of this section. 
Since there are fewer quantum encodings that are equivariant with respect to these symmetries as compared to all quantum encodings, one is able to isolate a smaller space of encodings, which in theory reduces the resources needed to find appropriate encodings. More precisely, both the data domain $\mathcal{X}$ and quantum state space $\states(\mathcal{H})$ are equipped with an action of a group $G$, and a quantum state encoding $\rho:\mathcal{X}\to\states(\mathcal{H})$ that is compatible with the relevant symmetry is a $G$-equivariant map. Let us now briefly review how this works by first recalling the formal definition of a group's action on a set. 

\begin{definition}
\label{defn:Gset}
A \define{$G$-set} is a set $\mathcal{X}$ together with a group homomorphism $\alpha:G\to\mathrm{Aut}(\mathcal{X})$, where $\mathrm{Aut}(\mathcal{X})$ denotes the group (under function composition) of bijections on $\mathcal{X}$. The map $\alpha$ is referred to as the \define{group action}. Such a $G$-set $\mathcal{X}$ is often written as a pair $(\mathcal{X},\alpha)$.
\end{definition}

In this definition, notice the distinction between the underlying set and the additional structure that defines the action. The notation helps to keep this distinction front and center. Indeed if $\mathcal{X}$ is a $G$-set, then writing it instead as a pair $(\mathcal{X},\alpha)$ helps to emphasize the additional structure that $\mathcal{X}$ carries with it, which is useful notation that will reappear later. Unwinding Definition \ref{defn:Gset} further, ``Aut'' stands for ``automorphisms,'' and the bijection (or automorphism) associated with a group element $g\in G$ is often written as $\alpha_{g}:\mathcal{X}\to\mathcal{X}$. Its action on an element $x\in\mathcal{X}$ is written as $\alpha_{g}(x)$, which is another element of $\mathcal{X}$. The group homomorphism property says that the action is compatible with the group operation, in that $\alpha_{gh}(x)=\alpha_{g}(\alpha_{h}(x))$ for all $g,h\in G$ and $x\in\mathcal{X}$ as well as $\alpha_{1_{G}}=\id_{\mathcal{X}}$, where $1_{G}$ is the identity in the group $G$. 

As an example, if $\mathcal{H}$ is a Hilbert space, and if $V: G\to\unitary(\mathcal{H})$ is a unitary representation of $G$ on $\mathcal{H}$, then $\mathcal{H}$ becomes a $G$-set because the set of unitary operators on $\mathcal{H}$ can be viewed as a subset of $\mathrm{Aut}(\mathcal{H})$, since unitary operators can be viewed as invertible linear transformations on $\mathcal{H}$. Moreover, this action of $G$ on $\mathcal{H}$ induces an action of $G$ on the space of states $\states(\mathcal{H})$ by sending each group element $g\in G$ to the operator $\Ad_{V_g}\in\mathrm{Aut}(\states(\mathcal{H}))$ defined by 
\begin{equation}
\Ad_{V_g}(\sigma):=V_g\sigma V_g^{\dag}
\end{equation}
for all $\sigma\in\states(\mathcal{H})$. Here ``$\mathrm{Ad}$'' stands for the \emph{adjoint} action. Thus, $(\states(\mathcal{H}),\Ad_{V})$ is a $G$-set as well. 

Therefore, if we have a data domain $\mathcal{X}$ with a symmetry, and we want to encode that data onto a quantum system with Hilbert space $\mathcal{H}$, then we might want to encode that data in a way that preserves the symmetry. This leads to our first example of a structure-preserving quantum encoding.  

\begin{definition}
\label{defn:Gequivariantquantumencoding}
Given a $G$-set $(\mathcal{X},\alpha)$, a \define{$G$-equivariant quantum state encoding} of $(\mathcal{X},\alpha)$ consists of a quantum state encoding $\rho:\mathcal{X}\to\states(\mathcal{H})$ together with a unitary representation $V: G\to\unitary(\mathcal{H})$ of $G$ on the Hilbert space $\mathcal{H}$ such that $\rho$ is \define{$G$-equivariant}, that is,
    \begin{equation}
    \label{eqn:Gequivariantdefn}
    \rho\big(\alpha_{g}(x)\big)=V_{g}\rho(x)V_{g}^{\dagger}
    \end{equation}
    for all $x\in \mathcal{X}$ and $g\in G$.
\end{definition}

This definition of a $G$-equivariant quantum state encoding is a special case of the more general notion of a $G$-equivariant map between two $G$-sets, defined as follows. 

\begin{definition}
\label{defn:Gequivmap}
Let $(\mathcal{X},\alpha)$ and $(\mathcal{Y},\beta)$ be two $G$-sets. A \define{$G$-equivariant map} from $(\mathcal{X},\alpha)$ to $(\mathcal{Y},\beta)$ is a function $f:\mathcal{X}\to\mathcal{Y}$ such that 
\begin{equation}
\label{eq:Gsetmap}
f\big(\alpha_{g}(x)\big)=\beta_{g}\big(f(x)\big)
\end{equation}
for all $x\in\mathcal{X}$ and $g\in G$. 
\end{definition}

As we will see in Section \ref{sec:CatsQE}, Definition \ref{defn:Gequivariantquantumencoding} amounts to saying that a $G$-equivariant quantum state encoding is a \emph{morphism} in a particular \emph{category}. In  the meantime, observe that Equation~\eqref{eq:Gsetmap}, and in particular Equation~\eqref{eqn:Gequivariantdefn}, imposes a restriction on the set of all functions from one $G$-set to another since not every function will necessarily satisfy the constraint. Indeed, such constraints have been used in QML to construct equivariant quantum encodings by combining equivariant feature maps and equivariant gatesets in Ref.~\cite{meyer2023exploiting}. This is illustrated in the following toy example. 

\begin{example}
\label{ex:Meyermodified}
Following the example in Ref.~\cite[Section II.A]{meyer2023exploiting}, consider a supervised learning task that distinguishes between two classes of points within a data domain $\mathcal{X}=\mathbb{R}^2$ as shown in Figure~\ref{fig:meyersymmetrymodified}. 
The data domain $\mathcal{X}$ acquires a symmetry determined by the relations
\begin{equation}
\label{eqn:symmetryKlein4groupclassifier1}
y(x_1,x_2)=y(x_2,x_1)=y(-x_1,-x_2),
\end{equation}
where $y:\mathcal{X}\to\R$ is a function that determines the class of any input data point via $y(x)>0$ implies $x$ is of class $+1$, while $y(x)<0$ implies $x$ is of class $-1$. The symmetry depicted in~\eqref{eqn:symmetryKlein4groupclassifier1} is modeled by the Klein Four group 
\begin{equation}
G=\mathbb{Z}_2\times\mathbb{Z}_2=\{(0,0),(1,0),(0,1),(1,1)\}
\end{equation}
under addition modulo $2$. 
Suppose this group has the action $\alpha: G\to \mathrm{Aut}(\R^2)$ on $\mathcal{X}=\R^{2}$ given by the linear transformations whose associated matrices are
\begin{equation}
\begin{aligned}[c]
\alpha_{(0,0)} &= \begin{bmatrix}1&0\\0&1\end{bmatrix}\\[5pt]
\alpha_{(1,0)} &= \begin{bmatrix}0&1\\1&0\end{bmatrix}
\end{aligned}
\qquad 
\begin{aligned}[c]
\alpha_{(0,1)} &= \begin{bmatrix}-1&0\\0&-1\end{bmatrix}\\[5pt]
\alpha_{(1,1)} &= \begin{bmatrix}0&-1\\-1&0\end{bmatrix}.
\end{aligned}
\end{equation}
Now set $\mathcal{H}=\C^{2}\otimes\C^{2}$ and set $V: G\to\unitary(\mathcal{H})$ to be the representation specified by
\begin{equation}
\begin{aligned}[c]
V_{(0,0)}&=\mathds{1}_{2}\otimes \mathds{1}_{2} \\[5pt]
V_{(1,0)}&= \SWAP 
\end{aligned}
\qquad 
\begin{aligned}[c]
V_{(0,1)}&= X\otimes X \\[5pt]
V_{(1,1)}&= \SWAP(X\otimes X),
\end{aligned}
\end{equation}
where $X$ is the Pauli matrix $X=\left[\begin{smallmatrix}0&1\\1&0\end{smallmatrix}\right]$, $\mathds{1}_{2}$ is the $2\times2$ identity matrix, and $\SWAP$ is the unitary operator characterized by $\SWAP \ket{\psi}\otimes\ket{\phi}=\ket{\phi}\otimes\ket{\psi}$ for all $\ket{\psi},\ket{\phi}\in\C^{2}$. 

In this example, a $G$-equivariant quantum encoding would be a map $\rho:\mathcal{X}\to\states(\mathcal{H})$ such that $V_{g}\rho(x)V_{g}^{\dag}=\rho(\alpha_{g}(x))$ for all $x\in\mathcal{X}$ and $g\in G$. One example of such an encoding can be obtained from the unitary encoding $U:\mathcal{X}\to\unitary(\mathcal{H})$, which sends a point $x=(x_1,x_2)\in \mathcal{X}$ to 
\begin{equation}
\label{eqn:embeddingUsymmetryexample}
U(x_1,x_2)= R_Z(x_1)\otimes R_Z(x_2),
\end{equation}
where the rotation gate is given by
\begin{equation}
R_Z(\theta)=
e^{-i\theta Z/2}=
\begin{bmatrix}
    e^{-i\theta/2} & 0 \\ 0 & e^{i\theta/2}
\end{bmatrix},
\end{equation}
where $Z$ is the Pauli matrix $Z=\left[\begin{smallmatrix}1&0\\0&-1\end{smallmatrix}\right]$. 
From this unitary encoding, one can obtain a state encoding by acting on a fiducial state $\ket{\psi_{0}}\in\mathcal{H}$ of the form 
\begin{equation}
\label{eqn:psi0symmetrytoyexample}
\ket{\psi_{0}}=\sqrt{p}\ket{+,+}-\sqrt{1-p}\ket{-,-},
\end{equation}
where $0<p<1$.  
(The value of $p$ chosen to produce Figure~\ref{fig:meyersymmetrymodified} can be found in Appendix~\ref{app:GQML}.) Here, 
\begin{equation}
\ket{+}=\frac{1}{\sqrt{2}}\big(\ket{0}+\ket{1}\big)
\;\;\text{ and }\;\;
\ket{-}=\frac{1}{\sqrt{2}}\big(\ket{0}-\ket{1}\big)
\end{equation}
are the spin-up and spin-down eigenvectors of $X$, respectively, and $\ket{\pm,\pm}:=\ket{\pm}\otimes\ket{\pm}$.
Thus, the resulting state encoding sends $x=(x_1,x_2)$ to 
\begin{equation}
\rho(x)=U(x_1,x_2)\ket{\psi_0}\bra{\psi_0}U(x_1,x_2)^{\dag}.
\end{equation}

    \begin{figure}
    \includegraphics{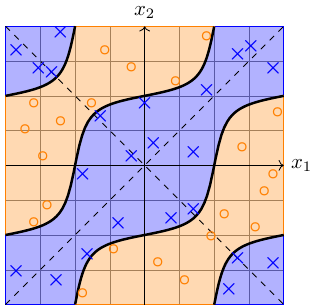}
    \caption{This is a variation of Figure 2 in Ref.~\cite{meyer2023exploiting} based on a binary classification task with discrete symmetries given by $\alpha_{(1,0)}$ a reflection along the line $x_{2}=x_{1}$, $\alpha_{(0,1)}$ an inversion, and $\alpha_{(1,1)}$ a reflection along the line $x_{2}=-x_{1}$. The structure is periodic so that the symmetry is preserved. More details about how to explicitly construct the decision boundaries for this classifier, as well as the associated observable, are provided in Appendix~\ref{app:GQML}.}
    \label{fig:meyersymmetrymodified}
    \end{figure}

One can explicitly check that this defines an \emph{equivariant} quantum encoding. More generally, one might ask how such an equivariant encoding can be obtained from the equivariance condition. To see how demanding equivariance reduces the space of possible quantum encodings, it helps to make some assumptions about the form of the encoding. Suppose that the unitary quantum encoding $U:\mathcal{X}\to\unitary(\mathcal{H})$ is of the form 
\begin{equation}
\label{eqn:unitary4by4example}
U(x)=e^{-i \mathcal{L}(x)},
\end{equation}
where $\mathcal{L}:\R^{2}\to\matr_{4}$ is a linear transformation sending each point $x\in\mathcal{X}$ to a Hermitian operator $\mathcal{L}(x)\in\matr_{4}$, where $\matr_{4}$ is the algebra of complex $4\times$ matrices. (Encodings of this form are described in more detail in Appendix~\ref{app:onepgroupqencode}.) Moreover, it suffices to assume this Hermitian operator is traceless so that $-i\mathcal{L}(x)\in\mathfrak{su}(4)$, the Lie algebra of $4\times4$ special unitary matrices. Such a linear transformation $\mathcal{L}$ is uniquely determined by the image of the two unit vectors 
\begin{equation}
L_{1}:=\mathcal{L}(e_1)
\quad\text{ and }\quad
L_{2}:=\mathcal{L}(e_2),
\end{equation}
where $e_1=(1,0)$ and $e_2=(0,1)$. Therefore, the set of all such linear transformations is isomorphic to the span of all ordered pairs of traceless Hermitian $4\times 4$ matrices, which is a real vector space of dimension $2(4^2-1)=30$. 

The equivariance condition singles out a subspace of this vector space as follows. First notice that the equivariance condition 
\begin{equation}
V_{g}e^{-i\mathcal{L}(x)}V_{g}^{\dag}=e^{-i\mathcal{L}(\alpha_g(x))}
\end{equation}
at the level of the unitary encoding becomes 
\begin{equation}
V_g \mathcal{L}(x) V_{g}^{\dag}=\mathcal{L}(\alpha_g(x))
\end{equation}
at the level of the generating elements $\mathcal{L}(x)$ because each $V_g$ is unitary~\cite{hall2015lie}. Let us now see what this says for some of the specific group elements. For $g=(1,0)$, equivariance implies $\SWAP L_1 \SWAP =L_2$ (and also $\SWAP L_2 \SWAP =L_1$, but this is equivalent to the first equation). For $g=(0,1)$, equivariance implies $(X\otimes X)L_1(X\otimes X)=-L_1$ and $(X\otimes X)L_2(X\otimes X)=-L_2$. However, the second of these follows from the previous constratins. Hence, these give us two \emph{operator} constraints
\begin{align}
\SWAP L_1 \SWAP&=L_{2} \nonumber\\
\{X\otimes X,L_{1}\}&=0 \label{eq:GMLexconstraint} 
\end{align}
on the space of all quantum encodings parameterized by ordered pairs of Hermitian operators, thereby reducing the space of all possible quantum encodings to structure-preserving quantum encodings. 
In fact, there is a reduction from the 30-dimensional real vector space of all pairs of Hermitian matrices $(L_1,L_2)$ to an 8-dimensional subspace of pairs $(L_1,\SWAP L_1 \SWAP)$ due to the constraint~\eqref{eq:GMLexconstraint}. A basis for this subspace in terms of $L_{1}$ consists of 
\begin{equation}
\begin{aligned}[c]
&Z\otimes\mathds{1}_{2},\;\;  & \mathds{1}_{2}\otimes Z,\;\; \\[5pt]
&Y\otimes\mathds{1}_{2},
&\mathds{1}_{2}\otimes Y,\;\;
\end{aligned}
\qquad 
\begin{aligned}[c]
&X\otimes Z,\;\;
&Z\otimes X,\;\; \\[5pt]
&X\otimes Y,\;\; 
&Y\otimes X,\;\;
\end{aligned}
\end{equation}
which is derived in Appendix~\ref{app:GQML}.
The example given in~\eqref{eqn:embeddingUsymmetryexample} from Ref.~\cite{meyer2023exploiting} corresponds to the choice
\begin{equation}
L_{1}=\frac{Z}{2}\otimes\mathds{1}_{2}
\quad\text{ and }\quad
L_{2}=\mathds{1}_{2}\otimes\frac{Z}{2}
\end{equation}
for the generators of the quantum encoding. 

More details associated with this example, including how the decision boundaries in Figure~\ref{fig:meyersymmetrymodified} are constructed, are provided in Appendix~\ref{app:GQML}. 
\end{example}

There are two subtle points to be aware of when designing equivariant encodings. One is that perfectly equivariant encodings (and subsequent equivariant layers) may overfit to perfectly symmetric data. The second point is that such equivariance may also lead to efficient classical replacements of quantum algorithms, i.e., \emph{dequantization}~\cite{Tang2022dequantizing,Anschuetz2023}. For the first point, due to errors on quantum computers and other factors, the overfitting to perfectly symmetric data might inhibit generalization, and might therefore \textit{not} be ideal for all QML problems~\cite{tuysuz2024symmetry}. Indeed, it has been observed that having symmetry-breaking layers in data re-uploading models may provide more accurate predictions in classification tasks~\cite{larocca2022group,LeIsabel2023,langer2024probing}. (We only briefly mention data re-uploading models in Appendix~\ref{app:metriclearningsms} but otherwise do not make explicit use of them.) However, it is important even in these examples to first isolate the class of equivariant encodings that one can then perturb to get symmetry-breaking gates in a way that is informed by the types of errors that can occur on the quantum computing device. Therefore, isolating the equivariant encodings and gates is an important first step in designing such quantum machine learning models. A discussion about the second point on dequantization is deferred to Section~\ref{sec:OQMF}. 

Having briefly presented geometric quantum machine learning as the archetypal example of how structure (namely, symmetry) in a dataset could be used to inform the design of quantum encodings, we now move on to discuss structure preservation in the context of continuity and smoothness. 

\subsection{Topologies and Smooth Structures}
\label{sec:top_smooth}

Continuity and smoothness (eg.\ differentiability) are concepts that can be defined using the tools of topology and manifold theory~\cite{Mu00,BaMu94,LeeISM13,BBTtopology20}, which isolate some of the more flexible structures of Euclidean space dictating how neighborhoods connect together~\cite{Ru76}. 
The underlying topologies and smooth structures on the data domains and quantum state spaces are two examples of mathematical structures that are so natural they might not always be explicitly stated. Quantum encodings respecting these structures are functions that are continuous and smooth, respectively. 
In more detail, the data domain $\mathcal{X}$ and the state space $\states(\mathcal{H})$ should each be equipped with a topology and a smooth structure so that the quantum state encoding $\rho:\mathcal{X}\to\states(\mathcal{H})$ can be chosen to be continuous or smooth (i.e., infinitely differentiable) if we want the encoded quantum states to vary continuously or smoothly based on small variations in the classical data. 

We will not need the technicalities of these definitions in what follows, but we do wish to reinforce the notion of \textit{adding mathematical structure to a set}, which is key concept in this work. To that end, let us briefly recall the definition of a topology and a smooth structure. A \define{topology} on a set $\mathcal{X}$ is a set $\tau_\mathcal{X}$ whose elements are certain subsets of $\mathcal{X}$, called \define{open} subsets, satisfying a list of axioms~\cite{Mu00,BBTtopology20}. When a set $\mathcal{X}$ is equipped with a topology $\tau_\mathcal{X}$, the pair $(\mathcal{X},\tau_\mathcal{X})$ is called a \define{topological space}. Writing it as a pair  emphasizes the additional structure that $\mathcal{X}$ carries, although $\tau_\mathcal{X}$ is often omitted in the literature for brevity. Meanwhile, a smooth structure can be defined once a set $\mathcal{X}$ is already equipped with a topology. Moreover, this topology should satisfy an additional list of axioms so that $(\mathcal{X},\tau_\mathcal{X})$ defines a \define{topological manifold} (namely, it should be Hausdorff, second-countable, and locally Euclidean of some fixed dimension)~\cite{LeeISM13}. Then, a \define{smooth structure} on a topological manifold $(\mathcal{X},\tau_\mathcal{X})$ is a maximal smooth atlas $\mathcal{A}$ on $\mathcal{X}$, and the triple $(\mathcal{X},\tau_\mathcal{X},\mathcal{A})$ is called a \define{smooth manifold}. (See Ref.~\cite{LeeISM13} for more details.) Given such structures, only then can one define \define{continuous} and \define{smooth} maps between topological spaces and smooth manifolds as functions that preserve the topology and smooth structures~\cite{Mu00,BBTtopology20,LeeISM13}. In the language of category theory, these maps are said to be \emph{morphisms} in particular \emph{categories}, a framing we will revisit in Section \ref{sec:CatsQE}. 

Technicalities aside, continuity of a quantum encoding is used to ensure that classical data is encoded in such a way that the encoded data do not vary too wildly from the raw classical data, while smoothness is used to ensure even more rigidity and differentiability properties, such as when calculating generators using derivatives~\cite{SchuldPetruccione21}. (See also Appendix~\ref{app:onepgroupqencode}.) Most quantum encodings are of this form, and in this section we will list some of the more commonly-used examples in the literature. In each of these examples, the data domain $\mathcal{X}$ is a Euclidean space of the form $\R^{d}$, which is equipped with the standard topology and smooth structure of Euclidean space~\cite{LeeISM13}. Meanwhile, the set of pure states on a Hilbert space $\mathcal{H}=\C^{d}$ is the complex projective space $\CP^{d-1}$, which also has a standard topology and smooth structure~\cite{LeeISM13}, %
\footnote{Defining smoothness on the full state space $\states(\mathcal{H})$ is a bit subtle because $\states(\mathcal{H})$ is not technically a smooth manifold, though it is a convex space~\cite{BeZy17}. Despite the lack of smoothness in the usual sense, for most of the examples we consider in this paper, the quantum state encodings will factor through the subspace of pure states, which is a smooth manifold (it is a complex projective space). For this reason, we will not concern ourselves too much with going into the more technical details regarding the smooth structure on $\states(\mathcal{H})$.}. 

\begin{example}[Angle encoding]
\emph{Angle encoding}, also called \emph{rotation encoding}, is a technique explored in Refs.~\cite{schuld21,schuld2019quantum, skolik2021layerwise} that utilizes the structure of one-parameter groups and interprets the feature components as parameters that can be used for Hamiltonian evolution. 
In the case that $\mathcal{X}=\R^{d}$ is a Euclidean feature space, each component of $x=(x_1,\ldots,x_d)\in\mathcal{X}$ is mapped to its own qubit via the map
\begin{equation}
x \mapsto \ket{\Phi(x)} := \left( \bigotimes_{k=1}^{d} \exp\left( - \frac{i x_k}{2} X_k  \right) \right) \ket{0\cdots0},
\end{equation}
where $X_{k}$ is the Pauli $X$ gate acting on the $k^{\text{th}}$ qubit. 
The associated quantum state encoding $\rho:\R^{d}\to\states(\C^{2^d})$ is then given by 
\begin{equation}
\rho(x):=\ket{\Phi(x)}\bra{\Phi(x)}. 
\end{equation}
Note that the Pauli $X$ gate may be replaced with a Pauli $Y$ or Pauli $Z$ gate, as in Equation \eqref{eqn:embeddingUsymmetryexample} in the previous section, for example. In fact, one could choose any generalized Pauli operator $\vec{n}\cdot\vec{\sigma}$, where $\vec{n}$ is some unit vector in $\R^3$ and $\vec{\sigma}=(X,Y,Z)$ is the vector of the Pauli gates. All of these quantum state encodings are continuous and smooth. Note that these quantum encodings map a feature space of $d$ dimensions into $d$ qubits, which is a linear scaling (as opposed to the logarithmic scaling of amplitude encoding, which is discussed in Example~\ref{ex:amplitudeencoding}). Moreover, angle encoding has been successfully used in some binary classification tasks~\cite{lloyd20,KiBe22,de2024empirical}.  Note, however, that angle encodings do not make any use of entanglement with respect to the assumed tensor factorization of the qubits because there are no entangling gates in the definition of the encoding. For this reason, it is said to have low expressibility~\cite{sim2019expressibility}, and one can include entangling gates and data re-uploading within the quantum circuit to increase expressibility and expressivity~\cite{Jerbi2023}. 
\end{example}

\begin{example}[Amplitude encoding]
\label{ex:amplitudeencoding}
\emph{Amplitude encoding} is a technique that models entries in the data point array as frequencies (in the statistical sense) of the expected string output~\cite{SchuldPetruccione21}. By definition, it takes a vector $x\in\R^{2^d}\setminus\{0\}$ to a vector $\chi(x)$ in $\C^{2^d}$, the Hilbert space of $d$ qubits, and then normalizes it to obtain a genuine quantum state $\ket{\Phi(x)}$. To express this as a succinct formula, if $x=(x_0,x_1,\dots,x_{2^d-1})\in\R^{2^d}$, then rewrite the indices in binary so that they correspond to a sequence $n$ of $d$ $0$'s and $1$'s. For example, if $d=2$, then $n$ has four possibilities: $n\in\{00,01,10,11\}$. Then amplitude encoding is defined by the sequence of maps
\begin{equation}
\label{eq:ampencOG}
x\mapsto \chi(x)=\sum_{n=0}^{2^d-1}x_{n}\ket{n}\mapsto \ket{\Phi(x)}=\frac{\chi(x)}{\lVert\chi(x)\rVert},
\end{equation}
where $\lVert\chi(x)\rVert$ denotes the Euclidean norm of $\chi(x)$. Since the data is assumed to be real, one could equivalently normalize the data first onto the sphere $S^{2^{d}-1}$ inside $\R^{2^d}$ and then map over to $\C^{2^d}$ to get the same resulting quantum state.
Although the above description of amplitude encoding provides one mathematical description of the resulting function, there are several proposals for constructing quantum circuits that achieve amplitude encoding, not all of which use the same number of qubits~\cite{GroverRudolph02,KayeMosca01,araujo2021divide}. We will not discuss these at present, but instead analyze some of the structural properties of the amplitude encoding from~\eqref{eq:ampencOG}. 
To start, amplitude encoding is both continuous and smooth. Moreover, one of the benefits of amplitude encoding is that a feature space of the form $\R^{2^d}$ is encoded into a Hilbert space of $\log_{2}(2^d)=d$ qubits. However, in the process of normalization to obtain a genuine quantum state, data points that are far away from each other in the data domain might have their distances so drastically reduced that they become difficult to separate in the encoding (cf.\ Figure~\ref{fig:ampenccompress}). In fact, the mapping is in general not one-to-one, which means that two distinct points may be mapped to the same element, thus causing their distance to vanish and therefore become indistinguishable. 

\begin{figure}
\begin{tikzpicture}
\def\lef{-1}
\def\righ{4}
\draw (\lef,0) circle (1.0);
    \node at ({0.3*cos(deg(20))+\lef},{0.3*sin(deg(20))}) {$\blacklozenge$};
    \node at ({0.1*cos(deg(60))+\lef},{0.1*sin(deg(60))}) {\tiny $\blacksquare$};
    \node at ({0.8*cos(deg(120))+\lef},{0.8*sin(deg(120))}) {$\bullet$};
    \node at ({0.6*cos(deg(125))+\lef},{0.6*sin(deg(125))}) {$\blacktriangle$};
    \node at ({0.7*cos(deg(40))+\lef},{0.7*sin(deg(40))}) {$\bullet$};
    \node at ({0.9*cos(deg(30))+\lef},{0.9*sin(deg(30))}) {$\bullet$};
    \node at ({1.1*cos(deg(50))+\lef},{1.1*sin(deg(50))}) {$\bullet$};
    \node at ({1.4*cos(deg(70))+\lef},{1.4*sin(deg(70))}) {$\bullet$};
    \node at ({1.5*cos(deg(90))+\lef},{1.5*sin(deg(90))}) {$\bullet$};
    \node at ({1.8*cos(deg(100))+\lef},{1.8*sin(deg(100))}) {$\blacktriangledown$};
    \node at ({1.9*cos(deg(110))+\lef},{1.9*sin(deg(110))}) {$\bullet$};
\draw[->] ({\lef+2},0) -- node[above]{\tiny normalize} ({\righ-2},0);
    \node at ({cos(deg(20))+\righ},{sin(deg(20))}) {$\blacklozenge$};
    \node at ({cos(deg(60))+\righ},{sin(deg(60))}) {\tiny $\blacksquare$};
    \node at ({cos(deg(120))+\righ},{sin(deg(120))}) {$\bullet$};
    \node at ({cos(deg(125))+\righ},{sin(deg(125))}) {$\blacktriangle$};
    \node at ({cos(deg(40))+\righ},{sin(deg(40))}) {$\bullet$};
    \node at ({cos(deg(30))+\righ},{sin(deg(30))}) {$\bullet$};
    \node at ({cos(deg(50))+\righ},{sin(deg(50))}) {$\bullet$};
    \node at ({cos(deg(70))+\righ},{sin(deg(70))}) {$\bullet$};
    \node at ({cos(deg(90))+\righ},{sin(deg(90))}) {$\bullet$};
    \node at ({cos(deg(100))+\righ},{sin(deg(100))}) {$\blacktriangledown$};
    \node at ({cos(deg(110))+\righ},{sin(deg(110))}) {$\bullet$};
\draw (\righ,0) circle (1.0);
\end{tikzpicture}
\caption{The standard form of amplitude encoding is smooth but not dimension-preserving. This is because it requires a step that pushes the set of data points in $\R^{2^{d}}$ onto the unit sphere $S^{2^{d}-1}$ thereby reducing the separation between the data points and hence increasing the difficulty in distinguishing between them. For example, although the square {\tiny $\blacksquare$} is close to the diamond $\blacklozenge$ in the ambient data domain, they are far apart after amplitude encoding. Meanwhile, although $\blacktriangledown$ and $\blacktriangle$ are far apart in the ambient data domain, they are close together after amplitude encoding.}
\label{fig:ampenccompress}
\end{figure}
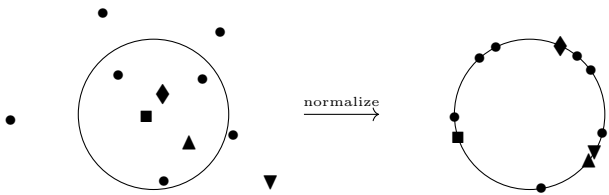

There are at least two slight modifications to the amplitude encoding scheme mentioned above that partially address the lack of injectivity. First, one could add an additional entry for the original data and project onto the unit sphere $S^{2^d}$ (instead of $S^{2^d-1}$) via
\begin{equation}
(x_0,x_1,\dots,x_{N-1})\mapsto \frac{(x_0,x_1,\dots,x_{N-1},1)}{\sqrt{1+\lVert x\rVert^2}}
\end{equation}
so that the dimension does not reduce upon encoding~\cite{SchuldPetruccione21}. Although the map is now a diffeomorphism onto the open set for the northern hemisphere of $S^{2^d}$, this does not fully resolve the problem of the difficulty of separation of data mentioned in Figure~\ref{fig:ampenccompress}. This is because data points that are within the unit sphere $S^{2^d-1}$ are mapped into the top part of the northern hemisphere of $S^{2^d}$ within a polar angle of $\frac{\pi}{4}$, while the points outside the unit sphere $S^{2^d-1}$ are all mapped into the northern hemisphere of $S^{2^d}$ with a polar angle between $\frac{\pi}{4}$ and $\frac{\pi}{2}$, thus drastically reducing the distances between points that were originally very far away from each other (cf.\ Figure~\ref{fig:ampenccompressplusone}).

\begin{figure}
\begin{tikzpicture}
\draw ({-6-3/3},0) -- ({-2-1/3},0);
\draw (0,0) circle (1.0);
\draw[->] (-2.2,0.4) to [out=30,in=150] (-1.3,0.4);
\foreach \x in {-6,-5,-4,-3,-2,-1,0,1,2,3,4,5,6} {
	\node at ({-4+(\x-2)/3},0) {$\bullet$};
	\node at ({\x/(pow(1+\x*\x,1/2))},{1/(pow(1+\x*\x,1/2))}) {$\bullet$};
	}
\foreach \x in {-6,0,6} {
    \node at ({-4+(\x-2)/3},-0.35) {$\x$};
    }
\end{tikzpicture}
\caption{Although data $x_0$ in the interval $[-1,1]$ gets mapped into the northern hemisphere of $S^{1}$ within polar angle $\frac{\pi}{4}$, all other data gets mapped to the region with polar angle between $\frac{\pi}{4}$ and $\frac{\pi}{2}$. This is illustrated by showing how the set of negative integers becomes a sequence with a limit point at $(-1,0)$ and similarly the image of the positive integers has a limit point at $(1,0)$. This is shown here for $[-6,6]\cap\Z$.}
\label{fig:ampenccompressplusone}
\end{figure}
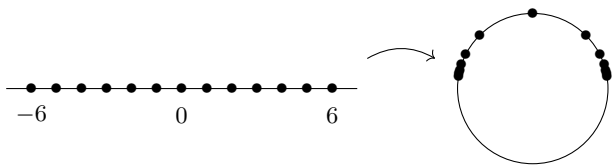

Another version of amplitude encoding instead first transforms the data into a probability vector by the mapping
\begin{equation}
(x_0,\dots,x_{N-1})\mapsto \frac{\left(e^{x_0},\dots,e^{x_{N-1}},1\right)}{1+\sum_{j=0}^{N-1}e^{x_j}},
\end{equation}
which has image on the probability simplex~\cite{vlasic2023qtda}. The benefit of this over the previous encodings is the direct translation to probabilities, though a similar problem of data separation (due to a lack of distance preservation) remains (cf.\ Figure~\ref{fig:ampenccompressplusoneAndrew}). Alternative versions of amplitude encodings are given in Ref.~\cite{Arnottetal2024}, though those also have similar advantages and disadvantages to the examples just given. These problems involving a lack of distance preservation can be circumvented by equipping the target spaces with a modified notion of distance, a topic which we discuss in the next two sections. However, it is not generally possible to transfer a distance function \emph{from} a data domain \emph{to} a quantum state space (since encodings are not generally bijective). Instead, it is only possible to go in the opposite direction (cf.\ Section~\ref{ssec:dist_metriclearning}). Let us therefore first move on to discuss distances in the context of topological data analysis.

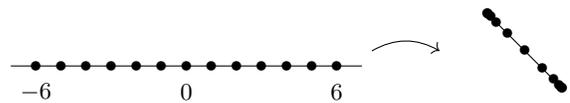
\begin{figure}
\begin{tikzpicture}
\draw ({-6-1/3},0.3) -- ({-2+1/3},0.3);
\draw[->] ({-2.2+2/3},0.5) to [out=30,in=150] ({-1.3+2/3},0.5);
\draw (0,1) -- (1,0);
\foreach \x in {-6,-5,-4,-3,-2,-1,0,1,2,3,4,5,6} {
	\node at ({-4+(\x)/3},0.3) {$\bullet$};
	\node at ({exp(\x)/(1+exp(\x))},{1/(1+exp(\x))}) {$\bullet$};
	}
\foreach \x in {-6,0,6} {
    \node at ({-4+(\x)/3},-0.05) {$\x$};
    }
\end{tikzpicture}
\caption{Data $x_0$ in $\R$ gets mapped into the probability $1$-simplex. Although values near $0$ stay separated, data far away from $0$ gets clustered near the endpoints. The image of the integers is shown for $[-6,6]\cap\Z$. 
}
\label{fig:ampenccompressplusoneAndrew}
\end{figure}

\end{example}

\subsection{Distance for topological data analysis}
\label{ssec:TDA}

Topological data analysis (TDA) is a technique to identify geometric features of a data-generating source that are preserved under continuous perturbations~\cite{CarlssonTDA,carlsson2021topological,ghrist2008barcodes,GlassVidaurre24}. 
A quantum encoding transfers the data to a quantum system, which potentially disrupts these topological features. This section will provide guidance on how to minimally disrupt these features by quantifying the distances between data points before and after the encoding. 

In more detail, TDA utilizes data embedded in a metric space in order to obtain topological features of a postulated underlying space from which the data are generated. Although a finite set of data points does not inherently acquire a nontrivial topology, the distances between points can be used to construct a simplicial complex~\cite{Munkres84}, called the Vietoris--Rips complex, which is a combinatorial object that contains topological information about this underlying space. We will not need to go through all the components of TDA to get the basic flavor of quantum encodings that can be used for TDA. The important concepts to focus on for our purposes are the structures that are involved in TDA, namely distance functions, i.e. metrics, whose definition we recall below~\cite{BuBuIv01,Bryant85}. In this definition, notice how our theme  of \textit{sets equipped with additional structure} makes another appearance.

\begin{definition}
\label{defn:metrispace}
A \define{metric space} is a pair $(\mathcal{X},d_{\mathcal{X}})$, where $\mathcal{X}$ is a set and $d_{\mathcal{X}}:\mathcal{X}\times \mathcal{X}\to\R$ is a function satisfying the following conditions:
\begin{enumerate}[i.]
\item $d_{\mathcal{X}}(x_1,x_2)=0$ if and only if $x_1=x_2$ 
\item $d_{\mathcal{X}}(x_1,x_2)=d_{\mathcal{X}}(x_2,x_1)$ for all $x_1,x_2\in \mathcal{X}$
\item $d_{\mathcal{X}}(x_{1},x_{2})\le d_{\mathcal{X}}(x_{1},x_{3})+d_{\mathcal{X}}(x_2,x_3)$ for all $x_1,x_2,x_3\in \mathcal{X}$ (the \emph{triangle inequality}). 
\end{enumerate}
Such a function $d_{\mathcal{X}}$ is called a \define{distance function}, or \define{metric}, on $\mathcal{X}$. 
\end{definition}

The general pipeline for TDA is as follows (see Figure~\ref{fig:TDApipeline} for a visual flow-chart). 
\begin{enumerate}
\item
If $X\subseteq\mathcal{X}$ is a finite subset of $\mathcal{X}$, which is to be interpreted as a finite sample of data points, and $\mathcal{X}$ has a metric $d_{\mathcal{X}}$ on it, then $X$ acquires a metric $d_{X}$ by the formula $d_{X}(x_1,x_2)=d_{\mathcal{X}}(x_1,x_2)$ for all $x_1,x_2\in X$. 
\item
From $(X,d_X)$, one can construct a combinatorial object called a filtered simplicial complex $K_{\bullet}(X,d_{X})$, with one example being the Vietoris--Rips complex~\cite{Hausmann96,CarlssonTDA}.
\item
From the filtered simplicial complex, one can construct algebraic objects, which are the associated persistence vector spaces and homologies $H_{\bullet}(X,d_X)$~\cite{CarlssonTDA,ZoCa05,Carlsson2014}. 
\item 
From the persistence vector spaces, one can construct numerical quantities, the persistence diagrams $\mathrm{dgm}_{\bullet}(X,d_X)$~\cite{CSEHstability2007}.
\end{enumerate}

\begin{figure}
\xy 0;/r.25pc/:
(0,0)*+{(X,d_X)}="X";
(0,-9)*+{\begin{tabular}{c}finite\\metric\\space\end{tabular}};
(20,0)*+{K_{\bullet}(X,d_X)}="K";
(20,-9)*+{\begin{tabular}{c}filtered\\simplicial\\complex\end{tabular}};
(42,0)*+{H_{\bullet}(X,d_X)}="H";
(42,-9)*+{\begin{tabular}{c}persistence\\vector\\space\end{tabular}};
(66,0)*+{\mathrm{dgm}_{\bullet}(X,d_X)}="dgm";
(66,-9)*+{\begin{tabular}{c}persistence\\diagram\end{tabular}};
{\ar@{|->}"X";"K"};
{\ar@{|->}"K";"H"};
{\ar@{|->}"H";"dgm"};
\endxy
\caption{The persistent homology pipeline uses data and their distances $(X,d_X)$ to first construct a combinatorial object $K_{\bullet}(X,d_X)$, which is then used to construct an algebraic object $H_{\bullet}(X,d_X)$, which is finally used to construct numerical quantities $\mathrm{dgm}_{\bullet}(X,d_X)$ (see text for more details)~\cite{MemoliSinghal19}.}
\label{fig:TDApipeline}
\end{figure}

The original standard \emph{quantum} TDA protocol~\cite{LloydGarneroneZanardi16} jumps in once the filtered simplicial complex $K_{\bullet}(X,d_X)$ has been constructed, and then it focuses on a small component of the persistence diagrams. Namely, rather than computing the full persistence diagrams (which provide a full invariant of the persistent homology up to isomorphism~\cite{MemoliSinghal19}), the quantum TDA algorithm is aimed at computing the persistent Betti numbers, which are the dimensions of the persistent homology groups~\cite{edelsbrunner2002topological}, and finding the eigenvalues and eigenvectors of the combinatorial Laplacian (obtained from the graph associated with the simplicial complex). The quantum TDA algorithm takes as its starting point the simplicial complex, rather than the raw data. Briefly, if $S$ denotes a simplicial complex constructed from a data set with $n$ elements for some distance threshold $\epsilon$, then each $k$-simplex $s\in S$ is uniquely determined by its $k+1$ vertices. Supposing that there is a total ordering on the vertices, $v_1, v_2,\dots, v_n$, such a simplex $s$ can therefore be expressed as $s=\{v_{i_1},\dots,v_{i_{k+1}}\}$, where $i_1<\cdots<i_{k+1}$ and the distinct vertices $v_{i_1},\dots,v_{i_{k+1}}$ define the simplex $s$. Therefore, the quantum state $\ket{s}$ associated with the simplex $s$ would be the $n$-qubit state with the $i_{1},\dots,i_{k+1}$ qubits in state $\ket{1}$ and all other qubits in state $\ket{0}$ (this is an example of \emph{bit-encoding}, which is discussed in more detail in Appendix~\ref{sec:nattransf}). For example, for $n=7$ and $k=3$, the 2-simplex $s=\{v_2,v_4,v_5\}$ would be represented by the state $\ket{s}=\ket{0101100}$. 

However, one could ask if it is possible to begin the quantum algorithm at the beginning of the pipeline by inputting the raw classical data directly onto the quantum computer and applying quantum algorithms to calculate persistence diagrams from the encoded quantum states. Although Ref.~\cite{LloydGarneroneZanardi16} suggested that this step could be done using amplitude encoding, Ref.~\cite{vlasic2023qtda} showed that amplitude encoding leads to persistence diagrams that differ quite drastically from the persistence diagrams of the original data set. The reason for this stems from the fact that amplitude encoding distorts the distances between the data points so much (as discussed in Example~\ref{ex:amplitudeencoding}) that the persistence diagrams obtained from the quantum encoded states no longer resemble the persistence diagrams of the original data set. The study of how much a persistence diagram is distorted under different mappings is known as \emph{stability} in the (classical) TDA literature, and its study leads to the following notions of structure-preserving map between finite metric spaces.

\begin{definition}
A \define{distance nonincreasing} function $f:(\mathcal{X},d_\mathcal{X})\to (\mathcal{Y},d_\mathcal{Y})$ from one metric space $(\mathcal{X},d_\mathcal{X})$ to another $(\mathcal{Y},d_\mathcal{Y})$ is a function $f:\mathcal{X}\to \mathcal{Y}$ that satisfies $d_{\mathcal{Y}}(f(x_{1}),f(x_{2}))\le d_{\mathcal{X}}(x_{1},x_{2})$ for all $x_1,x_2\in \mathcal{X}$.
An \define{embedding} from $(\mathcal{X},d_\mathcal{X})$ to $(\mathcal{Y},d_\mathcal{Y})$ is a function $f:\mathcal{X}\to\mathcal{Y}$ that satisfies $d_{\mathcal{Y}}(f(x_{1}),f(x_{2}))=d_{\mathcal{X}}(x_{1},x_{2})$ for all $x_1,x_2\in \mathcal{X}$, i.e., $f$ is \define{distance-preserving}. 
\end{definition}

From the categorical perspective that we will introduce soon, quantum encodings that preserve metric structure in either of these ways are said to be \emph{morphisms} in a particular \emph{category}. Additionally, when two mathematical objects are viewed as being equivalent in a way that respects structure via such morphisms that go back and forth between those two objects, then those two objects are said to be \emph{isomorphic}. The following theorem identifies how preserving the metric during the encoding stage guarantees that the constructions in the persistent homology pipeline are isomorphic, whether they are built from the classical data or the quantum encoded data. 

\begin{theorem}
\label{thm:strictstability}
Let $(\mathcal{X},d_\mathcal{X})$ and $(\mathcal{Y},d_\mathcal{Y})$ be two metric spaces, let $X\subseteq\mathcal{X}$ be a finite subset equipped with the induced metric $d_{X}$ from $d_{\mathcal{X}}$. If $f:(X,d_{X})\to(\mathcal{Y},d_{\mathcal{Y}})$ is an embedding, let $Y:=f(X)$ be the image of $X$ under $f$ equipped with the induced metric $d_{Y}$ from $d_{\mathcal{Y}}$. Then the filtered simplicial complexes $K_{\bullet}(X,d_X)$ and $K_{\bullet}(Y,d_Y)$ are isomorphic, the persistence vector spaces $H_{\bullet}(X,d_X)$ and $H_{\bullet}(Y,d_Y)$ are isomorphic, and also the persistence diagrams $\mathrm{dgm}_{\bullet}(X,d_{X})$ and $\mathrm{dgm}_{\bullet}(Y,d_{Y})$ are all isomorphic.
\end{theorem}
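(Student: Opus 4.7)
The plan is to observe that since $f$ is distance-preserving, it is automatically injective: if $x_1 \neq x_2$, then $d_X(x_1,x_2) > 0$, so $d_Y(f(x_1),f(x_2)) > 0$, which forces $f(x_1) \neq f(x_2)$. By construction $Y = f(X)$, so $f : (X, d_X) \to (Y, d_Y)$ is actually a bijective isometry. All subsequent isomorphisms in the pipeline should then follow by functoriality of each step.

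First I would establish the isomorphism of filtered simplicial complexes. At each scale $\epsilon \geq 0$, a subset $\{x_{i_1},\dots,x_{i_k}\} \subseteq X$ is a simplex of the Vietoris--Rips complex $K_\epsilon(X, d_X)$ iff $d_X(x_{i_a}, x_{i_b}) \leq \epsilon$ for all pairs. Applying $f$ vertex-wise and using the isometry condition $d_Y(f(x_{i_a}), f(x_{i_b})) = d_X(x_{i_a}, x_{i_b})$, this condition transfers exactly to a simplex condition in $K_\epsilon(Y, d_Y)$. Bijectivity of $f$ then ensures that the induced map on simplex sets is a bijection, yielding an isomorphism of simplicial complexes at each scale. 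Since the map $f$ does not depend on $\epsilon$, these isomorphisms commute with the inclusion maps $K_\epsilon \hookrightarrow K_{\epsilon'}$ for $\epsilon \leq \epsilon'$, so they assemble into an isomorphism $K_\bullet(X, d_X) \cong K_\bullet(Y, d_Y)$ of filtered simplicial complexes.

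Next, I would invoke functoriality of the simplicial chain complex and homology constructions. Taking simplicial chains (over a chosen field) and then homology is a functor from filtered simplicial complexes to persistence vector spaces, so any isomorphism of filtered complexes descends to an isomorphism $H_\bullet(X, d_X) \cong H_\bullet(Y, d_Y)$ of persistence vector spaces. Finally, the persistence diagram is a complete invariant of the isomorphism class of a (pointwise finite-dimensional) persistence vector space, via the interval-decomposition/barcode theorem; hence isomorphic persistence vector spaces produce identical persistence diagrams, $\mathrm{dgm}_\bullet(X, d_X) = \mathrm{dgm}_\bullet(Y, d_Y)$.

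There is no substantive obstacle here; the statement is essentially a tautological propagation of the isometric bijection $f : X \to Y$ through each stage of the persistent homology pipeline. The only point requiring care is ensuring that one chooses a version of the Vietoris--Rips filtration whose simplices are defined \emph{solely} by pairwise distances among a finite vertex set, so that the isometry $f$ genuinely induces an isomorphism of filtered complexes; this is the standard formulation, so the result follows.
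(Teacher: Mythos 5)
Your proposal is correct and follows essentially the same route as the paper: the paper's (deferred) proof consists precisely of noting that $f$ restricts to an isometric bijection $(X,d_X)\to(Y,d_Y)$ and that each arrow in the TDA pipeline of Figure~\ref{fig:TDApipeline} is a functor, hence preserves isomorphisms. You simply spell out the Vietoris--Rips and barcode steps in more detail than the paper, which defers them to the cited references.
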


For brevity, we refer the reader to Refs.~\cite{CarlssonTDA,CSEHstability2007,MemoliSinghal19} for more details on the definitions and proof, the latter of which follows from the fact that each of the arrows in the TDA pipeline in Figure~\ref{fig:TDApipeline} defines a \emph{functor}, a concept that we will define in Section~\ref{sec:functbg}. Instead, we will elaborate on the meaning of Theorem~\ref{thm:strictstability} in the context of quantum encodings for TDA. The main point is that if the initial quantum encoding preserves the distances exactly with respect to some suitably chosen metric on the set of quantum states, then the associated persistent homology of the quantum states obtained from encoding the classical data must agree with the persistent homology of the original data. Because different types of metrics exist on the space of quantum states (some examples will be given in the next section), one must choose both a suitable metric on the space of quantum states as well as a suitable encoding that preserves the distance from the classical data to the quantum states. 

However, if one does not use a distance-preserving quantum encoding, then a more general stability theorem dictates how the persistence homologies and diagrams may change~\cite{CSEHstability2007,MemoliSinghal19}. Namely, there is a bound relating the \emph{Gromov--Hausdorff distance} between the original metric space of classical data and the metric space of encoded quantum data to the \emph{bottleneck distance} between the associated persistence diagrams using distance nonincreasing maps. It remains an open problem to explicitly construct a distance nonincreasing quantum encoding that has a persistence diagram close to the one obtained from the original classical data.
For brevity, we will not elaborate on the details of this here. Instead, we focus on another setting in which structure could be preserved in the context of quantum metric learning.

\subsection{Distance for metric learning}
\label{ssec:dist_metriclearning}

In the previous section, we saw how topological properties of data can be preserved as long as one uses encodings that are distance-preserving or, more generally, distance nonincreasing. 
However, this is not always possible, since an encoding might drastically distort distances. 
For example, faces of an individual from slightly different angles might be considered similar by us, while the vectors of pixels might be widely separated. A similar situation occurs when encoding classical data into a quantum system. 
The idea of \emph{metric learning} is to encode the notions of similarity, dissimilarity, or relative constraints between data points in order to recover a metric that accurately describes the proximity of data~\cite{Bellet2015metric,Xingetal2002,Weinberger2009distance,Chopra2005,Kulis13,KaBi19}. This is most practically done when the metric depends on a reasonable set of parameters that can be optimized, such as in linear metric learning (cf.\ Appendix~\ref{app:CLMLMM}). 
In most of these situations, recovering a metric employs the mathematical concept of an embedding, that is, a distance-preserving function, between metric spaces, as described in the previous section.

It follows from the definition of a metric space (cf.\ Definition~\ref{defn:metrispace}) that an embedding is automatically injective as a function. Moreover, any injective function from an arbitrary set $\mathcal{X}$ into a metric space $(\mathcal{Y},d_{\mathcal{Y}})$ induces a metric on $\mathcal{X}$, as we will recall in Lemma \ref{lem:inducedmetric} below. This is an important idea in the context of metric learning because one might not know which metric to use on $\mathcal{X}$, and the space of metrics is overwhelmingly vast and unsuitable for optimization in most cases. Instead, one parametrizes the metric through other means~\cite{Bellet2015metric}. Namely, one fixes a suitable metric space $(\mathcal{Y},d_{\mathcal{Y}})$, such as Euclidean space in the case of classical machine learning or the state space of a quantum system in the case of quantum machine learning (where the choice of metric in the quantum case depends on many factors)~\cite{Kulis13,lloyd20}. What varies, then, is the encoding map $f:\mathcal{X}\to\mathcal{Y}$, which is assumed to be an injective function. One chooses $f$ according to some class of models, and the following mathematical fact allows us to define a metric $d_{\mathcal{X}}$ on $\mathcal{X}$ from the metric $d_{\mathcal{Y}}$ on $\mathcal{Y}$ and the encoding $f$. 

\begin{lemma}
\label{lem:inducedmetric}
Let $\mathcal{X}$ be a set, $(\mathcal{Y},d_{\mathcal{Y}})$ a metric space,  $\mathcal{X}\xrightarrow{f}\mathcal{Y}$ a function, and $d_{\mathcal{X}}:\mathcal{X}\times \mathcal{X}\to[0,\infty)$ the function 
\begin{equation}
\label{eq:dXpullbackmetric}
d_{\mathcal{X}}(x_1,x_2):=d_{\mathcal{Y}}\big(f(x_1),f(x_2)\big)
\end{equation}
defined for all $x_1,x_2\in \mathcal{X}$. 
Then, 
$d_{\mathcal{X}}$ is a metric on $\mathcal{X}$ if and only if $f$ is one-to-one. In such a case, $(\mathcal{X},d_{\mathcal{X}})\xrightarrow{f}(\mathcal{Y},d_{\mathcal{Y}})$ is an embedding. 
\end{lemma}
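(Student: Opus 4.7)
The plan is to verify the three metric-space axioms for $d_{\mathcal{X}}$ and observe that two of them are automatic, while the first (identity of indiscernibles) is precisely equivalent to injectivity of $f$. Symmetry of $d_{\mathcal{X}}$ follows immediately from symmetry of $d_{\mathcal{Y}}$ via the defining formula~\eqref{eq:dXpullbackmetric}, and the triangle inequality
\begin{equation*}
d_{\mathcal{X}}(x_1,x_3)=d_{\mathcal{Y}}(f(x_1),f(x_3))\le d_{\mathcal{Y}}(f(x_1),f(x_2))+d_{\mathcal{Y}}(f(x_2),f(x_3))
\end{equation*}
is inherited by applying $f$ coordinatewise and invoking the triangle inequality for $d_{\mathcal{Y}}$. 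Also, $d_{\mathcal{X}}(x,x)=d_{\mathcal{Y}}(f(x),f(x))=0$, so nonnegativity and vanishing on the diagonal are automatic.

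The crux is the biconditional ``$d_{\mathcal{X}}(x_1,x_2)=0 \iff x_1=x_2$.'' Using the defining formula and the fact that $d_{\mathcal{Y}}$ is a metric, I would record the chain of equivalences
\begin{equation*}
d_{\mathcal{X}}(x_1,x_2)=0 \;\Longleftrightarrow\; d_{\mathcal{Y}}(f(x_1),f(x_2))=0 \;\Longleftrightarrow\; f(x_1)=f(x_2).
\end{equation*}
From here the lemma splits into two directions. For the forward direction, assume $f$ is injective; then $f(x_1)=f(x_2)$ forces $x_1=x_2$, so the above chain yields $d_{\mathcal{X}}(x_1,x_2)=0\iff x_1=x_2$, and combined with the symmetry and triangle-inequality arguments, $d_{\mathcal{X}}$ is a metric. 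For the reverse direction, suppose $d_{\mathcal{X}}$ is a metric; then $f(x_1)=f(x_2)$ implies $d_{\mathcal{X}}(x_1,x_2)=0$ via the same chain, which by the metric axiom on $d_{\mathcal{X}}$ gives $x_1=x_2$, proving injectivity of $f$.

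Finally, once injectivity is established and $(\mathcal{X},d_{\mathcal{X}})$ is a metric space, the claim that $f:(\mathcal{X},d_{\mathcal{X}})\to(\mathcal{Y},d_{\mathcal{Y}})$ is an embedding is immediate: distance preservation is \emph{literally} the definition~\eqref{eq:dXpullbackmetric} of $d_{\mathcal{X}}$. There is really no hard step here; the only subtlety worth stating clearly is the logical structure of the biconditional, which relies crucially on the identity-of-indiscernibles axiom for $d_{\mathcal{Y}}$ to convert a vanishing-distance statement into an equality of images $f(x_1)=f(x_2)$. This conversion is what makes injectivity of $f$ both necessary and sufficient.
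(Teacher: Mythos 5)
Your proof is correct and follows essentially the same route as the paper's: symmetry and the triangle inequality are inherited directly from $d_{\mathcal{Y}}$ through the defining formula, and the identity-of-indiscernibles axiom is reduced to the equivalence $d_{\mathcal{X}}(x_1,x_2)=0\iff f(x_1)=f(x_2)$, which holds iff $f$ is injective, with the embedding claim being definitional. The only difference worth noting is that you state the triangle inequality with the correct orientation ($\le$), whereas the paper's displayed chain for that step has the inequality signs reversed, evidently a typo.
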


The method of obtaining a metric on $\mathcal{X}$ in this manner via some injective function $f$ is called \emph{global} (possibly nonlinear) metric learning~\cite{Chopra2005,Kedem2012,Kulis13}. In general, the parameters describing such a metric can come from two sources: (a) a family of metrics on the codomain $\mathcal{Y}$ (such as the Minkowski metrics induced by the $L_{p}$ norm~\cite{Bellet2015metric}) and (b) a family of injective maps $f:\mathcal{X}\to\mathcal{Y}$ (such as a space of injective linear transformations in linear metric learning, as described in Appendix~\ref{app:CLMLMM}). The proof of Lemma~\ref{lem:inducedmetric} is given in Appendix~\ref{app:metriclearningsms}. 
The lemma itself motivates the following definition.

\begin{definition}
Given a set $\mathcal{X}$, a metric space $(\mathcal{Y},d_{\mathcal{Y}})$, and a one-to-one function $\mathcal{X}\xrightarrow{f}\mathcal{Y}$, the metric $d_{\mathcal{X}}$ on $\mathcal{X}$ constructed in Lemma~\ref{lem:inducedmetric} is called the \define{pull-back metric} or the \define{embedding metric}.
\end{definition}

We can now use this idea to make rigorous sense of quantum metric learning~\cite{lloyd20}. In this setting, the codomain $\mathcal{Y}$ is taken to be $\states(\mathcal{H})$, the convex space of states on a Hilbert space $\mathcal{H}$. Utilizing Lemma~\ref{lem:inducedmetric}, one uses a quantum encoding map $\rho:\mathcal{X}\to\states(\mathcal{H})$ together with a metric on $\states(\mathcal{H})$ to define the distances between data points in $\mathcal{X}$ via the embedding metric $d_{\mathcal{X}}$. Intuitively, one hopes to arrive at a metric $d_{\mathcal{X}}$ such that $(\mathcal{X},d_{\mathcal{X}})$ is close to $(\mathcal{X},d_{\mathcal{X}_{\mathrm{true}}})$, where $d_{\mathcal{X}_{\mathrm{true}}}$ is some idealized but unknown metric and where closeness can be defined with respect to the Gromov--Hausdorff distance, for example~\cite{BuBuIv01}.  
There are many options for the models used for the quantum encoding, including those from Appendix~\ref{app:onepgroupqencode}. 
Moreover, there are several options for metrics on the space of states $\states(\mathcal{H})$. 
The following definition provides some possibilities of metrics on quantum states~\cite{Hillerydistance1987,Dieks1983distance,BraunsteinCaves1994,Bures1969,Uhlmann1976,Hubner1992}. 

\begin{definition}
\label{defn:quantumdistances}
Fix a Hilbert space $\mathcal{H}$ and let $\rho$ and $\sigma$ be two density matrices on $\mathcal{H}$. 
The \define{trace/$\ell_{1}$ distance} between $\rho$ and $\sigma$ is 
\begin{equation}
d_{\Tr}(\rho,\sigma)=
\lVert\rho-\sigma\rVert_{1}\equiv
\Tr\left[\sqrt{(\rho-\sigma)^{\dag}(\rho-\sigma)}\right].
\end{equation}
The \define{Hilbert--Schmidt/Frobenius/$\ell_{2}$ distance} is
\begin{equation}
d_{HS}(\rho,\sigma)=\sqrt{\Tr\left[(\rho-\sigma)^{\dag}(\rho-\sigma)\right]}.
\end{equation}
The \define{Bures/Helstrom/infidelity distance} is
\begin{equation}
d_{B}(\rho,\sigma)=\sqrt{2\left(1-\sqrt{F(\rho,\sigma)}\right)},
\end{equation}
where
\begin{equation}
F(\rho,\sigma)=\left(\Tr\left[\sqrt{\sqrt{\rho}\sigma\sqrt{\rho}}\right]\right)^{2}
\end{equation}
denotes the \define{fidelity} between the states $\rho,\sigma\in\mathcal{S}(\mathcal{H})$. 
\end{definition}

These distance functions (as well as scalar multiples of them) provide several examples of metrics on the space $\states(\mathcal{H})$ of quantum states. There are several other interesting possibilities, which can be found in Refs.~\cite{Kuzmak2021,watrous2018theory,ZyczkowskiSlomczynski1998,Dodonovetal1999} and the references therein. 
Let $d_{\mathcal{H}}$ be the notation for any one of these metrics. The goal of (classical) \emph{global metric learning} is then to find an injective \emph{function} $\rho:\mathcal{X}\to\states(\mathcal{H})$ from which we can pull back the metric $d_{\mathcal{H}}$ to the data domain $\mathcal{X}$. One may additionally add on a parameter space $\Theta$ to define a function of the form $\rho:\mathcal{X}\times\Theta\to\states(\mathcal{H})$ from which a specific parameter $\theta\in\Theta$ is chosen to optimize some machine learning task. For example, this is essentially what is done in Ref.~\cite{lloyd20}. 

However, there are a few important remarks to be made about quantum metric learning as used in Ref.~\cite{lloyd20}. First, it is in fact a special case of global nonlinear metric learning where the space for the encoded data is the state space of a quantum system (as opposed to a standard Euclidean space $\R^{d}$, for example)~\cite{Kulis13,Bellet2015metric}.  
Second, many of the encodings used in Ref.~\cite{lloyd20} are technically not embeddings in the sense defined here. This is because the quantum encodings are not always \emph{injective} functions. We will soon illustrate this in an example, which shows that the distances between \emph{different} points can vanish. This is important because then Lemma~\ref{lem:inducedmetric} fails, and we need a modification. This is obtained by the notion of a \emph{semi-metric space} (also called a \emph{fuzzy metric space} and \emph{pseudo-metric space})~\cite{Spivak09,BuBuIv01,Bellet2015metric}, whose importance was also recognized in the development of UMAP~\cite{McInnesUMAP}. 

\begin{definition}
\label{defn:semimetricspace}
A \define{semi-metric space} is a pair $(\mathcal{X},d_{\mathcal{X}})$ as in the definition of a metric space, but $d_{\mathcal{X}}$ satisfies the same properties in Definition~\ref{defn:metrispace} except that $d_{\mathcal{X}}(x_1,x_2)=0$ no longer implies $x_1=x_2$, i.e., there may exist distinct points whose distance is zero. 
\end{definition}

The importance of this definition is to allow for the flexibility of quantum encodings that are not injective. Although the sampled \emph{data} should perhaps be embedded in a one-to-one fashion, it may be that the larger data domain might not strictly embed under the encoding. This is illustrated in Figure~\ref{fig:semimetric1ddataset}, where we have reproduced a version of Figure~4 of Ref.~\cite{lloyd20}. The next lemma shows that a semi-metric space structure on the domain of an encoding can be obtained even when the encoding is not necessarily injective.

\begin{figure*} 
\begin{tabular}{c}
    \begin{tabular}{ccc}
    \begin{tikzpicture}[scale=1.25]
    \node at (0,1.5) {};
    \node at (0,-1.5) {};
    \node at (0,-1.75) {(a)};
    \draw[->] (-2.15,0) -- (2.15,0) node[right]{$x$};
    \foreach \t in {-2,-1,0,1,2}{
      \draw[-] (\t,0.05) -- (\t,-0.05) node[below]{$\t$};
      }
    \foreach \x in {-0.168069, 0.0294028, 0.0919156, 0.562233, -0.31038, 0.462528, -0.636511, 0.176429, 0.0919156, -0.0585628, -0.0656013, -0.0656013, 0.260227, -0.168302, 0.427166} {
      \node[orange] at (\x,0) {$\boldsymbol{\circ}$};
      }
    \foreach \x in {1.79, 1.91651, 1.529, -1.65776, 1.40029, 1.58424, -1.83566, 1.63235, -1.53134, -1.33236, -1.1282, 1.85109, -1.33236, 1.53148, 1.53148} {
      \node[blue] at (\x,0) {$\boldsymbol{\times}$};
      }
    \end{tikzpicture}
    \qquad
    &
    \qquad
\includegraphics{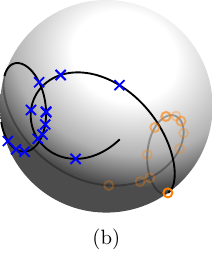}
    \qquad
    & 
    \qquad
\includegraphics{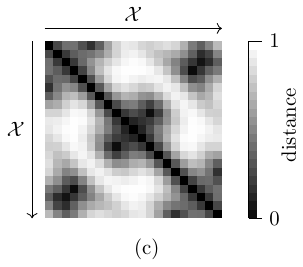}
    \end{tabular}
\\
\begin{tikzpicture}
\node at (0,0){%
\begin{quantikz}
\setwiretype{n} & \lstick{\ket{0}} & \setwiretype{q} & \gate{R_{X}(x)} & \gate{R_{Y}(\theta_1)} &  \gate{R_{X}(x)} & \gate{R_{Y}(\theta_2)} &  \gate{R_{X}(x)} & \gate{R_{Y}(\theta_3)} &  \gate{R_{X}(x)} & \gate{R_{Y}(\theta_4)} &  \gate{R_{X}(x)} & 
\end{quantikz}
};
\node at (0,0.70) {};
\node at (0,-0.70) {(d)};
\end{tikzpicture}
\end{tabular}
\caption{(a) Two classes of data on $\mathcal{X}=[-2,2]$ are presented. One class has orange data points labeled {\color{orange}$\boldsymbol{\circ}$} clustered near $0$ and the other class has blue data points labeled {\color{blue}$\boldsymbol{\times}$} clustered near $2$ and $-2$.
(b) The image of the quantum encoding $\rho:[-2,2]\to\states(\C^{2})$ with $\states(\C^{2})$ visualized as the Bloch ball and the encoding $\rho$ defined in~\eqref{eq:rhothetaxLloyd} using the optimized parameters 
$\theta_1=0.31$, $\theta_2=1.48$, $\theta_3=0$, and $\theta_4=0$.
(c) A visualization of the pull-back metric from the quantum encoding $\rho:\mathcal{X}\to\states(\C^{2})$ back onto the domain $\mathcal{X}$. The metric has been uniformly rescaled so that the maximum distance between points is $1$.
(d) The quantum circuit architecture from Ref.~\cite{lloyd20} that we used to achieve the quantum encoding shown in (b). 
}
\label{fig:semimetric1ddataset}
\end{figure*}

\begin{lemma}
\label{lem:inducedsemimetric}
Let $\mathcal{X}$ be a set, $(\mathcal{Y},d_{\mathcal{Y}})$ a metric space,  $\mathcal{X}\xrightarrow{f}\mathcal{Y}$ a function, and $d_{\mathcal{X}}:\mathcal{X}\times \mathcal{X}\to[0,\infty)$ the function from~\eqref{eq:dXpullbackmetric}. Then $d_{\mathcal{X}}$ is a semi-metric on $\mathcal{X}$. 
\end{lemma}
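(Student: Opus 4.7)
The plan is to verify the three defining axioms of a semi-metric directly by pulling them back through $f$ from the metric $d_{\mathcal{Y}}$. The whole argument is essentially the same verification used in the proof of Lemma~\ref{lem:inducedmetric}, but with the crucial observation that injectivity of $f$ enters only when one wants the implication $d_{\mathcal{X}}(x_1,x_2) = 0 \Rightarrow x_1 = x_2$. Since that implication is precisely what is dropped in Definition~\ref{defn:semimetricspace}, no hypothesis on $f$ should be needed.

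Concretely, I would first handle non-negativity and the reflexivity condition $d_{\mathcal{X}}(x,x) = 0$: non-negativity is inherited from $d_{\mathcal{Y}}$ since $d_{\mathcal{X}}(x_1,x_2) = d_{\mathcal{Y}}(f(x_1),f(x_2)) \geq 0$, and reflexivity follows from $d_{\mathcal{X}}(x,x) = d_{\mathcal{Y}}(f(x),f(x)) = 0$. At this point I would explicitly note that the converse can fail whenever $f$ is not one-to-one, since any pair $x_1 \neq x_2$ with $f(x_1) = f(x_2)$ yields $d_{\mathcal{X}}(x_1,x_2) = 0$; this is exactly why the conclusion is a semi-metric rather than a full metric, and it links the lemma to the non-injective quantum encodings illustrated in Figure~\ref{fig:semimetric1ddataset}.

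Next I would verify symmetry and the triangle inequality, both of which transfer cleanly through $f$. Symmetry is immediate: $d_{\mathcal{X}}(x_1,x_2) = d_{\mathcal{Y}}(f(x_1),f(x_2)) = d_{\mathcal{Y}}(f(x_2),f(x_1)) = d_{\mathcal{X}}(x_2,x_1)$. For the triangle inequality, given $x_1,x_2,x_3 \in \mathcal{X}$, I would apply the corresponding inequality for $d_{\mathcal{Y}}$ to the images $f(x_1),f(x_2),f(x_3) \in \mathcal{Y}$ and rewrite the result in terms of $d_{\mathcal{X}}$.

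There is no real obstacle here; the proof is a one-line transfer for each axiom. The only conceptual point worth emphasizing in the write-up is the contrast with Lemma~\ref{lem:inducedmetric}: dropping the injectivity hypothesis on $f$ loses exactly the separation axiom and nothing more, which is the precise reason semi-metrics are the right target category when modeling generic (possibly non-injective) quantum encodings such as those arising in quantum metric learning.
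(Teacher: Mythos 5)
Your proof is correct and follows essentially the same route as the paper, which simply reuses the verification of symmetry, the triangle inequality, and $d_{\mathcal{X}}(x,x)=0$ from the proof of Lemma~\ref{lem:inducedmetric} while discarding the final paragraph where injectivity of $f$ is invoked for the separation axiom. Your additional remark that non-injective $f$ produces distinct points at distance zero is exactly the observation that motivates Definition~\ref{defn:semimetricspace} in the paper.
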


\begin{proof}
The same argument in the proof of Lemma~\ref{lem:inducedmetric} applies with the exception of the last paragraph. 
\end{proof}

We leave the details behind Figure~\ref{fig:semimetric1ddataset} to Appendix~\ref{app:metriclearningsms} so that we may first summarize our examples and then provide the general categorical framework that encompasses them.

\subsection{Summarizing our examples}
\label{sec:examplesummary}

As we conclude this brief survey of examples, observe how each paradigm in Sections \ref{sec:symmetry}, \ref{sec:top_smooth}, \ref{ssec:TDA}, and \ref{ssec:dist_metriclearning} involved the choice of a mathematical structure that is paired with the data domain $\mathcal{X}$ and quantum state space $\states(H)$. Finding a structure-preserving quantum encoding is thus a multi-step process. One first selects a structure of interest, then one equips the data domain and set of states with that structure, and finally one looks for a function $\rho: \mathcal{X}\to \states(H)$ with the property that it preserves the structure. 

Although relatively straightforward, the perspective put forth in this article is that category theory provides a natural language in which to recast these design steps, potentially lending additional mathematical clarity and insight. Said explicitly, a categorical perspective reveals that designing a structure-preserving quantum encoding amounts to looking for mathematical objects $\mathcal{X}'$ and $\states(\mathcal{H})'$ in a particular \emph{category} that are, in fact, $\mathcal{X}$ and $\states(\mathcal{H})$ \emph{equipped with extra structure}, an idea that can be formalized by invoking the categorical concept of a \emph{functor}. One then requires that the desired encoding $\rho:\mathcal{X} \to \states(\mathcal{H})$, which is a priori blind to any structure, corresponds to a \emph{morphism} $\rho':\mathcal{X}' \to \states(\mathcal{H})'$ in that category. This new mapping $\rho'$ can be thought of as an improved version of $\rho$ that not only \textit{sees} the mathematical structure but also \textit{respects} it. This, too, can be made precise in the language of category theory. Our goal in the next section is to provide the reader with the categorical tools needed to grasp this perspective.

\section{Category theory captures structure}
\label{sec:CatsQE}

Informally speaking, a category consists of objects and relationships, called morphisms, between them. For example, there is a category of all topological spaces and continuous functions between them as well as a category of all $G$-sets and $G$-equivariant maps between them for any group $G$. Definitions and examples will be presented more formally below, but as mentioned above, categories can be used to describe data domains and quantum state spaces together with any additional structure that may be imposed on them. As already hinted, examples of such structures include symmetries, topologies, smooth structures for differentiability, metrics, linear algebraic structure such as convex combinations, and more. Meanwhile, a structure-preserving quantum encoding is a morphism in one (or more) of these categories. 

In Section \ref{sec:catsbg}, we will present the formal definition of a category with examples that appear in classical and quantum machine learning, focusing particularly on the examples listed throughout Section \ref{sec:SPQE}. As hinted previously in Section \ref{sec:examplesummary}, our main perspective of structured quantum encodings requires the concept of a functor (specifically, a \emph{forgetful functor}), which is a particular kind of passage between categories. As such, functors will be described in Section \ref{sec:functbg}, and many relevant examples will be provided. 
With this background in hand, Section~\ref{sec:lifting} will then reveal the main perspective proposed in this paper, namely that \textit{reducing the search space from general quantum encodings to structure-preserving quantum encodings is achieved by lifting the problem from a structure-less category to one containing more structure in the presence of some forgetful functor.} 

\subsection{Categories}
\label{sec:catsbg}

Before presenting the formal definition of a category~\cite{mac2013categories}, which may appear abstract at first glance, it is helpful to know that a classic example of a category involves sets and functions between them --- including quantum encodings, one of the most basic ingredients in QML applications.  
Let us walk through this example slowly, making some simple observations about functions along the way. To start, observe that for any two sets $\mathcal{X}$ and $\mathcal{Y}$, there is a set of functions $f:\mathcal{X}\to \mathcal{Y}$ from $\mathcal{X}$ to $\mathcal{Y}$. Functions, though quite minimalistic, nevertheless exhibit some structure when viewed as a whole. Namely, for any two functions $f: \mathcal{X}\to \mathcal{Y}$ and 
$g: \mathcal{Y}\to \mathcal{Z}$, their composite is also a function $g\circ f:\mathcal{X}\to \mathcal{Z}$. Moreover, this composition is associative in the sense that given three functions $f:\mathcal{X}\to \mathcal{Y}$ and $g:\mathcal{Y}\to \mathcal{Z}$ and $h:\mathcal{Z}\to \mathcal{W}$, the composites $h\circ (g\circ f)$ and $(h\circ g)\circ f$ are equal. Finally, for each set $\mathcal{X}$, there is a particular function $\id_\mathcal{X}:\mathcal{X}\to \mathcal{X}$ defined by $\id_\mathcal{X}(x)=x$ for all $x\in \mathcal{X}$ that acts as a unit/identity for this composition since $\id_{\mathcal{Y}}\circ f=f\circ \id_{\mathcal{X}}=f$ for all functions $f:\mathcal{X}\to \mathcal{Y}$. 
These observations show us that we have some mathematical ``objects'' (sets) and ``arrows'' or ``morphisms'' (functions) between them that interact together in a reasonable way via the composition rule. This is a classic example of a category and is denoted by $\mathbf{Set}$. In this way, a data domain $\mathcal{X}$ and the set $\states(\mathcal{H})$ of states of a quantum system are objects in the category $\mathbf{Set}$. Furthermore, a quantum state encoding $\rho:\mathcal{X}\to\states(\mathcal{H})$ as presented in Definition \ref{def:encoding} is a morphism in $\mathbf{Set}$. 

The formal definition of a category abstracts the example of sets and functions in the following way.

\begin{definition}\label{def:category}
    A category $\mathbf{C}$ consists of
    \begin{itemize}
    \item a collection of \define{objects}, denoted as $\mathcal{X},\mathcal{Y},\ldots$
    \item for every pair of objects $\mathcal{X},\mathcal{Y}$, a set of \define{morphisms} from $\mathcal{X}$ to $\mathcal{Y}$, depicted as arrows $f:\mathcal{X}\to \mathcal{Y}$
    \item a \define{composition rule}: for any morphisms $f~:~\mathcal{X}\to~\mathcal{Y}$ and $g:\mathcal{Y}\to \mathcal{Z}$, a specified morphism $g\circ f:\mathcal{X}\to~\mathcal{Z}$. 
    \end{itemize}
    Moreover, these items must satisfy the following axioms:
    \begin{itemize}
    \item Composition is \define{associative}; that is, for any three composable morphisms $f:\mathcal{X}\to \mathcal{Y}$, $g:\mathcal{Y}\to \mathcal{Z}$, $h: \mathcal{Z}\to \mathcal{W}$, the composites $h\circ (g\circ f)$ and $(h\circ g)\circ f$ are equal.
    \item There exist \define{identity morphisms}; that is, for every object $\mathcal{X}$ there exists a morphism $\id_\mathcal{X}:\mathcal{X}\to \mathcal{X}$ such that $f\circ \id_\mathcal{X}=\id_{\mathcal{Y}}\circ f=f$ for all morphisms $f:\mathcal{X}\to \mathcal{Y}$.
    \end{itemize}
\end{definition}

Below are three additional examples of categories that have already appeared in our analysis of quantum encodings. 

\begin{example}
[$G$-Sets and $G$-Equivariant Functions]
\label{ex:gsets}
For a fixed group $G$, there is a category $G\text{-}\mathbf{Set}$ whose objects are \emph{$G$}-sets, where a morphism between $G$-sets is a $G$-equivariant function. The following observations show that this is indeed a category:
    \begin{itemize} 
    \item The composite of two $G$-equivariant functions is again $G$-equivariant.
    \item Composition is associative due to the associativity of function composition.
    \item For each $G$-set $(\mathcal{X},\alpha)$, the identity function $\id_\mathcal{X}$ on $\mathcal{X}$ is $G$-equivariant.
    \end{itemize}
We have already seen a class of examples coming from geometric quantum machine learning~\cite{larocca2022group,ragone2023representation,meyer2023exploiting}, namely, $G$-equivariant encodings $\rho:\mathcal{X}\to\states(\mathcal{H})$, where $G$ acts on the data domain $\mathcal{X}$ by some action $\alpha$ and $G$ acts on $\states(\mathcal{H})$ by the adjoint action of some unitary representation of $G$ on $\mathcal{H}$ (recall Section \ref{sec:symmetry}). Thus, such a $G$-equivariant encoding $\rho$ is a morphism in the category $G\text{-}\mathbf{Set}$. Moreover, since the composite of equivariant functions is still an equivariant function, this is what enables one to build equivariant layers in quantum circuits and still obtain an overall equivariant function.
\end{example}

\begin{example}
[Topological Spaces and Continuous Functions] 
\label{ex:top}
There is a category $\mathbf{Top}$ whose objects are topological spaces, where a morphism between spaces is a continuous function. The following observations show that this is indeed a category:
    \begin{itemize}
    \item The composite of two continuous functions is again continuous.
    \item Composition is associative due to the associativity of function composition.
    \item For each topological space $(\mathcal{X},\tau_\mathcal{X})$, the identity function $\id_\mathcal{X}$ on $\mathcal{X}$ is continuous.
    \end{itemize}
In Section \ref{sec:top_smooth}, we saw that angle encodings and amplitude encodings are two examples of morphisms in the category $\mathbf{Top}$ when the data domain $\mathcal{X}$ (being Euclidean space) and the set of states $\states(\mathcal{H})$ are given their standard topologies~\cite{BeZy17,LeeISM13,Mu00}. These two examples of encodings are also morphisms in the category consisting of smooth manifolds and smooth maps between them.
\end{example}

\begin{example}[Metric Spaces and Embeddings]
\label{ex:Metemb}
There is a category $\mathbf{Met}^{\mathrm{emb}}$ whose objects are metric spaces, where a morphism between metric spaces is an embedding. The following observations show that this is indeed a category:
    \begin{itemize}
    \item The composite of two embeddings is an embedding.
    \item Composition is associative due to the associativity of function composition.
    \item For each metric space $(\mathcal{X},d_\mathcal{X})$, the identity function $\id_\mathcal{X}$ on $\mathcal{X}$ is an embedding.
    \end{itemize}
The category $\mathbf{Met}^{\mathrm{emb}}$ appeared in Section~\ref{ssec:TDA} for TDA and Section~\ref{ssec:dist_metriclearning} for quantum metric learning, where one equips the set of states $\states(\mathcal{H})$ with a metric and uses a quantum state encoding $\rho:\mathcal{X}\to\states(\mathcal{H})$ to define a metric on the data domain $\mathcal{X}$.
\end{example}

Our discussion surrounding TDA in Section~\ref{ssec:TDA} likewise involved metric spaces but additionally included more general morphisms between them. Namely, we considered distance nonincreasing functions rather than embeddings. More generally, one might wish to adjust the extent to which a function preserves distances, which can lead to different categories having the same objects but different morphisms. The following example summarizes this observation.

\begin{example}
[Metric Spaces and Distance Nonincreasing Functions]
\label{ex:MetNim}
There is a category $\mathbf{Met}^{\mathrm{nif}}$ whose objects are metric spaces $(\mathcal{X},d_\mathcal{X})$. A morphism in $\mathbf{Met}^{\mathrm{nif}}$ between metric spaces is a distance nonincreasing function.   
As before, this is a category since the following criteria are satisfied:
    \begin{itemize}
    \item The composite of two distance nonincreasing functions is a distance nonincreasing function.
    \item Composition is associative due to the associativity of function composition.
    \item For each metric space $(\mathcal{X},d_\mathcal{X})$, the identity function $\id_\mathcal{X}$ on $\mathcal{X}$ is distance nonincreasing.
    \end{itemize}
Notice that embeddings between metric spaces are also distance nonincreasing. For this reason, $\mathbf{Met}^{\mathrm{nif}}$ is called a \emph{subcategory} of $\mathbf{Met}^{\mathrm{emb}}$, meaning that all objects and morphisms of $\mathbf{Met}^{\mathrm{emb}}$ are also objects and morphisms of $\mathbf{Met}^{\mathrm{nif}}$, and the identities and composition rules agree on these common objects and morphisms. 
\end{example}

Each of the four examples presented here describe categories whose objects are sets with structure (a group action, a topology, a metric) and whose morphisms are functions that preserve (part of) that structure. We have also seen that some quantum encodings, when chosen appropriately, can be viewed as morphisms in these categories. A priori, however, a quantum encoding is merely a function between sets, as was introduced in Definition \ref{def:encoding}. In QML, therefore, one sometimes works within the category $\mathbf{Set}$, while at other times one works in a different category, such as those listed in Examples \ref{ex:gsets}, \ref{ex:top}, \ref{ex:Metemb}, or \ref{ex:MetNim}. So, to take account of when a passage is being made from one category to another, we must understand mappings between categories. This leads to the concept of a functor.

\subsection{Functors}
\label{sec:functbg}

While categories are useful for organizing mathematical objects and relationships (morphisms) between them, much of mathematics concerns relationships between \textit{categories} themselves. 
This involves assigning the objects and morphisms of one category to another in a manner that respects morphism composition and identities. 

To situate this in the context of QML, think back to $G$-equivariant quantum state encodings for some group $G$. Given a unitary representation $V:G\to\unitary(\mathcal{H})$ of $G$, one obtains a $G$-equivariant function $\rho: (\mathcal{X},\alpha)\to(\states(\mathcal{H}),\mathrm{Ad}_V)$ which, as we have seen, is a morphism in the category $G\text{-}\mathbf{Set}$. But there is an obvious way to pass from $G\text{-}\mathbf{Set}$ to the category $\mathbf{Set}$: simply discard, or forget, the group actions associated with each set (cf. Figure~\ref{fig:Gsetforget}), and assign each $G$-equivariant function to the function itself.
Although straightforward, let us denote this assignment by $F:G\text{-}\mathbf{Set}\to\mathbf{Set}$, so that for an arbitrary $G$-set $(\mathcal{X},\alpha)$, we have $F(\mathcal{X},\alpha)=\mathcal{X}$ and for any $G$-equivariant function $f$, we have $F(f)=f$. Then, $F$ behaves well with respect to the two categories in the following ways.

\begin{figure}
    \begin{tikzpicture}
    \node at (0,0) {
    $
\xy 0;/r.25pc/:
    (-15,20)*+{G\text{-}\mathbf{Set}}="SS";
    (15,20)*+{\mathbf{Set}}="S";
    {\ar"SS";"S"^{\text{forget action}}};
    (-15,8)*+{(\mathcal{X},\alpha)}="SX";
    (-40,8)*+{\text{data domain}};
    (15,8)*+{\mathcal{X}}="X";
    (-15,0)*+{(\states(\mathcal{H}),\Ad_{V})}="SH";
    (-40,0)*+{\text{state space}};
    (15,0)*+{\states(\mathcal{H})}="H";
    (-15,-8)*+{(\mathcal{Y},\beta)}="SY";
    (-40,-8)*+{\text{arbitrary $G$-set}};
    (15,-8)*+{\mathcal{Y}}="Y";
    {\ar@{|->}"SH";"H"^(0.59){\text{forget $\Ad_{V}$}}};
    {\ar@{|->}"SX";"X"^{\text{forget $\alpha$}}};
    {\ar@{|->}"SY";"Y"^{\text{forget $\beta$}}};
\endxy
   $
   };
   \end{tikzpicture}
   \caption{Forgetting the action on a $G$-set and just remembering the underlying set defines a functor from $G\text{-}\mathbf{Set}$ to $\mathbf{Set}$. This functor can be applied to every $G$-set, including data domains and state spaces, regardless of the $G$-actions they come equipped with.}
   \label{fig:Gsetforget}
\end{figure}

First, for any two composable $G$-equivariant functions $f$ and $g$, we have, somewhat trivially,
\begin{equation}
F(g\circ f)=g\circ f = F(g)\circ F(f).
\end{equation}
In words, this means that composing two $G$-equivariant functions and \textit{then} discarding the group actions on their domains and codomains results in the same function obtained by \textit{first} discarding the group actions and then composing the underlying functions. In more concise terms, $F$ preserves the composition rules of the two categories.

Second, for any $G$-set $(\mathcal{X},\alpha)$, we have that
\begin{equation}
F(\id_{(\mathcal{X},\alpha)})=\id_{(\mathcal{X},\alpha)}=\id_\mathcal{X} =\id_{F(\mathcal{X},\alpha)}.
\end{equation}
The first and third equalities follow from the definition of $F$ on morphisms and objects, respectively. The second equality follows from the fact that the identity functions on $\mathcal{X}$, when $\mathcal{X}$ is viewed as a $G$-set or as a (regular) set, are both defined by $x\mapsto x$ for all $x\in\mathcal{X}$. So, forgetting the group action and \textit{then} considering the identity function on the underlying set results in the same function obtained by \textit{first} considering the $G$-equivariant identity function and then forgetting about the group action. Succintly put, $F$ preserves identity morphisms.

In summary then, $F:G\text{-}\mathbf{Set}\to\mathbf{Set}$ is an assignment on the objects and morphisms of one category to another that preserves the categorical structure. This is an example of a functor.

\begin{definition} 
\label{def:functor}
A \define{functor} $F:\mathbf{C}\to\mathbf{D}$ between categories consists of 
\begin{itemize}
    \item an object $F(\mathcal{X})$ in $\mathbf{D}$ for every object $\mathcal{X}$ in $\mathbf{C}$
    \item a morphism $F(f): F(\mathcal{X})\to F(\mathcal{Y})$ in $\mathbf{D}$ for every morphism $f:\mathcal{X}\to \mathcal{Y}$ in $\mathbf{C}$.
\end{itemize}
Moreover, these items must satisfy the following axioms:
\begin{itemize}
    \item \emph{Composition is preserved;} that is, $F(g\circ f)={F(g)\circ F(f)}$ for all  morphisms $f:\mathcal{X}\to \mathcal{Y}$ and $g:\mathcal{Y}\to \mathcal{Z}$ in $\mathbf{C}$.
    \item \emph{Identities are preserved;} that is, $F(\id_\mathcal{X})=\id_{F(\mathcal{X})}$ for every object $\mathcal{X}$ in $\mathbf{C}$.
\end{itemize}
\end{definition}

A few more examples of functors will appear below, but to help connect these upcoming examples to the goals of this article, let us first establish some terminology relating to properties that functors may satisfy. 
Let $\mathbf{C}(\mathcal{X},\mathcal{Y})$ denote the set of all morphisms from object $\mathcal{X}$ to object $\mathcal{Y}$ in a category $\mathbf{C}$. Given another category $\mathbf{D}$, if a functor $F:\mathbf{C}\to\mathbf{D}$ has the property that for all objects $\mathcal{X}$ and $\mathcal{Y}$ in $\mathbf{C}$, the function $\mathbf{C}(\mathcal{X},\mathcal{Y})\to\mathbf{D}(F(\mathcal{X}),F(\mathcal{Y}))$ defined by $f\mapsto F(f)$ is injective, then $F$ is said to be \define{faithful}. If the function $f\mapsto F(f)$ is surjective, then $F$ is said to be \define{full}. If it is a bijection, then $F$ is said to be \define{fully faithful}.  
As the next example illustrates, we have already seen an example of a functor that is faithful but not, however, full.

\begin{example}
\label{ex:forgetfulGset}
The functor $F:G\text{-}\mathbf{Set}\to \mathbf{Set}$ described prior to Definition \ref{def:functor} (which ``forgets'' the group action of $G$-sets and $G$-equivariant functions to obtain the underlying sets and functions between them) is faithful  since two $G$-equivariant functions are equal whenever they are equal as functions on the underlying sets. However, $F$ is not full since there exist functions between the underlying sets of $G$-sets that are not equivariant. The lack of fullness is what allows us to narrow down the large space of quantum encodings to a significantly smaller subset of structure-preserving (in this case, equivariant) encodings. Hence, the smaller space of equivariant encodings generally simplifies the search for optimal encodings. 
\end{example}

One of the major themes throughout this article has been that many mathematical objects consist of sets equipped with extra structure, so the ability to forget or omit that structure certainly extends far beyond group theory. Below are additional examples of functors relating to our discussions of quantum encodings, which are analogous to the one mentioned in Example~\ref{ex:forgetfulGset}.

\begin{example}[From Metric Spaces to Sets] 
\label{ex:forgetfulMettoSet}
There are functors $\mathbf{Met}^{\mathrm{emb}}\to\mathbf{Set}$ and $\mathbf{Met}^{\mathrm{nif}}\to\mathbf{Set}$ that both assign to every metric space $(\mathcal{X},d_\mathcal{X})$ its underlying set of points $\mathcal{X}$, thus forgetting the distance function $d_\mathcal{X}$. Moreover, the functors also both assign every embedding and distance nonincreasing function to itself. This functor is faithful since two embeddings or two distance nonincreasing functions are equal whenever they are equal as functions on the underlying sets. But neither of these functors is full because not every function is distance preserving nor distance nonincreasing. The lack of fullness in this case restricts the set of all functions to those that do preserve distances or are distance nonincreasing. 
\end{example}

\begin{example}[From Topological Spaces to Sets] 
\label{ex:forgetfulToptoSet}
There is a functor $\mathbf{Top}\to\mathbf{Set}$ that assigns to every topological space $(\mathcal{X},\tau_\mathcal{X})$ its underlying set $\mathcal{X}$ of points, thus forgetting the topology $\tau_\mathcal{X}$. Moreover, the functor assigns every continuous function to itself. As in the previous examples, this functor is faithful but not full. Again, the lack of fullness is what causes the space of morphisms between objects to become smaller when adding structure in the sense that the set of \emph{continuous} functions between topological spaces is in general smaller than the set of \emph{all} functions between the underlying sets. 
\end{example}

Perhaps not surprisingly, each of the functors listed in Examples~\ref{ex:forgetfulGset}, \ref{ex:forgetfulMettoSet}, and \ref{ex:forgetfulToptoSet}
are commonly referred to as ``forgetful functors.'' Loosely speaking, a functor $\mathbf{C}\to \mathbf{D}$ is called \emph{forgetful} if it drops, or omits, some or all of the structure of the objects in $\mathbf{C}$. Though this is not a completely rigorous definition, the notion of a forgetful functor as used in this article will always be of this kind, so a formal definition of this notion is not needed here. The takeaway, though, is that each of the functors in the previous examples was also \textit{faithful}, and faithfulness can be used to formalize the notion of ``structure'' itself. One may say that objects in a category $\mathbf{C}$ are objects in a category $\mathbf{D}$ \define{equipped with extra structure} if there exists a faithful functor $\mathbf{C}\to\mathbf{D}$~\cite{nlab:structure}. In the special case when $\mathbf{D}=\mathbf{Set}$ is the category of sets, $\mathbf{C}$ is further said to be a \define{concrete category}~\cite{riehl}. All of the examples discussed so far --- $\mathbf{Set}$, $G\text{-}\mathbf{Set}$, $\mathbf{Met}^{\mathrm{emb}}$, $\mathbf{Met}^{\mathrm{nif}}$, and $\mathbf{Top}$
--- are concrete categories, as their objects are always sets $\mathcal{X}$ equipped with extra structure (a group action $\alpha,$ a metric $d_\mathcal{X}$, or a topology $\tau_\mathcal{X}$), 
and their morphisms are functions that preserve that structure. Although not all categories are concrete \footnote{As an example, the real line $\mathbb{R}$ can be viewed as a category whose objects are elements $x,y,\ldots\in\mathbb{R}$ and where there is a morphism $x\to y$ whenever $x\leq y$. This is a category due to the transitivity and reflexivity of the partial order, but it is not concrete since morphisms are not structure-preserving functions.}, we are primarily focusing on those that are.

At last, the reward of introducing this high level of abstraction  is that we now have precise mathematical language to describe the design of structure-preserving quantum encodings.

\subsection{Quantum Encoding from a Categorical Perspective}
\label{sec:lifting}

Let us finally reformulate the QML encoding scheme for states from a categorical perspective. (A similar reformulation exists for unitary quantum encodings.) This idea constitutes the main perspective proposed in this paper. 

\let\origdescription\description
\renewenvironment{description}{
  \setlength{\leftmargini}{0em}
  \origdescription
  \setlength{\itemindent}{0em}
  \setlength{\labelsep}{\textwidth}
}
{\endlist}

\begin{description}
    \item[Design Setup] For a given learning task, start with a data domain $\mathcal{X}$ and choose a Hilbert space $\mathcal{H}$. Identify a concrete category $\mathbf{C}$ and equip $\mathcal{X}$ and the set of states $\states(\mathcal{H})$ with extra structure to obtain objects $\mathcal{X}'$ and $\states(\mathcal{H})'$ in $\mathbf{C}$ such that $F(\mathcal{X}')=\mathcal{X}$ and $F(\states(\mathcal{H})')=\states(\mathcal{H})$ are in the image of a faithful forgetful functor $F:\mathbf{C}\to\mathbf{Set}$.
    \item[Design Goal] Find a function $\rho:\mathcal{X}\to \states(\mathcal{H})$ with the property that there exists a morphism $\rho': \mathcal{X}'\to \states(\mathcal{H})'$ in $\mathbf{C}$ such that $F(\rho')=\rho$. The morphism $\rho'$ is the desired, structure-preserving quantum encoding.
\end{description}

The process described in the Design Goal can be illustrated diagrammatically. Among all of the set-theoretic functions $\rho:\mathcal{X}\to\states(\mathcal{H})$ in $\mathbf{Set}$, there might exist a few that can be ``lifted'' to a morphism $\rho':\mathcal{X}'\to\states(\mathcal{H})'$ in $\mathbf{C}$, the category of structure-preserving morphisms, as in the following diagram.

\[
\xy 0;/r.25pc/:
    (-30,7.5)*+{\mathbf{C}}="SS";
    (-30,-7.5)*+{\mathbf{Set}}="S";
    {\ar"SS";"S"_{F}};
    (0,7.5)*+{\mathcal{X}'}="SX";
    (0,-7.5)*+{\mathcal{X}}="X";
    (30,7.5)*+{\states(\mathcal{H})'}="SH";
    (30,-7.5)*+{\states(\mathcal{H})}="H";
    {\ar@{|->}"SH";"H"};
    {\ar@{|->}"SX";"X"};
    {\ar"X";"H"_{\rho}};
    {\ar@{-->}"SX";"SH"^{\rho'}};
\endxy
\]

Since the functor $F$ is faithful, and rarely full, the space of all encodings $\rho$ in $\mathbf{Set}$ is reduced to a smaller space of structure-preserving encodings $\rho'$ in $\mathbf{C}$. In this way, searching for those encodings $\rho$ that lift to such a $\rho'$ might provide a method by which to simplify the search space when designing quantum encodings that are adapted to the problem being considered.

This seemingly abstract design goal is made to be compatible with our earlier examples. 

\begin{example}
Let $\mathcal{X}':=(\mathcal{X},\alpha)$ be a data domain that is a $G$-set, i.e., an object of $G\text{-}\mathbf{Set}$, which means that the data domain is equipped with a symmetry coming from a group action. Also, if $G\ni g\mapsto V_g\in\unitary(\mathcal{H})$ is a unitary representation of a group $G$ on the Hilbert space $\mathcal{H}$, with the induced action of $G$ on $\states(\mathcal{H})$ via $G\ni g\mapsto \Ad_{V_g}$, then $\states(\mathcal{H})':=(\states(\mathcal{H}),\Ad_{V})$ is an object of $G\text{-}\mathbf{Set}$. If $\rho:\mathcal{X}\to\states(\mathcal{H})$ is a quantum state encoding that is $G$-equivariant, which means that the symmetry is reflected in the quantum encoding, then the quantum encoding ``lifts'' to a morphism $\rho':\mathcal{X}'\to \states(\mathcal{H})'$ in $G\text{-}\mathbf{Set}$.
In this case, the general diagram we drew earlier becomes 
\[
\xy 0;/r.25pc/:
    (-30,7.5)*+{G\text{-}\mathbf{Set}}="SS";
    (-30,-7.5)*+{\mathbf{Set}}="S";
    {\ar"SS";"S"_{F}};
    (0,7.5)*+{(\mathcal{X},\alpha)}="SX";
    (0,-7.5)*+{\mathcal{X}}="X";
    (30,7.5)*+{(\states(\mathcal{H}),\Ad_{V})}="SH";
    (30,-7.5)*+{\states(\mathcal{H})}="H";
    {\ar@{|->}"SH";"H"};
    {\ar@{|->}"SX";"X"};
    {\ar"X";"H"_(0.45){\rho}};
    {\ar@{-->}"SX";"SH"^(0.45){\rho}};
\endxy
\]
where $F:G\text{-}\mathbf{Set}\to\mathbf{Set}$ is the forgetful functor from Example~\ref{ex:forgetfulGset}. 
Note that the map $\rho$ in $G\text{-}\mathbf{Set}$ is the same $\rho$ as in $\mathbf{Set}$ when viewed as a function (hence the same notation), which is because $F$ is faithful.  
Thus, by restricting to the subset of those $\rho$ that lift from $\mathbf{Set}$ to $G\text{-}\mathbf{Set}$, we search within a smaller space of suitable embeddings of $G$-equivariant morphisms in $G\text{-}\mathbf{Set}$. 

\end{example}

Our categorical proposal extends far beyond the setting of geometric quantum machine learning. We have already described other examples involving topological structure, smooth structure, and metric structure. Additionally, there is a category $\mathbf{LieOPS}$, defined in Appendix~\ref{app:onepgroupqencode}, that identifies the structure preserved under quantum unitary encodings $U:\R^{d}\to\unitary(\mathcal{H})$ of the form 
\begin{equation}
\label{eq:unitaryLieencodingconclusion}
U(x)=e^{-i \mathcal{L}(x)},
\end{equation}
where $\mathcal{L}:\R^{d}\to\mathcal{B}(\mathcal{H})$ is a linear transformation sending each point $x\in\R^{d}$ to a Hermitian operator $\mathcal{L}(x)$ acting on a Hilbert space $\mathcal{H}$ (here, $\mathcal{B}(\mathcal{H})$ denotes the set of bounded operators on $\mathcal{H}$). Such encodings are common in the quantum machine learning literature, and we have already seen several examples in the context of angle encoding and in geometric quantum machine learning. As proven in Appendix~\ref{app:onepgroupqencode}, such morphisms are obtained by first equipping sets with a structure related to one-parameter subgroups. In essence, the morphisms in $\mathbf{LieOPS}$ preserve these one-parameter subgroups, and Theorem~\ref{thm:LieOPSmorphism} proves that all unitary encodings $U:\R^{d}\to\unitary(\mathcal{H})$ that are morphisms in $\mathbf{LieOPS}$ (i.e., that preserve one-parameter subgroups) are of the form~\eqref{eq:unitaryLieencodingconclusion}. This structure reduces the space of quantum encodings to a \emph{finite-dimensional} space. Combining this with other structures allows one to further reduce the space of encodings relevant in a quantum machine learning task. Indeed, we have already seen an example worked out in Section~\ref{sec:symmetry} in the setting of geometric quantum machine learning, where the preservation of one-parameter subgroups together with equivariance led to a simple description of Lie algebra generators allowed for equivariant encodings.

\section{Discussion, open questions, and outlooks}
\label{sec:OQMF}

As quantum machine learning is a relatively young subject, there are many open questions to answer and possibly even more questions that are waiting to be formalized with the proper mathematical tools. As an example, a frequently asked question is one of quantum advantage. For instance, Bowles, Ahmed, and Schuld, through a number of experiments, argued that linearly separable data does not lend itself to quantum advantage~\cite{bowles2024}. As another example, the recent developed method of dequantization shows that certain classes of quantum algorithms perform, at best, just as well as some classical algorithms~\cite{Tang2022dequantizing,Tang2019,Tang2021,cotler2021revisiting,shindequantizing2024}. Based on these and many other examples, the current understanding is that structure in a dataset and/or task is important in determining possible quantum advantage~\cite{AaAm14,larocca2022group,thanasilp2024exponential,bowles2024,huang2021power}. In order to make progress in this direction, it therefore seems necessary to mathematically formalize the meaning of structure for broad applicability. 

As a step in this direction, we have argued that category theory offers a useful toolbox for organizing and capturing structure preservation for quantum encodings in the context of quantum machine learning. Different data sets and tasks admit structure that are relevant to specific problems, and identifying the appropriate category of such structure enables one to isolate a subset of quantum encodings preserving that structure. A prime example is geometric quantum machine learning~\cite{ragone2023representation,meyer2023exploiting,Nguyen_equivariantQNN2024,larocca2022group}, where the relevant structure was isolated as coming from the specific category of group actions on sets. We illustrated our perspective through several other examples to highlight that geometric quantum machine learning forms a single instance of our broader categorical formalism. Since finding a mathematical framework to isolate what we mean by structure in quantum machine learning tasks has been a long-standing problem~\cite{bowles2024,larocca2022group,SchuldKilloranAdvantage2022}, we hope that our perspective may offer insight towards extending the successes of geometric quantum machine learning to situations where other structures are preserved. 

There are many open questions and avenues of research that our perspective brings forth. 

First and foremost, we have only provided the basic ingredients for utilizing category theory in the context of QML at the level of the state encoding. A next step is to incorporate learning tasks, such as classification and/or regression. In particular, including full quantum machine learning models, which contain the quantum encoding as just one piece, should also be accounted for in the categories that preserve the structure. 

Second, now that structure preservation in the context of quantum encoding has been provided with a mathematical formalism, it is important to find the trade-offs between structure preservation, expressivity, generalization, performance, efficiency, classical simulability, etc.\ in this context~\cite{Anschuetz2023,cerezo2024doesprovableabsencebarren}. In particular, what QML algorithms admit a genuine quantum advantage on noisy intermediate scale quantum devices~\cite{PreskillNISQ2018}? Or should we aim to achieve a different goal~\cite{SchuldKilloranAdvantage2022}? As suggested in recent studies, a future direction for the QML community is to find data with structures that make the QML algorithms difficult to simulate classically~\cite{cerezo2024doesprovableabsencebarren}. We hope that the categorical perspective offered in this paper may help materialize such questions. 

\bigskip
\noindent
\textbf{Acknowledgements}

\noindent
AJP thanks Cheyne Glass, Seth Lloyd, and Alexander Schmidhuber for discussions.

\bigskip
\noindent
\textbf{Conflict of Interest Statement}

\noindent
AJP has received financial support from Deloitte in his involvement with this project.
He carried out this project as a consultant to Deloitte and not as part of his MIT responsibilities.

\bigskip
\noindent
\textbf{Disclaimer}

\noindent
About Deloitte: Deloitte refers to one or more of Deloitte Touche Tohmatsu Limited (``DTTL''), its global network of member firms, and their related entities (collectively, the ``Deloitte organization''). DTTL (also referred to as ``Deloitte Global'') and each of its member firms and related entities are legally separate and independent entities, which cannot obligate or bind each other in respect of third parties. DTTL and each DTTL member firm and related entity is liable only for its own acts and omissions, and not those of each other. DTTL does not provide services to clients. Please see \href{www.deloitte.com/about}{www.deloitte.com/about} to learn more.

Deloitte is a leading global provider of audit and assurance, consulting, financial advisory, risk advisory, tax and related services. Our global network of member firms and related entities in more than 150 countries and territories (collectively, the ``Deloitte organization'') serves four out of five Fortune Global 500\textregistered\;  companies. Learn how Deloitte's
approximately 460,000 people make an impact that matters at \href{www.deloitte.com}{www.deloitte.com}. 
This communication contains general information only, and none of Deloitte Touche Tohmatsu Limited (``DTTL''), its global network of member firms or their related entities (collectively, the ``Deloitte organization'') is, by means of this communication, rendering professional advice or services. Before making any decision or taking any action that
may affect your finances or your business, you should consult a qualified professional adviser. No representations, warranties or undertakings (express or implied) are given as to the accuracy or completeness of the information in this communication, and none of DTTL, its member firms, related entities, employees or agents shall be liable or
responsible for any loss or damage whatsoever arising directly or indirectly in connection with any person relying on this communication. 
Copyright \copyright\; 2024 Deloitte Development LLC. All rights reserved.

\appendix 

\section{One parameter groups for quantum encodings}
\label{app:onepgroupqencode}

Many of the unitary encodings $U:\mathcal{X}\to\unitary(\mathcal{H})$ considered in quantum machine learning are of the form 
\begin{equation}
\label{eq:unitaryLieencoding}
U(x)=e^{-i \mathcal{L}(x)},
\end{equation}
where $\mathcal{L}:\mathcal{X}\to\mathcal{B}(\mathcal{H})$ is a linear transformation sending each point $x\in\mathcal{X}$ (from a real vector space $\mathcal{X}$) to a Hermitian operator $\mathcal{L}(x)$ acting on a Hilbert space $\mathcal{H}$, where $\mathcal{B}(\mathcal{H})$ denotes the set of bounded operators on $\mathcal{H}$~\cite{meyer2023exploiting,SchuldPetruccione21,schuld2021effect}. One then obtains a quantum state encoding by fixing some fiducial state $\ket{\psi_0}\in\mathcal{H}$ via the formula $\rho(x)=U(x)\ket{\psi_0}\bra{\psi_0}U(x)^{\dag}$. Examples include the one from Section~\ref{sec:symmetry} on geometric quantum machine learning as well as angle encoding from Section~\ref{sec:top_smooth}. These types of examples of quantum encodings can also be viewed as preserving a certain structure. The type of structure preserved is a bit more technical than the types of structures we have considered in the body of the paper, so we include the discussion here in this appendix for completeness and because of the fact that unitary encodings of the form~\eqref{eq:unitaryLieencoding} are ubiquitous in the quantum machine learning community. We begin with a reminder of one-parameter subgroups of Lie groups (all Lie groups here will be matrix Lie groups for simplicity)~\cite{gilmore2006lie,bincer2013lie,hall2015lie,CMS95,ragone2023representation}.

\begin{definition}
Let $G$ be a Lie group. A continuous \define{one-parameter subgroup} of $G$ is a continuous group homomorphism $\gamma:\R\to G$, where $\R$ is viewed as a group under addition, i.e., 
\begin{enumerate}[i.]
\itemsep0pt
\item
$\gamma$ is a continuous function,
\item
$\gamma(0)=1$, and 
\item
$\gamma(s+t)=\gamma(s)\gamma(t)$ for all $s,t\in\R$. 
\end{enumerate}
\end{definition}

Recall that every one-parameter subgroup $\gamma:\R\to G$ of a Lie group $G$ is generated by a unique element $M\in\mathfrak{g}$ in the Lie algebra $\mathfrak{g}$ by the exponential map $\exp:\mathfrak{g}\to G$. Namely, 
\begin{equation}
\label{eqn:OPSexp}
\gamma(t)=\exp(t M)
\end{equation}
for all $t\in \R$. In particular, this implies $\gamma$ is not only continuous but smooth as well. Note that by differentiating the curve $\gamma$ at $t=0$ (cf.\ Section 1.3 in Ref.~\cite{BaMu94}), one obtains $M=\frac{d}{dt}\gamma\big|_{t=0}$. This example is precisely what is used when one has a data domain $\mathcal{X}$ of the form $\mathcal{X}=\R$, since in this case $x\mapsto e^{-ix L}$ with $L\in\mathfrak{su}(2^n)$ for a quantum unitary encoding mapping one-dimensional classical data onto $n$ qubits. It is also worth noting that the exponential map $\exp:\R^{d}\to\R^{d}$ from the Lie algebra of the \emph{additive} Lie group $\R^{d}$ to the additive Lie group $\R^{d}$ is the identity map, $\exp(x)=x$ for all $x\in\R^{d}$. 

However, for classical data that is provided in more than one dimension, we need to be careful about the ordering of operators obtained from such exponentials. To expound on this and to make more rigorous comparisons later, we recall the following result relating Lie group homomorphisms and Lie algebra homomorphisms. (For this theorem, recall that $[V,W]:=VW-WV$ denotes the commutator.)

\begin{theorem}
\label{thm:LieGrouptoAlgebra}
Let $G$ be any connected Lie group, let $H$ be any Lie group, and let $f:G\to H$ be a Lie group homomorphism. Then there exists a unique real-linear map $\mathcal{M}:\mathfrak{g}\to\mathfrak{h}$ such that 
\begin{equation}
\label{eqn:expnattrans}
f\big(\exp(W)\big)=\exp\big(\mathcal{M}(W)\big)
\end{equation}
for all $W\in\mathfrak{g}$. Moreover, $\mathcal{M}$ is a Lie algebra homomorphism, meaning that \begin{equation}
\big[\mathcal{M}(W_1),\mathcal{M}(W_2)\big]=\mathcal{M}\big([W_1,W_2]\big)
\end{equation}
for all $W_1,W_2\in\mathfrak{g}$. 
\end{theorem}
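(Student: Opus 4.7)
The plan is to define $\mathcal{M}$ as the differential of $f$ at the identity and then verify the required properties using one-parameter subgroups. Concretely, I would set $\mathcal{M} := df_{1_G}:\mathfrak{g}\to\mathfrak{h}$, using the standard identification of $\mathfrak{g}$ with the tangent space $T_{1_G}G$ (and similarly for $\mathfrak{h}$). Real-linearity of $\mathcal{M}$ is then automatic, since the differential of a smooth map between manifolds is linear on tangent spaces.

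The intertwining relation \eqref{eqn:expnattrans} is established via a one-parameter subgroup argument. Fix $W\in\mathfrak{g}$ and consider the smooth homomorphism $\gamma_W:\R\to G$ given by $\gamma_W(t)=\exp(tW)$. Composing with $f$ yields a continuous (in fact smooth) group homomorphism $f\circ\gamma_W:\R\to H$, i.e., a one-parameter subgroup of $H$. By the classical classification of one-parameter subgroups, there is a unique element $M_W\in\mathfrak{h}$ with $f(\exp(tW))=\exp(tM_W)$ for all $t\in\R$. Differentiating at $t=0$ identifies $M_W = df_{1_G}(W) = \mathcal{M}(W)$, and specializing to $t=1$ gives \eqref{eqn:expnattrans}. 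Uniqueness of $\mathcal{M}$ follows immediately: any real-linear $\mathcal{M}'$ satisfying \eqref{eqn:expnattrans} must agree with $\mathcal{M}$ after substituting $tW$ for $W$ and differentiating at $t=0$.

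The main technical step, and the one I expect to be the principal obstacle, is Lie bracket preservation. My preferred route is via the adjoint representation. Define $C_g:G\to G$ by $C_g(h)=ghg^{-1}$ and let $\mathrm{Ad}_g:=d(C_g)_{1_G}$; then the Lie bracket admits the characterization $[W_1,W_2] = \frac{d}{ds}\Big|_{s=0}\mathrm{Ad}_{\exp(sW_1)}(W_2)$. The identity $f\circ C_g = C_{f(g)}\circ f$ holds because $f$ is a group homomorphism; differentiating at $1_G$ gives the intertwiner $\mathcal{M}\circ\mathrm{Ad}_g = \mathrm{Ad}_{f(g)}\circ\mathcal{M}$ for every $g\in G$. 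Substituting $g=\exp(sW_1)$, applying \eqref{eqn:expnattrans} to rewrite $f(\exp(sW_1)) = \exp(s\mathcal{M}(W_1))$, evaluating on $W_2$, and differentiating at $s=0$ yields
\begin{equation}
\mathcal{M}\big([W_1,W_2]\big)=\big[\mathcal{M}(W_1),\mathcal{M}(W_2)\big],
\end{equation}
as desired. A more hands-on alternative is to apply $f$ to the commutator limit $[W_1,W_2] = \lim_{t\to 0}\tfrac{1}{t^2}\log\!\big(\exp(tW_1)\exp(tW_2)\exp(-tW_1)\exp(-tW_2)\big)$ and invoke \eqref{eqn:expnattrans} termwise, but this requires a delicate justification of interchanging $f$ with the limit, which the $\mathrm{Ad}$-based argument sidesteps. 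The connectedness hypothesis on $G$ is not explicitly needed for the existence or the Lie-algebra-homomorphism property of $\mathcal{M}$, but it is what guarantees that $\mathcal{M}$ fully captures $f$ globally, since a connected Lie group is generated by any neighborhood of the identity and hence by the image of $\exp$.
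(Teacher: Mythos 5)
Your proof is correct and is essentially the standard argument: define $\mathcal{M}$ as the differential of $f$ at the identity, use the classification of one-parameter subgroups of $H$ to obtain and uniquely characterize the intertwining relation \eqref{eqn:expnattrans}, and derive bracket preservation from the $\mathrm{Ad}$-intertwining identity obtained by differentiating $f\circ C_g = C_{f(g)}\circ f$. The paper itself does not prove this theorem, deferring to Theorem 3.28 of Ref.~\cite{hall2015lie}, whose proof is precisely the route you take; moreover, your existence and uniqueness steps coincide with the paper's own proof of the generalization, Theorem~\ref{thm:LieOPSmorphism}, which drops the homomorphism hypothesis (and hence the bracket-preservation conclusion, which is exactly the part of your argument that genuinely uses $f\circ C_g = C_{f(g)}\circ f$). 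Your closing remark about connectedness is also accurate: it is not needed for any of the stated conclusions, only to ensure that $\mathcal{M}$ determines $f$ globally.
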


This theorem is well known (see Theorem 3.28 in Ref.~\cite{hall2015lie} for example), so we will not give a proof here because we will need a generalization later whose proof will be different. 
In contrast to the setting of Theorem~\ref{thm:LieGrouptoAlgebra}, the types of morphisms between Lie groups that describe quantum encodings of the form~\eqref{eq:unitaryLieencoding} are \emph{not} Lie group homomorphisms in general, primarily due to the lack of commutativity of arbitrary unitary gates in quantum circuits. 
A simple example illustrating this can be seen with a 2-dimensional data domain $\mathcal{X}=\R^2$. Suppose that $U:\mathcal{X}\to\unitary(\C^2)$ is a quantum unitary encoding of the form 
\begin{equation}
\label{eqn:unitaryencodingex}
U\big((s,t)\big)=e^{-is X-itY}
\end{equation}
for all $(s,t)\in\R^{2}$. 
Then this defines a perfectly reasonable quantum unitary encoding map that additionally satisfies~\eqref{eqn:expnattrans}, as will be explained momentarily. However, note that~\eqref{eqn:unitaryencodingex} does not define a group homomorphism because 
\begin{align}
U\big((s,0)\big)U\big((0,t)\big)&=e^{-is X}e^{-it Y}\nonumber\\
&\ne e^{-i(sX+tY)}\nonumber \\ 
&=U\big((s,0)+(0,t)\big)
\end{align}
in general. 

In more detail, we still have the following properties associated with~\eqref{eqn:unitaryencodingex}. First, let $\mathcal{M}:\R^{2}\to\mathfrak{su}(2)$ be the linear map uniquely determined by $\mathcal{M}\big((1,0)\big)=-iX$ and $\mathcal{M}\big((0,1)\big)=-i Y$, so that $\mathcal{M}\big((s,t)\big)=-isX-itY$ for all $(s,t)\in\R^{2}$. Since the exponential map $\exp:\R^{2}\to\R^{2}$ going from the Lie algebra of the Lie group $\R^{2}$ (with group structure given by addition) to the Lie group $\R^{2}$ coincides with the identity map, equation~\eqref{eqn:expnattrans} still holds and it agrees with~\eqref{eqn:unitaryencodingex}. Therefore, the assumption that $f:G\to H$ is a group homomorphism in Theorem~\ref{thm:LieGrouptoAlgebra} is not necessary for~\eqref{eqn:expnattrans} to hold. Second, for every one-dimensional additive subgroup $\R_{\alpha}\subset\R^{2}$ generated by some nonzero vector $\alpha$, i.e., 
\begin{equation}
\R_{\alpha}=\big\{r\alpha\;:\;\alpha\in\R^{2}\setminus\{0\},\,r\in\R\big\},
\end{equation}
the restriction of $U$ to this subgroup does indeed define a Lie group homomorphism $U\big|_{\R_{\alpha}}:\R_{\alpha}\to \unitary(\C^2)$ since 
\begin{align}
U(r\alpha)U(u\alpha)&=e^{-i rZ} e^{-iu Z}\nonumber \\
&=e^{-i(r+u)Z} \nonumber\\
&=U\big(r\alpha+u\alpha\big), \label{eq:Uraua}
\end{align}
where $Z:=\alpha\cdot(X,Y)$ is the dot product of $\alpha$ with the vector of operators $(X,Y)$. The second equality in~\eqref{eq:Uraua} holds because every matrix $Z$ commutes with itself. In other words, the one-parameter subgroup $\R\xrightarrow{\alpha}\R^{2}$ of $\R^2$ sending $r\in\R$ to $\alpha(r):=r\alpha$ is a one-parameter subgroup of $\R^2$ that gets pushed forward to a one-parameter subgroup 
\begin{equation}
\R\xrightarrow{\alpha}\R^{2}\xrightarrow{U}\unitary(\C^2)
\end{equation}
of $\unitary(\C^2)$ that sends $r\in\R$ to $U(r\alpha)$ \footnote{There is a slight abuse of notation here that is commonly done in category theory. The vector $\alpha\in\R^2$ is identified with the additive map $\alpha:\R\to\R^2$ sending $1\in\R$ to $\alpha$.}.

The previous discussion hints that the quantum unitary encodings~\eqref{eq:unitaryLieencoding} are more closely related to one-parameter subgroups when restricted to one-dimensional Lie subalgebras. 
We make this precise by introducing the category $\mathbf{LieOPS}$ of Lie groups and one-parameter subgroup homomorphisms, which makes use of the notion of a smooth map between smooth manifolds~\cite{BaMu94,LeeISM13}. 

\begin{definition}
Let $\mathbf{LieOPS}$ be the category (cf.\ Section~\ref{sec:catsbg}) whose objects are 
Lie groups and where a morphism $f:G\to H$ from a Lie group $G$ to a Lie group $H$ is a \define{one-parameter subgroup (OPS) homomorphism}, i.e., $f$ is a smooth map and for every one-parameter subgroup $\R\xrightarrow{\gamma}G$ in $G$, the composite $\R\xrightarrow{\gamma}G\xrightarrow{f}H$ is a one-parameter subgroup in $H$. 
\end{definition}

One can show that $\mathbf{LieOPS}$ is indeed a category. It is important to note that although every Lie group homomorphism is an OPS homomorphism, the converse is not true. Namely, not every OPS homomorphism is a Lie group homomorphism, with~\eqref{eqn:unitaryencodingex} providing an explicit example. The following theorem is a generalization of Theorem~\ref{thm:LieGrouptoAlgebra} and provides a characterization for how OPS homomorphisms can always be written in a form analogous to quantum unitary encodings as in~\eqref{eq:unitaryLieencoding}. 

\begin{theorem}
\label{thm:LieOPSmorphism}
Let $G\xrightarrow{f}H$ be an OPS homomorphism of Lie groups. 
Then there exists a unique linear map $\mathcal{M}:\mathfrak{g}\to\mathfrak{h}$ such that 
\begin{equation}
\label{eqn:OPGform}
f\big(\exp(x)\big)=\exp\left(\mathcal{M}(x)\right)
\end{equation}
for all $x\in\mathfrak{g}$. 
\end{theorem}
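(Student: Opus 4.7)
The plan is to define $\mathcal{M}$ using the universal property of one-parameter subgroups, then to identify $\mathcal{M}$ with the differential of $f$ at the identity, from which linearity will be automatic.

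First, for each $x\in\mathfrak{g}$, consider the one-parameter subgroup $\gamma_{x}:\R\to G$ given by $\gamma_{x}(t)=\exp(tx)$. By the OPS property of $f$, the composite $f\circ\gamma_{x}:\R\to H$ is a one-parameter subgroup of $H$. By~\eqref{eqn:OPSexp}, every one-parameter subgroup of $H$ is of the form $t\mapsto\exp(tM)$ for a unique $M\in\mathfrak{h}$, so I can define $\mathcal{M}(x)\in\mathfrak{h}$ to be this unique generator. Setting $t=1$ gives $f(\exp(x))=\exp(\mathcal{M}(x))$, which is~\eqref{eqn:OPGform}.

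The key step is to verify that the assignment $x\mapsto\mathcal{M}(x)$ is linear. The cleanest route is to identify $\mathcal{M}$ with the differential $df_{1}:\mathfrak{g}\to\mathfrak{h}$ of $f$ at the identity $1\in G$. This requires first noting $f(1_{G})=1_{H}$: the constant one-parameter subgroup $\gamma_{0}(t)=1_{G}$ (coming from $x=0$) is sent by $f$ to a one-parameter subgroup of $H$, which by axiom must satisfy $(f\circ\gamma_{0})(0)=1_{H}$. Since $f$ is smooth by assumption, $df_{1}$ is a well-defined linear map. Now for any $x\in\mathfrak{g}$, differentiating $f\circ\gamma_{x}$ at $t=0$ via the chain rule gives $\frac{d}{dt}\big|_{t=0}f(\exp(tx))=df_{1}(x)$. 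On the other hand, since $f\circ\gamma_{x}$ is the one-parameter subgroup generated by $\mathcal{M}(x)$, differentiating at $t=0$ yields $\mathcal{M}(x)$. Equating the two gives $\mathcal{M}(x)=df_{1}(x)$, so $\mathcal{M}$ is linear.

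For uniqueness, if $\mathcal{M}'$ is another linear map satisfying~\eqref{eqn:OPGform}, then applying the identity to $tx$ for $t\in\R$ gives $\exp(t\mathcal{M}'(x))=f(\exp(tx))=\exp(t\mathcal{M}(x))$ for all $t$; differentiating at $t=0$ forces $\mathcal{M}'(x)=\mathcal{M}(x)$.

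The main conceptual obstacle is additivity: from the bare definition of $\mathcal{M}(x)$ via one-parameter subgroups, it is not obvious that $\mathcal{M}(x+y)=\mathcal{M}(x)+\mathcal{M}(y)$, because the one-parameter subgroup generated by $x+y$ is genuinely different from those generated by $x$ and $y$ individually (this is precisely the noncommutativity already flagged around~\eqref{eqn:unitaryencodingex}). Bypassing this via the differential at the identity is what makes the argument work; it is also why smoothness (rather than mere continuity) is the hypothesis built into the definition of $\mathbf{LieOPS}$.
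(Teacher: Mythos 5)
Your proposal is correct and follows essentially the same route as the paper's proof: both identify $\mathcal{M}$ with the differential $Df\big|_{1_{G}}$ of $f$ at the identity and use the chain rule to match it against the unique generator of the pushed-forward one-parameter subgroup, which is exactly how linearity is obtained despite the failure of additivity at the level of generators. The only differences are presentational --- you define $\mathcal{M}(x)$ first as the generator and then prove it equals the differential, whereas the paper proceeds in the reverse order --- together with your explicit uniqueness argument (differentiating $\exp(t\mathcal{M}'(x))=\exp(t\mathcal{M}(x))$ at $t=0$), which the paper leaves implicit.
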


Our proof of Theorem~\ref{thm:LieOPSmorphism} will be different than the proof of Theorem~\ref{thm:LieGrouptoAlgebra} given in Ref.~\cite{hall2015lie} because we cannot assume that $f:G\to H$ is a group homomorphism. We assume some tools from differential geometry in the following proof~\cite{BaMu94,LeeISM13,MilnorTopDiff1997}.

\begin{proof} [Proof of Theorem~\ref{thm:LieOPSmorphism}]
First let $\mathcal{M}:\mathfrak{g}\to\mathfrak{h}$ be the \emph{differential} (sometimes called the \emph{pushforward}) of $f$ at the identity $1_{G}\in G$, i.e., $\mathcal{M}:=Df\big|_{1_{G}}$.
To be somewhat self-contained, let us recall how $\mathcal{M}$ is defined (the following definition is valid whenever $f$ is a smooth map between smooth manifolds, not necessarily Lie groups). For any smooth curve $\gamma:\R\to G$ with $\gamma(0)=1_{G}$ and $\frac{d}{dt}\gamma\big|_{t=0}=x\in\mathfrak{g}$, we have $\frac{d}{dt}(f\circ\gamma)\big|_{t=0}=\mathcal{M}(x)$. The fact that $\mathcal{M}$ defines a linear map follows from Exercise~17 in Section 1.3 of Ref.~\cite{BaMu94} or Proposition 3.6 of Ref.~\cite{LeeISM13}.

Now suppose that $\gamma$ is a one-parameter subgroup of $G$ with $\frac{d}{dt}\gamma\big|_{t=0}=x$. By~\eqref{eqn:OPSexp} for $\gamma$, this means that 
\begin{equation}
\label{eq:gammaexptx}
\gamma(t)=\exp(tx)
\end{equation}
for all $t\in\R$. 
Since $f$ is an OPS homomorphism, the composite $\R\xrightarrow{\gamma}G\xrightarrow{f}H$ is a one-parameter subgroup of $H$. Hence, by~\eqref{eqn:OPSexp} for $f\circ\gamma$, there exists a unique element $y\in\mathfrak{h}$ such that 
\begin{equation}
\label{eq:fgammaexpty}
(f\circ\gamma)(t)=\exp(ty)
\end{equation}
for all $t\in\R$. Putting these results together, we have
\begin{align}
y&=\frac{d}{dt}\exp(ty)\big|_{t=0} =\frac{d}{dt}(f\circ\gamma)\big|_{t=0}\quad\mbox{by~\eqref{eq:fgammaexpty}}\nonumber\\
&=Df\big|_{1_{G}}\left(\frac{d}{dt}\gamma\big|_{t=0}\right)\quad\mbox{by the chain rule}\nonumber\\
&=Df\big|_{1_{G}}\left(\frac{d}{dt}\exp(tx)\big|_{t=0}\right)\quad\mbox{by~\eqref{eq:gammaexptx}}\nonumber\\
&=\mathcal{M}(x)\quad\mbox{by definition of $\mathcal{M}$.}
\end{align}
Since $x$ was arbitrary, this together with applying $f$ to~\eqref{eq:gammaexptx} proves that 
\begin{equation}
f\big(\exp(tx)\big)=\exp\big(t\mathcal{M}(x)\big)
\end{equation}
for all $t\in\R$ and $x\in\mathfrak{g}$.
\end{proof}

We mention an immediate corollary relevant for quantum unitary encodings by taking $G=\R^d$ to be the additive group and $\mathfrak{g}=\R^{d}$ to be its associated Lie algebra. Note that in this case, $\exp:\mathfrak{g}\to G$ is the identity map as a function. 

\begin{corollary}
\label{cor:LieOPSmorphismforQencoding}
Let $\R^d\xrightarrow{f}G$ be an OPS homomorphism of Lie groups. 
Then there exists a unique linear map $\mathcal{M}:\R^d\to\mathfrak{g}$ such that 
\begin{equation}
\label{eqn:OPSformencoding}
f(x)=\exp\big(\mathcal{M}(x)\big)
\end{equation}
for all $x\in\R^d$. Conversely, given any linear map $\mathcal{M}:\mathfrak{g}\to\mathfrak{h}$, the function $f:\R^d\to G$ specified by~\eqref{eqn:OPSformencoding} defines an OPS homomorphism $\R^d\xrightarrow{f}G$. 
\end{corollary}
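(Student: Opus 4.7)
The plan is to derive the forward direction as a near-immediate specialization of Theorem~\ref{thm:LieOPSmorphism}, and then verify the converse by a direct computation showing that every map of the form $x \mapsto \exp(\mathcal{M}(x))$ sends one-parameter subgroups of $\R^d$ to one-parameter subgroups of $G$.

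For the forward direction, I would apply Theorem~\ref{thm:LieOPSmorphism} with $G$ replaced by $\R^d$ and $H$ replaced by the target Lie group $G$. The key observation is that $\R^d$, viewed as an additive Lie group, has Lie algebra $\R^d$ itself, and its exponential map $\exp:\R^d \to \R^d$ is the identity. Thus~\eqref{eqn:OPGform} becomes $f(x) = \exp(\mathcal{M}(x))$ directly, with $\mathcal{M}:\R^d \to \mathfrak{g}$ the unique linear map guaranteed by the theorem. Uniqueness of $\mathcal{M}$ is inherited from the theorem.

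For the converse, suppose $\mathcal{M}:\R^d \to \mathfrak{g}$ is linear and define $f:\R^d \to G$ by $f(x) := \exp(\mathcal{M}(x))$. I would first note that $f$ is smooth, being a composition of a linear map with the smooth exponential map of $G$. To check the OPS condition, recall that every one-parameter subgroup $\gamma:\R \to \R^d$ has the form $\gamma(t) = tv$ for some $v \in \R^d$ (since the additive exponential is the identity). Then
\begin{equation}
(f \circ \gamma)(t) = \exp\big(\mathcal{M}(tv)\big) = \exp\big(t\,\mathcal{M}(v)\big),
\end{equation}
which by~\eqref{eqn:OPSexp} is precisely the one-parameter subgroup of $G$ generated by $\mathcal{M}(v) \in \mathfrak{g}$. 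Hence $f$ is an OPS homomorphism.

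I do not anticipate any serious obstacle here: both directions are essentially bookkeeping once Theorem~\ref{thm:LieOPSmorphism} is in hand. The only mild subtlety worth flagging is the identification of the Lie algebra of $\R^d$ with $\R^d$ and the fact that its exponential is the identity, which is what causes the formula~\eqref{eqn:OPGform} to simplify to~\eqref{eqn:OPSformencoding} without an outer $\exp$ on the domain side. I would make this identification explicit in one sentence so the reader sees immediately why the ``$\exp$ on the $\mathfrak{g}$ side'' disappears.
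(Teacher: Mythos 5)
Your proposal is correct and follows essentially the same route as the paper: the forward direction is obtained by specializing Theorem~\ref{thm:LieOPSmorphism} to the additive group $\R^d$, whose exponential map is the identity, and the converse is the direct computation $f(\gamma(t))=\exp(t\,\mathcal{M}(v))$ that the paper itself illustrates in the discussion preceding the definition of $\mathbf{LieOPS}$ (cf.\ the computation around~\eqref{eq:Uraua}). The only remark worth adding is that the corollary's statement of the converse contains a typographical slip (the linear map should be $\mathcal{M}:\R^d\to\mathfrak{g}$, not $\mathcal{M}:\mathfrak{g}\to\mathfrak{h}$), which you implicitly and correctly repair.
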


Corollary~\ref{cor:LieOPSmorphismforQencoding} says that any quantum unitary encoding $U:\R^{d}\to\unitary(\C^{n})$ of the form
\begin{equation}
U(x)=e^{-i\mathcal{L}(x)}, 
\end{equation}
where $\mathcal{L}:\R^{d}\to\mathcal{B}(\C^{n})$ is a linear map such that $\mathcal{L}(x)$ is self-adjoint for each $x\in\R^{d}$, 
defines a morphism $\R^{d}\xrightarrow{U}\unitary(\C^n)$ in $\mathbf{LieOPS}$. 
It is important to note that the unitary quantum encoding $U$ does \emph{not} define a Lie group homomorphism in general. This is because $U(x+y)=e^{-i\mathcal{L}(x+y)}$ is \emph{not} in general equal to $U(x) U(y)=e^{-i\mathcal{L}(x)}e^{-i\mathcal{L}(y)}$. Because the group structure as a whole is not preserved, this type of structure falls outside the context of geometric quantum machine learning. Instead, what is preserved is the group structure \emph{when restricted} to any one-dimensional subspace in $\R^d$. 

The benefit of Theorem~\ref{thm:LieOPSmorphism} in the context of quantum machine learning is that it offers a \emph{finite-dimensional} model of quantum encodings and shows what structures are preserved in the process. In fact, one can combine this structure together with symmetry to further reduce the space of quantum encodings. This is done in an illustrative example in Section~\ref{sec:symmetry}, with more details in Appendix~\ref{app:GQML}.

\section{Geometric Quantum Machine Learning}
\label{app:GQML}

In this appendix, we go into more details for Example~\ref{ex:Meyermodified} from Section~\ref{sec:symmetry}. We begin by describing how Figure~\ref{fig:meyersymmetrymodified} is generated. We then prove exactly how the space of \emph{equivariant} unitary quantum encodings is reduced to an 8-dimensional space. 

The example involves a learning task that distinguishes between two classes, labeled as $-1$ and $+1$, of points within a dataset $X\subseteq \mathbb{R}^2$. The labeling is determined by a binary \define{classifier}, which is a function $c:\mathcal{X}\to\{-1,0,+1\}$, where the $0$ element is included to allow for an undecided class. The set of all possible data $\mathcal{X}$ has a symmetry determined by the relations
\begin{equation}
\label{eqn:symmetryKlein4groupclassifierc}
c(x_1,x_2)=c(x_2,x_1)=c(-x_1,-x_2).
\end{equation}
Such a binary classifier \define{factors through the quantum encoding} $\rho:\mathcal{X}\to\states(\mathcal{H})$ if and only if there exists an observable $O\in\mathcal{B}(\mathcal{H})$ such that the spectrum of $O$ is $\{-1,+1\}$ and 
\begin{equation}
\label{eqn:classicalclassifier}
c(x)=\begin{cases}-1&\mbox{ if $y(x)<0$}\\0&\mbox{ if $y(x)=0$} \\ +1 &\mbox{ if $y(x)>0$}, \end{cases}
\end{equation}
where $y:\mathcal{X}\to\R$ is the function defined by 
\begin{equation}
\label{eqn:expectvaluexgeneral}
y(x)=
\Tr\big[\rho(x) O\big],
\end{equation}
which is the expectation value of $O$ in the state $\rho(x)$.
The functional $y$ is called the \define{quantum classifier} and the set $y^{-1}(0)\subseteq\mathcal{X}$ is called the \define{decision boundary} of the classifier~\footnote{The convex-linear functional sending states $\sigma\in\states(\mathcal{H})$ to $\Tr[\sigma O]$ defines a hyperplane $\{\sigma\in\states(\mathcal{H})\;:\; \Tr[\sigma O]=0\}$ in the space of states, which is one of the reasons it is often used for classification tasks in QML~\cite{schuld21}.}. The quantum classifier is \define{invariant} if and only if 
\begin{equation}
\label{eqn:symmetryKlein4groupclassifier}
y(x_1,x_2)=y(x_2,x_1)=y(-x_1,-x_2)
\end{equation}
for all $x_1,x_2\in\R$. Note that invariance of the quantum classifier~\eqref{eqn:expectvaluexgeneral} implies invariance of the classifier~\eqref{eqn:classicalclassifier}, i.e.,~\eqref{eqn:symmetryKlein4groupclassifierc} holds. 

Now let $\ket{\psi_{0}}\in\mathcal{H}=\C^{4}$ be the initial state
\begin{equation}
\label{eqn:psi0symmetrytoyexampleapp}
\ket{\psi_{0}}=\sqrt{p}\ket{+,+}-\sqrt{1-p}\ket{-,-}
\end{equation}
with $p=0.99$, let $O$ be the observable 
\begin{equation}
\label{eqn:observablesymmetryXX}
O=X\otimes X,
\end{equation}
and let $U:\mathcal{X}\to\unitary(\mathcal{H})$ be the quantum unitary encoding as in~\eqref{eqn:embeddingUsymmetryexample}. In this case, the quantum classifier can be computed explicitly as 
\begin{equation}
y(x_1,x_2)=\cos(x_1)\sin(x_2)+2\sqrt{p(1-p)}\sin(x_1)\sin(x_2).
\end{equation}
The orange region in Figure~\ref{fig:meyersymmetrymodified} consists of the points $(x_1,x_2)$ such that $y(x_1,x_2)<0$, while the blue region consists of the points $(x_1,x_2)$ such that $y(x_1,x_2)>0$. The boundary between the two regions consists of the points $(x_1,x_2)$ such that $y(x_1,x_2)=0$, which is the decision boundary.

Having shown how Figure~\ref{fig:meyersymmetrymodified} is obtained, we next analyze how equivariance reduces the space of possible quantum encodings. In Example~\ref{ex:Meyermodified}, we only looked at \emph{one} such encoding $U$ based on Equation~\eqref{eqn:embeddingUsymmetryexample}.  Meanwhile, a large set of unitary quantum encodings $U:\R^{2}\to\unitary(\C^{4})$ are of the form $U(x)= e^{-i \mathcal{L}(x)}$ for some linear transformation $\mathcal{L}:\R^{2}\to\mathfrak{su}(4)$ as in~\eqref{eqn:unitary4by4example}. Such a linear transformation $\mathcal{L}$ is uniquely determined by the value of $\mathcal{L}$ on the basis elements $e_{1},e_{2}$ of $\R^2$. Namely, $L_{1}:=\mathcal{L}(e_1)$ and $L_{2}:=\mathcal{L}(e_2)$. The fact that $L_{1}$ is a traceless Hermitian matrix says that $L_{1}$ is of the form
\begin{equation}
L_{1}=\begin{bmatrix}a_{11}&a_{12}&a_{13}&a_{14}\\\overline{a_{12}}&a_{22}&a_{23}&a_{24}\\\overline{a_{13}}&\overline{a_{23}}&a_{33}&a_{34}\\\overline{a_{14}}&\overline{a_{24}}&\overline{a_{34}}&a_{44}\end{bmatrix},
\end{equation}
where $a_{11},a_{22},a_{33},a_{44}\in\R$, $a_{12},a_{13},a_{14},a_{23},a_{24},a_{34}\in\C$, and $a_{11}+a_{22}+a_{33}+a_{44}=0$, and similarly for the matrix $L_{2}$. This parametrization leads to a $2(4+2(6)-1)=30$-dimensional vector space of unitary quantum encodings $U:\R^{2}\to\unitary(\C^{4})$ that are of the form $U(x)= e^{-i \mathcal{L}(x)}$ for some linear transformation $\mathcal{L}:\R^{2}\to\mathfrak{su}(4)$. 

The equivariance constraint $L_{2}=\SWAP L_{1} \SWAP$ from Equation~\eqref{eq:GMLexconstraint} says that $L_{2}$ is uniquely determined by $L_{1}$ thereby reducing the dimension of the space of unitary quantum encodings to $15$. Combining this with the equivariance constraint $\{X\otimes X,L_{1}\}=0$ from Equation~\eqref{eq:GMLexconstraint} yields
\begin{equation}
\begin{bmatrix}\overline{a_{14}}&\overline{a_{24}}&\overline{a_{34}}&a_{44}\\\overline{a_{13}}&\overline{a_{23}}&a_{33}&a_{34}\\\overline{a_{12}}&a_{22}&a_{23}&a_{24}\\a_{11}&a_{12}&a_{13}&a_{14}
\end{bmatrix}
=-\begin{bmatrix}a_{14}&a_{13}&a_{12}&a_{11}\\a_{24}&a_{23}&a_{22}&\overline{a_{12}}\\a_{34}&a_{33}&\overline{a_{23}}&\overline{a_{13}}\\a_{44}&\overline{a_{34}}&\overline{a_{24}}&\overline{a_{14}}\end{bmatrix}.
\end{equation}
Therefore, $L_{1}$ is of the form 
\begin{equation}
L_{1}=\begin{bmatrix}a_{11}&a_{12}&a_{13}&a_{14}\\\overline{a_{12}}&a_{22}&a_{23}&-\overline{a_{13}}\\\overline{a_{13}}&-a_{23}&-a_{22}&-\overline{a_{12}}\\-a_{14}&-a_{13}&-a_{12}&-a_{11}\end{bmatrix},
\end{equation}
where 
\begin{equation}
a_{12},a_{13}\in\C,\;\; 
a_{14},a_{23}\in i\R,\;\; 
a_{11},a_{22}\in\R.
\end{equation}
The set of such $L_{1}$ satisfying such conditions is an $8$-dimensional real vector space. One can explicitly check that the matrices 
\begin{equation}
\begin{aligned}[c]
&Z\otimes\mathds{1}_{2},\;\;  & \mathds{1}_{2}\otimes Z,\;\; \\[5pt]
&Y\otimes\mathds{1}_{2},
&\mathds{1}_{2}\otimes Y,\;\;
\end{aligned}
\qquad 
\begin{aligned}[c]
&X\otimes Z,\;\;
&Z\otimes X,\;\; \\[5pt]
&X\otimes Y,\;\; 
&Y\otimes X,\;\;
\end{aligned}
\end{equation}
are of this form. Since these matrices are linearly independent, this set of matrices forms a basis of this $8$-dimensional subspace.

\section{Classical linear metric learning and the Mahalanobis metric}
\label{app:CLMLMM}

Earlier, we saw that quantum metric learning can be described within a category of metric spaces and distance-preserving functions (embeddings). We can focus the discussion on metric spaces that have underlying sets $\R^{n}$ and the functions between them are linear. This leads to metrics known as Mahalanobis distance metrics. 

In more detail, let $\mathcal{X}=\R^{n}$ and equip it with a \define{Mahalanobis distance metric}, where $(\mathcal{Y}=\R^{m},d_{\mathcal{Y}})$ is a Euclidean space (so that $d_{\mathcal{Y}}$ is the Euclidean distance) and $\R^{n}\xrightarrow{\varphi}\R^{m}$ is an injective linear map, described by some $m\times n$ matrix $V$ with linearly independent columns~\cite{Kulis13,KaBi19}, in which case 
\begin{equation}
d_{\mathcal{X}}(x_1,x_2)=\sqrt{(x_1-x_2)^{T}(V^{T}V)(x_1-x_2)},
\end{equation}
where $V^{T}$ denotes the transpose of $V$.
Although this example is less relevant for \emph{quantum} metric learning, it is worth spending a few moments to illustrate the categorical structure associated with this in the context of \emph{classical} metric linear learning. 

Let $\mathbf{Mahal}$ be the category whose objects are pairs $(\R^{m},A)$, with $m\in\N$ and $A$ a positive-definite $m\times m$ matrix. A morphism in $\mathbf{Mahal}$ from $(\R^{m},A)$ to $(\R^{n},B)$ is an $n\times m$ matrix $V$ with trivial kernel such that $V^{T}BV=A$. The trivial kernel condition means that the nullspace of $A$ is $0$~\cite{Strang2022}, i.e., the matrix $A$ defines an injective linear transformation from $\R^{m}$ to $\R^{n}$. The category $\mathbf{Mahal}$ is \emph{equivalent} \footnote{An \emph{equivalence} of categories is a technical term whose formal definition we omit. Intuitively, however, two categories are said to be \emph{equivalent} if they are effectively the same.} to the subcategory of $\mathbf{Met}^{\mathrm{emb}}$ whose objects are Euclidean spaces equipped with the Mahalanobis distance metric and whose morphisms are required to be linear. To see this, first define the Mahalanobis metric on $\R^{m}$ associated with $A$ by 
\begin{equation}
d_{A}(x_1,x_2):=\sqrt{(x_1-x_2)^{T}A(x_1-x_2)}.
\end{equation}
Then the assumption $V^{T}BV=A$ shows that 
\begin{align}
d_{A}(x_1,x_2)&=\sqrt{(x_1-x_2)^{T}V^{T}BV(x_1-x_2)} \nonumber\\
&=\sqrt{(Vx_1-Vx_2)^{T}B(Vx_1-Vx_2)} \nonumber\\
&=d_{B}(Vx_1,Vx_2),
\end{align}
which proves that $V$ is distance-preserving. 
In this setting, if one is given data in $\R^{m}$ and the goal is to learn a Mahalanobis metric on $\R^{m}$ that captures the similarity in the data, then the task can be described  
as finding a linear map $\R^{m}\xrightarrow{V}\R^{n}$, since pulling back the Euclidean metric from $\R^{n}$ will give a Mahalanobis metric on $\R^{m}$. 

\section{Metric learning and semi-metric spaces}
\label{app:metriclearningsms}

In this appendix, we prove Lemma~\ref{lem:inducedmetric} and then provide a more detailed description of the quantum metric learning example shown in Figure~\ref{fig:semimetric1ddataset} in Section~\ref{ssec:dist_metriclearning}.

\begin{proof}[Proof of Lemma~\ref{lem:inducedmetric}]
Without assuming anything about $f$, 
\begin{align}
d(x_2,x_1)
&=d_{\mathcal{Y}}\big(f(x_2),f(x_1)\big) \nonumber\\
&=d_{\mathcal{Y}}\big(f(x_1),f(x_2)\big) \nonumber\\
&=d_{\mathcal{X}}(x_1,x_2)
\end{align}
shows that $d_{\mathcal{X}}$ is symmetric. Furthermore, 
\begin{align}
d_{\mathcal{X}}(x_1,x_2)&=d_{\mathcal{Y}}\big(f(x_1),f(x_2)\big) \nonumber \\
&\ge d_{\mathcal{Y}}\big(f(x_1),y\big)+d_{\mathcal{Y}}\big(y,f(x_2)\big)
\end{align}
holds for all $y\in \mathcal{Y}$. In particular, for any $x_{3}\in \mathcal{X}$, 
\begin{align}
d_{\mathcal{X}}(x_1,x_2)&\ge d_{\mathcal{Y}}\big(f(x_1),f(x_3)\big)+d_{\mathcal{Y}}\big(f(x_3),f(x_2)\big) \nonumber \\
&=d_{\mathcal{X}}(x_1,x_3)+d_{\mathcal{X}}(x_3,x_2),
\end{align}
which shows that $d_{\mathcal{X}}$ satisfies the triangle inequality. 
It is also true that $d_{\mathcal{X}}(x,x)=0$ for all $x\in \mathcal{X}$. 

Now, if $x_1,x_2\in \mathcal{X}$ satisfy
\begin{equation}
\label{eqn:metriczero}
0=d_{\mathcal{X}}(x_1,x_2)=d_{\mathcal{Y}}\big(f(x_1),f(x_2)\big), 
\end{equation}
then $f(x_1)=f(x_2)$. Hence, $f$ is one-to-one if and only if $d_{\mathcal{X}}(x_1,x_2)=0$ implies $x_1=x_2$ for all $x_1,x_2$ that satisfy~\eqref{eqn:metriczero}. 
\end{proof}

The goal, first outlined in Ref.~\cite{lloyd20}, is to classify a certain collection of data points on the interval $\mathcal{X}=[-2,2]$ into two classes. Although the given data $X\subset\mathcal{X}$ shown in Figure~\ref{fig:semimetric1ddataset} are provided as a subset of the Euclidean space $\R$, the data cannot have the induced metric from its embedding in Euclidean space because they are not linearly separable. Therefore, in order to arrive at a metric that more accurately reflects the similarity between points from one class and dissimilarity between points from different classes, we may apply a variation of techniques used in nonlinear metric learning~\cite{Bellet2015metric}. Ref.~\cite{lloyd20} takes the perspective to map these data points into the space of quantum states equipped with the Hilbert--Schmidt metric with the hope of more accurately reflecting the appropriate classification. 

Let $\{A,B\}$ be a partition of $X$, where the elements of $A$ are labeled as class $+1$ and the elements of $B$ are labeled as class $-1$. The associated classifier $c:\mathcal{X}\to\{-1,0,1\}$ satisfies 
\begin{equation}
\label{eqn:classifiermetriclearningex}
c(x)=\begin{cases}+1&\mbox{ if $x\in A$}\\ -1 &\mbox{ if $x\in B$}.\end{cases}
\end{equation}
Now, since the space of all states on $\mathcal{H}$ is a convex space, set
\begin{equation}
\label{eqn:rhoArhoB}
\rho_{A}:=\frac{1}{\# A}\sum_{a\in A}\rho(a)
\quad\text{ and }\quad
\rho_{B}:=\frac{1}{\# B}\sum_{b\in B}\rho(b),
\end{equation}
where $\#A$ and $\#B$ denote the cardinalities of $A$ and $B$, respectively. These density matrices are taken to be the empirical centroid density matrices for our classification task. In this case, we wish to \emph{maximize} the Hilbert--Schmidt distance between these two density matrices (cf.\ Definition~\ref{defn:quantumdistances}). Equivalently, we wish to \emph{minimize} the cost function
\begin{equation}
\label{eq:costfunctionLloyd}
C(\rho_{A},\rho_{B})=1-\frac{1}{2}d_{\mathrm{HS}}(\rho_A,\rho_B).
\end{equation} 
Note that finding an embedding that maximizes this distance is not a trivial task because the space of encodings $\mathcal{X}\to\states(\mathcal{H})$ is infinite-dimensional (in the sense that it defines an infinite-dimensional smooth space of paths~\cite{AdamsLoopspaces78,Mu00,PrSe1986,Br93,Pa15}). To make this maximization procedure more approachable, we look at a \emph{parameterized} family of encodings  
$\rho:\Theta\times\mathcal{X}\to\states(\mathcal{H})$, which are referred to as \emph{data re-uploading models}~\cite{PerezSalinas2020datareuploading,schuld2021effect,Jerbi2023}. (We will not discuss data re-uploading models in more detail in this work.) More specifically, we take $\Theta=[0,2\pi)^{4}$ and write $\theta=(\theta_1,\theta_2,\theta_3,\theta_4)\in\Theta$ and $\rho_{\theta}(x):=\rho(\theta,x)$. 
Moreover, we factor $\rho$ through a parameterized unitary quantum encoding $U:\Theta\times\mathcal{X}\to\unitary(\mathcal{H})$, where 
\begin{equation}
U_{\theta}(x)=\left(\prod_{n=1}^{4}
R_{X}(x)R_{Y}(\theta_n)\right)R_{X}(x),
\end{equation}
which is depicted in Figure~\ref{fig:semimetric1ddataset} and 
where our ordering convention on products of operators is 
\begin{equation}
\prod_{n=1}^{N}A_{n}=A_{N}\cdots A_{1}.
\end{equation}
Combining the unitary encoding with the ground state gives the quantum state encoding
\begin{equation}
\label{eq:rhothetaxLloyd}
\rho_{\theta}(x)=\ket{x}\bra{x},
\end{equation}
where 
\begin{equation}
\ket{x}=U_{\theta}(x) \ket{0}.
\end{equation}
Set 
\begin{equation}
O:=\rho_{A}-\rho_{B},
\end{equation}
which is an observable that depends on $\theta$ (since $\rho_A$ and $\rho_B$ depend on $\theta$), and define the quantum classifier $y:\mathcal{X}\to\R$ to be
\begin{equation}
\label{eq:quantummodelLloyd}
y(x)=\Tr\big[\rho(x)O\big].
\end{equation}

\begin{lemma}
\label{lem:oppositedensitymatrices}
Let $\rho_A$ and $\rho_B$ be two distinct qubit density matrices. 
Then there exists a $\lambda\in(0,1]$ such that the eigenvalues of $\rho_{A}-\rho_{B}$ are $\{-\lambda,+\lambda\}$. 
\end{lemma}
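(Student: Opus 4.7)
The plan is to exploit the fact that the difference of two density matrices on any Hilbert space is always traceless and Hermitian, and then use the qubit-specific Bloch representation to control the magnitude.

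First, I would observe that $\rho_A - \rho_B$ is a Hermitian operator on $\C^2$, being the difference of two Hermitian operators, and that it is traceless since $\Tr(\rho_A - \rho_B) = \Tr(\rho_A) - \Tr(\rho_B) = 1 - 1 = 0$. Any Hermitian $2\times 2$ matrix has real eigenvalues $\mu_1, \mu_2$, and the traceless condition forces $\mu_1 + \mu_2 = 0$, hence $\mu_2 = -\mu_1$. Writing $\lambda := |\mu_1| \geq 0$, the spectrum of $\rho_A - \rho_B$ is $\{-\lambda, +\lambda\}$. The distinctness hypothesis $\rho_A \neq \rho_B$ means $\rho_A - \rho_B \neq 0$, which rules out $\lambda = 0$, so $\lambda > 0$.

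To show $\lambda \leq 1$, I would invoke the Bloch ball parametrization of qubit states: every $\rho \in \states(\C^2)$ can be written uniquely as
\begin{equation}
\rho = \tfrac{1}{2}\big(\mathds{1}_2 + \vec{r}\cdot\vec{\sigma}\big),
\end{equation}
where $\vec{\sigma} = (X,Y,Z)$ and $\vec{r} \in \R^3$ satisfies $\lVert \vec{r}\rVert \leq 1$. Applying this to $\rho_A$ and $\rho_B$ with associated Bloch vectors $\vec{r}_A, \vec{r}_B$, the identity components cancel and
\begin{equation}
\rho_A - \rho_B = \tfrac{1}{2}(\vec{r}_A - \vec{r}_B)\cdot\vec{\sigma}.
\end{equation}
Since $(\vec{v}\cdot\vec{\sigma})^2 = \lVert\vec{v}\rVert^2 \mathds{1}_2$, the eigenvalues of $\vec{v}\cdot\vec{\sigma}$ are $\pm\lVert\vec{v}\rVert$, so the eigenvalues of $\rho_A - \rho_B$ are $\pm \tfrac{1}{2}\lVert\vec{r}_A - \vec{r}_B\rVert$. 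Hence $\lambda = \tfrac{1}{2}\lVert\vec{r}_A - \vec{r}_B\rVert$, and by the triangle inequality $\lVert\vec{r}_A - \vec{r}_B\rVert \leq \lVert\vec{r}_A\rVert + \lVert\vec{r}_B\rVert \leq 2$, giving $\lambda \leq 1$.

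The proof is essentially a computation, and I do not anticipate any significant obstacle. The only subtle point worth double-checking is that the upper bound $\lambda \leq 1$ uses the qubit-specific fact that Bloch vectors have norm at most $1$; in higher dimensions the analogous statement would be more involved (and the bound $\lambda \leq 1$ would still hold for the operator norm of $\rho_A - \rho_B$ by a different argument using $0 \leq \rho_A, \rho_B \leq \mathds{1}$, but the clean Bloch picture is specific to qubits and meshes well with the metric-learning visualization in Figure~\ref{fig:semimetric1ddataset}).
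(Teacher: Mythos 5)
Your proof is correct and its core is the same as the paper's: $\rho_A-\rho_B$ is Hermitian and traceless, so its two eigenvalues sum to zero and are therefore $\pm\lambda$, and distinctness forces $\lambda>0$. The one place you go beyond the paper is the bound $\lambda\le 1$: the paper's proof simply sets $\lambda=|\lambda_1|$ and declares the claim proved without verifying that $\lambda\le 1$, whereas your Bloch-ball computation $\lambda=\tfrac{1}{2}\lVert\vec{r}_A-\vec{r}_B\rVert\le\tfrac{1}{2}\left(\lVert\vec{r}_A\rVert+\lVert\vec{r}_B\rVert\right)\le 1$ actually establishes it, so your argument is, if anything, more complete than the one in the paper.
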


\begin{proof}
Since $\rho_{A}$ and $\rho_{B}$ are $2\times2$ self-adjoint matrices, let $\lambda_{1}$ and $\lambda_{2}$ denote the two eigenvalues of $\rho_{A}-\rho_{B}$. 
Since $\rho_{A}\ne\rho_{B}$, at least one of the two eigenvalues cannot be $0$ because $\rho_{A}-\rho_{B}$ is self-adjoint. 
Since 
\begin{equation}
\Tr[\rho_{A}-\rho_{B}]=\Tr[\rho_{A}]-\Tr[\rho_{B}]=1-1=0,
\end{equation}
the sum of the two eigenvalues of $\rho_{A}-\rho_{B}$ must vanish, i.e., 
$\lambda_{1}+\lambda_{2}=0$. In other words, $\lambda_{2}=-\lambda_{1}$. Setting $\lambda=|\lambda_{1}|$ proves the claim. 
\end{proof}

\begin{proposition}
\label{prop:Lloydclassifier}
Let $X\subset\mathcal{X}$ be a finite training data set together with a partition $A\cup B=X$ with elements of $A$ and $B$ labeled as class $1$ and $-1$, respectively. Let $\rho:\mathcal{X}\to\states(\mathcal{H})$ be any quantum encoding such that $\rho_A$ and $\rho_B$, as given by Eqn.~\eqref{eqn:rhoArhoB}, are distinct density matrices. Set $O=\rho_A-\rho_B$ and let $\{-\lambda,+\lambda\}$ denote the set of eigenvalues of $O$, with $\lambda>0$. Finally, let $c$ be the classifier in~\eqref{eqn:classifiermetriclearningex} and let $y:\mathcal{X}\to\R$ denote the associated quantum classifier as in~\eqref{eq:quantummodelLloyd}. Then
\begin{equation}
\mathrm{sgn}\big(y(x)\big)=c(x)
\end{equation}
for all $x\in X$, where $\mathrm{sgn}:(-\infty,0)\cup(0,\infty)\to\{-1,+1\}$ denotes the sign function $\mathrm{sgn}(x)=\frac{x}{|x|}$. 
\end{proposition}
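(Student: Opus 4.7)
My plan is to use the stated two-eigenvalue hypothesis on $O$ (which is guaranteed in the qubit case by Lemma~\ref{lem:oppositedensitymatrices}) to spectrally decompose $O$, rewrite $y(x)$ in a form whose sign is transparent, and then match $\mathrm{sgn}(y(x))$ with $c(x)$ via the half-space structure on $\states(\mathcal{H})$ cut out by the spectral projector $P_{+}$.

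The first step is a spectral rewriting. Because $O$ is self-adjoint with spectrum $\{-\lambda,+\lambda\}$ and $\lambda>0$, the spectral theorem gives orthogonal eigenprojectors $P_{\pm}$ satisfying $P_{+}+P_{-}=\mathds{1}_{\mathcal{H}}$ and $O=\lambda(P_{+}-P_{-})$. Substituting into~\eqref{eq:quantummodelLloyd} and using $\Tr[\rho(x)]=1$ gives
\begin{equation*}
y(x)=\lambda\bigl(2\Tr[\rho(x)P_{+}]-1\bigr),
\end{equation*}
so $\mathrm{sgn}(y(x))=\mathrm{sgn}\bigl(\Tr[\rho(x)P_{+}]-\tfrac{1}{2}\bigr)$ since $\lambda>0$. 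The sign of $y(x)$ is thus determined entirely by which side of the hyperplane $\{\sigma\in\states(\mathcal{H}):\Tr[\sigma P_{+}]=\tfrac{1}{2}\}$ the encoded state $\rho(x)$ lies on.

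The second step is the averaged separation. Using the averaging identity $\sum_{a\in A}\rho(a)=\#A\cdot\rho_{A}$ and its counterpart for $B$ from~\eqref{eqn:rhoArhoB}, together with the defining relation $\rho_{A}-\rho_{B}=\lambda(P_{+}-P_{-})$, contracting with $P_{+}$ yields $\Tr[\rho_{A}P_{+}]-\Tr[\rho_{B}P_{+}]=\lambda\Tr[P_{+}]>0$, so the class centroids lie on opposite sides of the separating hyperplane. Equivalently, the class-averages of $y$ satisfy $\frac{1}{\# A}\sum_{a\in A}y(a)-\frac{1}{\# B}\sum_{b\in B}y(b)=\Tr[O^{2}]=d_{HS}(\rho_{A},\rho_{B})^{2}>0$, matching the cost gap in~\eqref{eq:costfunctionLloyd}.

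The main obstacle is promoting this averaged separation to the pointwise claim for every $x\in X$. The stated hypothesis $\rho_{A}\neq\rho_{B}$ controls only the \emph{averages} of $y$ over each class, and a priori a point $x\in A$ can have $y(x)<0$ if $\rho(x)$ sits on the wrong side of $\{\Tr[\,\cdot\, P_{+}]=\tfrac{1}{2}\}$. Closing this gap requires the additional, implicit assumption that the encoding $\rho$---built from the data-reuploading circuit of Figure~\ref{fig:semimetric1ddataset}(d) trained by minimizing~\eqref{eq:costfunctionLloyd}---places every $\rho(a)$ with $a\in A$ in the half-space $\{\sigma:\Tr[\sigma P_{+}]>\tfrac{1}{2}\}$ and every $\rho(b)$ with $b\in B$ in its complement. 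Once this training-dependent separating-hyperplane hypothesis is recorded, the conclusion $\mathrm{sgn}(y(x))=c(x)$ follows in a single line from the identity of Step~1.
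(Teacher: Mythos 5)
Your algebra in Steps 1 and 2 is correct, but the first thing to say is that the paper offers \emph{no} proof of Proposition~\ref{prop:Lloydclassifier}: it is stated and then immediately interpreted, so there is nothing to compare your argument against. More importantly, the obstacle you flag at the end is not a gap in your proof to be patched by an ``implicit'' hypothesis --- it is a counterexample-grade hole in the proposition itself. The stated hypotheses control only the class \emph{averages} of $y$, via your identity $\tfrac{1}{\# A}\sum_{a\in A}y(a)-\tfrac{1}{\# B}\sum_{b\in B}y(b)=\Tr[O^{2}]>0$, and nothing forces an individual training point onto the correct side. Concretely, on one qubit take $A=\{a_1,a_2,a_3\}$ with $\rho(a_1)=\ket{0}\bra{0}$, $\rho(a_2)=\rho(a_3)=\ket{1}\bra{1}$, and $B=\{b\}$ with $\rho(b)=\ket{+}\bra{+}$. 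Then $\rho_A=\tfrac{1}{3}\ket{0}\bra{0}+\tfrac{2}{3}\ket{1}\bra{1}\ne\rho_B$, the spectrum of $O=\rho_A-\rho_B$ is $\{\pm\sqrt{10}/6\}$, yet $y(a_1)=\bra{0}\rho_A\ket{0}-\bra{0}\rho_B\ket{0}=\tfrac{1}{3}-\tfrac{1}{2}<0$ while $c(a_1)=+1$. Every hypothesis is satisfied and the conclusion fails, so the separating assumption you propose is a genuinely missing hypothesis (and once added, the conclusion is a one-line tautology, which tells you the proposition needs to be reformulated rather than reproved).

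There is also a smaller error in your Step 2: from $\Tr[\rho_AP_+]-\Tr[\rho_BP_+]=\lambda\Tr[P_+]>0$ you infer that ``the class centroids lie on opposite sides of the separating hyperplane'' $\{\sigma:\Tr[\sigma P_+]=\tfrac{1}{2}\}$. That does not follow; the inequality only orders the two numbers, and both can land on the same side of $\tfrac{1}{2}$. For example, with Bloch vectors $\vec a=(0.1,0,0)$ for $\rho_A$ and $\vec b=(0.9,0,0)$ for $\rho_B$ one gets $O=-0.4X$, $P_+=\ket{-}\bra{-}$, and $\Tr[\rho_AP_+]=0.45$, $\Tr[\rho_BP_+]=0.05$, both below $\tfrac{1}{2}$; here even the centroid satisfies $\Tr[\rho_AO]=\tfrac{1}{2}\vec a\cdot(\vec a-\vec b)<0$, so the centroid-level version of $\mathrm{sgn}(y)=c$ fails as well. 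The only statement that survives on the given hypotheses is exactly your averaged inequality $\Tr[\rho_AO]-\Tr[\rho_BO]=d_{HS}(\rho_A,\rho_B)^2>0$, i.e., the mean score over class $A$ strictly exceeds the mean score over class $B$; if you want a correct and nontrivial proposition, that is the statement to prove.
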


Proposition~\ref{prop:Lloydclassifier} shows that the quantum classifier agrees with the \emph{fidelity classifier} used in Ref.~\cite{lloyd20} upon a rescaling. Meanwhile, the \emph{Helstr{\o}m classifier} of Ref.~\cite{lloyd20} uses the observable $\Pi_{+}-\Pi_{-}$, where $\Pi_{+}$ and $\Pi_{-}$ are the projections onto the positive and negative eigenspaces of $\rho_{A}-\rho_{B}$, respectively. 

Now, upon minimizing the cost function in~\eqref{eq:costfunctionLloyd}, we obtain specific values of $\theta_1,\theta_2,\theta_3,\theta_4$ for the parameterized encoding $\rho:\Theta\times\mathcal{X}\to\states(\C^{2})$. The associated encoding is not actually an injective map into the space of quantum states, as shown in Figure~\ref{fig:semimetric1ddataset}. This means that pulling back the metric gives a semi-metric, rather than a metric, on $\mathcal{X}$ (cf.\ Definition~\ref{defn:semimetricspace} and Lemma~\ref{lem:inducedsemimetric}). This semi-metric can be visualized in the same way as a metric, namely by illustrating the distance between points on a discretized grid from $\mathcal{X}\times \mathcal{X}$ (cf.\ Figure~\ref{fig:semimetric1ddataset} (c)). We note that the need for semi-metrics is not a phenomenon unique to quantum systems, as they already appear in linear metric learning, such as in dimensional reduction~\cite{Bellet2015metric}.

\section{Additional examples of functors and natural transformations}
\label{sec:nattransf}

This section contains an additional excursion into category theory that is relevant to some of the ideas touched upon in the main text. The mathematical details below lie somewhat outside of the summarized perspective in Section \ref{sec:lifting}, but we include the discussion here for the interested reader. And key to this discussion is the concept of a \emph{natural transformation}, which is essentially a mapping between functors. Historically, natural transformations played a pivotal role in the birth of category theory, which originally arose in the context of homology theory in algebraic topology~\cite{mac2013categories}.  
Although we did not make explicit use of natural transformations in the  main body of this work, it is interesting to point out that bit encoding can be described as a natural transformation that is closely related to a certain forgetful functor from $\mathbf{Vect}$ to $\mathbf{Set}$, which will be described momentarily. Moreover, the notion of $G$-equivariance can also be described as a natural transformation. This appendix is intended to go into details on these points in order to provide additional illustrations of categorical concepts through examples coming from quantum information theory and machine learning. In order to proceed, we will need to define two new concepts: natural transformations, of course, and that of composing functors. We present the latter first.

\begin{definition}
\label{defn:composefunctors}
Let $F:\mathbf{C}\to\mathbf{D}$ and $G:\mathbf{D}\to\mathbf{E}$ be two functors. The \define{composite} $G\circ F:\mathbf{C}\to\mathbf{E}$ is the functor that sends each object $\mathcal{X}$ in $\mathbf{C}$ to $G(F(\mathcal{X}))$ and sends each morphism $f:\mathcal{X}\to\mathcal{Y}$ in $\mathbf{C}$ to the morphism $G(F(f))$. 
\end{definition}

So, just as functions can be composed --- and more generally just as morphisms in a category can be composed --- functors can also be composed. This notion will appear in our discussion on bit encodings below. But first, we give the formal definition of a natural transformation.

\begin{definition}
Let $\mathbf{C}$ and $\mathbf{D}$ be categories, and let $F,G:\mathbf{C}\to\mathbf{D}$ be two functors. A \define{natural transformation} $\eta$ from $F$ to $G$, written $\eta:F\Rightarrow G$, associates to each object $\mathcal{X}$ in $\mathbf{C}$ a morphism $\eta_{\mathcal{X}}:F(\mathcal{X})\to G(\mathcal{X})$ in $\mathbf{D}$ such that the diagram
\begin{equation}
\label{eq:naturality}
\xy 0;/r.25pc/:
   (-12.5,7.5)*+{F(\mathcal{X})}="FX";
   (12.5,7.5)*+{F(\mathcal{Y})}="FY";
   (-12.5,-7.5)*+{G(\mathcal{X})}="GX";
   (12.5,-7.5)*+{G(\mathcal{Y})}="GY";
   {\ar"FX";"FY"^{F(f)}};
   {\ar"GX";"GY"_{G(f)}};
   {\ar"FX";"GX"_{\eta_{\mathcal{X}}}};
   {\ar"FY";"GY"^{\eta_{\mathcal{Y}}}};
\endxy
\end{equation}
in $\mathbf{D}$ commutes, i.e., $G(f)\circ\eta_{\mathcal{X}}=\eta_{\mathcal{Y}}\circ F(f)$, for every morphism $\mathcal{X}\xrightarrow{f}\mathcal{Y}$ in $\mathcal{C}$. The commutativity of diagram~\eqref{eq:naturality} is often referred to as \define{naturality}. 
\end{definition} 

The level of abstraction in this definition warrants several examples, and we focus on some relevant examples from quantum machine learning and quantum algorithms. Our first example arises from bit encoding, which is typically the first and most familiar encoding learned when studying quantum algorithms~\cite{DeutschJozsa92,NiCh11,BCRS19}. This example will involve both the concepts of  functor composition and a natural transformation. 

\begin{example}[Bit encoding]
\label{ex:bitencodingA}
There is a category $\mathbf{Vect}$ whose objects are complex vector spaces and whose morphisms are $\C$-linear transformations. 
Moreover, there is a familiar passage from $\mathbf{Set}$ to $\mathbf{Vect}$ that firstly associates to a set $\mathcal{X}$ the vector space $(\C_{\mathrm{fs}}^\mathcal{X},0,+,\cdot)$ generated by it, which will be defined in the next paragraph.
Secondly, this familiar passage also associates to a set-theoretic function $f:\mathcal{X}\to\mathcal{Y}$ a linear transformation $\C_{\mathrm{fs}}^{\mathcal{X}}\to\C_{\mathrm{fs}}^{\mathcal{Y}}$. This passage defines a functor $G:\mathbf{Set}\to\mathbf{Vect}$, and we will first describe its two ingredients one at a time. Afterwards, we will construct the natural transformation of bit encoding, which involves both $G$ and the forgetful functor $F:\mathbf{Vect}\to\mathbf{Set}$.

Mathematically, the vector space $(\C_{\mathrm{fs}}^\mathcal{X},0,+,\cdot)$ is the vector space of \emph{finitely-supported} complex-valued functions on $\mathcal{X}$. Namely, each vector $v$ in $\C_{\mathrm{fs}}^\mathcal{X}$ is by definition a function $v:\mathcal{X}\to\C$ such that $v(x)=0$ for all but finitely many $x\in\mathcal{X}$. (If the set $\mathcal{X}$ is finite, then every function will automatically be finitely-supported.) The zero vector $0$ is the function that assigns $0$ to every $x\in\mathcal{X}$. The sum of two functions $v,w\in\C_{\mathrm{fs}}^{\mathcal{X}}$ is the function $v+w$ whose value on $x\in\mathcal{X}$ is $(v+w)(x):=v(x)+w(x)$. Meanwhile, if $v$ is such a function and $\lambda\in\C$ is a scalar, then $\lambda \cdot v$, written as $\lambda v$ for short, is the function whose value on $x\in\mathcal{X}$ is $\lambda\, v(x)$.

The elements $x\in\mathcal{X}$ define a basis $\delta_{x}$ for $\C_{\mathrm{fs}}^{\mathcal{X}}$, where the function $\delta_{x}$ acts by $\delta_{x}(x')=\delta_{xx'}$ which is $1$ if $x'=x$ and $0$ if $x'\ne x$. 
If we were to use Dirac notation to express this vector, we could write $\ket{x}$ so that $\langle x' | x\rangle=\delta_{xx'}$, but that might be considered slightly abusive since we have not given $\C_{\mathrm{fs}}^{\mathcal{X}}$ the structure of a Hilbert space (although one can be defined in terms of our basis~\cite{Halmos1958}).
Moreover, we occasionally use the notation $\C_{\mathrm{fs}}^\mathcal{X}$ to refer to the vector space $(\C_{\mathrm{fs}}^\mathcal{X},0,+,\cdot)$, even though we should remember that the structure of a zero vector, vector addition, and scalar multiplication are implicitly assumed when referring to $\C_{\mathrm{fs}}^{\mathcal{X}}$ as a vector space. 

It is useful to view this from another perspective more familiar to the quantum information theorist. 
Let $\mathcal{X}=\Z_{2}^{n}=\{0,1\}^{n}$ be the set of all arrays $x=(x_0,x_1,\dots,x_{n-1})$ whose entries are either $0$ or $1$, so that we can think of $x$ as a binary representation of a number $\{0,1,2,\dots,2^{n}-1\}$. Then, the vector space  $\C^\mathcal{X}$ generated by $\mathcal{X}$ is precisely (i.e., naturally isomorphic to) 
\begin{equation}
\overbrace{\C^{2}\otimes\cdots\otimes\C^{2}}^{\text{$n$ times}}\cong\C^{2^n}, 
\end{equation}
which is the complex vector space of $n$ qubits. (There is no need to include the subscript $\mathrm{fs}$ in $\C_{\mathrm{fs}}^{\mathcal{X}}$ in this case because $\mathcal{X}$ is a finite set with $2^n$ elements.) 
From this perspective, each element $x=(x_0,x_1,\dots,x_{n-1})$ is associated with the vector 
\begin{equation}
\ket{x_0 x_1 \dots x_{n-1}}=\ket{x_0}\otimes\ket{x_1}\otimes\cdots\otimes\ket{x_{n-1}}.
\end{equation} 
This is what \emph{bit encoding} achieves (technically, bit encoding is more accurately described as a natural transformation, and we will make this more precise soon). 

Having described how the passage from $\mathbf{Set}$ to $\mathbf{Vect}$ associates to each set a vector space, let us next describe how we also obtain linear transformations from set-theoretic functions. Let $f:\mathcal{X}\to\mathcal{Y}$ be a function between sets. From this function, define the linear transformation $L_{f}:\C_{\mathrm{fs}}^{\mathcal{X}}\to\C_{\mathrm{fs}}^{\mathcal{Y}}$ as follows. By writing the basis elements of $\C_{\mathrm{fs}}^{\mathcal{X}}$ and $\C_{\mathrm{fs}}^{\mathcal{Y}}$ as $\ket{x}$ and $\ket{y}$ respectively, the linear transformation $L_{f}$ sends each basis vector $\ket{x}$ to $\ket{f(x)}$, which is another basis vector in $\C^{\mathcal{Y}}$. Since a linear transformation is uniquely determined by its action on a basis~\cite{Strang2022}, this defines the linear transformation $L_{f}$ on all of $\C_{\mathrm{fs}}^{\mathcal{X}}$. This technique of transforming $f:\mathcal{X}\to\mathcal{Y}$ to $L_{f}:\C_{\mathrm{fs}}^{\mathcal{X}}\to\C_{\mathrm{fs}}^{\mathcal{Y}}$ is often used in constructing quantum algorithms such as the Deutsch--Jozsa algorithm, which tests whether a function $f$ is constant or balanced~\cite{DeutschJozsa92,NiCh11,BCRS19}. 

So far, we have described a functor $G:\mathbf{Set}\to\mathbf{Vect}$ which takes a set $\mathcal{X}$ and constructs the vector space $(\C_{\mathrm{fs}}^{\mathcal{X}},0,+,\cdot)$ of complex-valued functions on $\C$. Now, there is \textit{also} a functor going the other way that takes a vector space $(V,0,+,\cdot)$ and forgets the vector space structure, leaving only the underlying set $V$. In more detail, there is  a functor $\mathbf{Vect}\to\mathbf{Set}$ that assigns to each vector space $V$ (technically, $(V,0,+,\cdot)$, where $0$ is the zero vector, $+$ is addition, and $\cdot$ is scalar multiplication) its underlying set $V$ of vectors, thus forgetting the linear structure (namely, the $0$ vector, vector addition $+$, and scalar multiplication $\cdot$). Moreover, this functor assigns every linear transformation to itself, since every linear transformation is a function. 

From these two functors, we can compose them (cf.\ Definition~\ref{defn:composefunctors}) in either order. One of these two ways of composing the functors will be used to exhibit bit encoding. Namely, taking a set $\mathcal{X}$, constructing the associated vector space $F(\mathcal{X})=(\C_{\mathrm{fs}}^{\mathcal{X}},0,+,\cdot)$, and then forgetting the vector space structure gives a \emph{set} $G(F(\mathcal{X}))=\C_{\mathrm{fs}}^{\mathcal{X}}$. 

We now have \emph{two} functors from $\mathbf{Set}$ to $\mathbf{Set}$. On the one hand, we have $G\circ F$, which we just described. On the other hand, we also have the functor $\id_{\mathbf{Set}}:\mathbf{Set}\to\mathbf{Set}$ that acts as the identity on all objects and morphisms. We can now realize bit encoding as the natural transformation $\eta:\id_{\mathbf{Set}}\Rightarrow G\circ F$ defined by sending a set $\mathcal{X}$ to a particular function $\eta_{\mathcal{X}}:\id_{\mathbf{Set}}(\mathcal{X})\to G(F(\mathcal{X}))$, which is a function of the form $\eta_{\mathcal{X}}:\mathcal{X}\to\C_{\mathrm{fs}}^{\mathcal{X}}$. The definition of this function is precisely the standard bit encoding, which sends an element $x\in\mathcal{X}$ to the vector $\eta_{\mathcal{X}}(x)=\ket{x}\in\C^{\mathcal{X}}$. So what does the naturality property~\eqref{eq:naturality} tell us about bit encoding? If $f:\mathcal{X}\to\mathcal{Y}$ is any (classical) function between sets, let $L_f:\C^{\mathcal{X}}\to\C^{\mathcal{Y}}$ denote associated linear transformation determined uniquely by how it acts on the basis $\{\ket{x}\}$. Naturality of $\eta$ then says that $\eta_{\mathcal{Y}}(f(x))=L_f(\eta_{\mathcal{X}}(x))$, which reads $\ket{f(x)}=L_{f}\ket{x}$, a completely \emph{natural} condition used in most quantum algorithms, and one which we in fact already used when defining the linear transformation $L_{f}$ by how it acts on a basis.
\end{example}

As our next example, one might have suspected that the very definition of an equivariant map as in Equation~\eqref{eq:Gsetmap} looks a lot like naturality. This is indeed the case after one views $G$-sets as functors and is analogous to how representations can be viewed as functors~\cite{Pa18a}. 

\begin{example}[$G$-equivariant maps revisited]
Let $G$ be a group and let $\mathbb{B}G$ be the category which consists of only a single object, denote it by $\bullet$. Moreover, the set of morphisms from $\bullet$ to itself in the category $\mathbb{B}G$ is defined to be the set $G$. The composition in $\mathbb{B}G$ is then taken to be the multiplication operation in $G$. One can check that $\mathbb{B}G$ is a category with these definitions. In this way, a group can be viewed as a category. A $G$-set can then be viewed as a functor $\alpha:\mathbb{B}G\to\mathbf{Set}$. Indeed, the unique object $\bullet$ in $\mathbb{B}G$ gets sent to some set, call it $\mathcal{X}$. Moreover, since $\alpha$ is a functor, it sends a morphism in $\mathbb{B}G$, which is an element $g$ of the group $G$, to some function $\alpha_{g}:\mathcal{X}\to\mathcal{X}$. Now, since the element $g$ is invertible, there exists a $g^{-1}$ such that $gg^{-1}=1_{G}=g^{-1}g$, where $1_{G}$ is the unit element of $G$. Hence, by functoriality of $\alpha$, we have that $\id_{\mathcal{X}}=\alpha_{e}=\alpha_{gg^{-1}}=\alpha_{g}\circ\alpha_{g^{-1}}$, and similarly $\id_{\mathcal{X}}=\alpha_{g^{-1}}\circ\alpha_{g}$, which proves that $\alpha_{g}$ is a bijection with inverse $\alpha_{g^{-1}}$. 

Now, consider \emph{two} such $G$-sets, viewed as functors $\alpha,\beta:\mathbb{B}G\to\mathbf{Set}$, where $\mathcal{X}:=\alpha(\bullet)$ and $\mathcal{Y}:=\beta(\bullet)$. A natural transformation $f:\alpha\Rightarrow\beta$ assigns to the only object $\bullet$ of $\mathbb{B}G$ a function $f:\mathcal{X}\to\mathcal{Y}$ (abusively denoted by the same letter). Naturality of $f$ as a natural transformation then gives exactly the $G$-equivariant condition~\eqref{eq:Gsetmap}. 
\end{example}

\bibliography{refs}
\end{document}